\begin{document}
\title{Distributed Transformations of Hamiltonian Shapes based on Line Moves}
%

%
\author{Abdullah Almethen \and
Othon Michail \and
Igor Potapov}
%

%
\institute{Department of Computer Science, University of Liverpool, Liverpool, UK\\
\email{\{A.Almethen, Othon.Michail, Potapov\}@liverpool.ac.uk}}
\maketitle              
\begin{abstract}
We consider a discrete system of $n$ simple indistinguishable devices, called \emph{agents}, forming a \emph{connected} shape $S_I$ on a two-dimensional square grid. Agents are equipped with a linear-strength mechanism, called a \emph{line move}, by which an agent can push a whole line of consecutive agents in one of the four directions in a single time-step. We study the problem of transforming an initial shape $S_I$ into a given target shape  $S_F$ via a finite sequence of line moves in a distributed model, where each agent can observe the states of nearby agents in a Moore neighbourhood. 
Our main contribution is the first distributed connectivity-preserving transformation that exploits line moves  within a total of $O(n \log_2 n)$ moves, which is asymptotically equivalent to that of the best-known centralised transformations. The algorithm solves the \emph{line formation problem} that allows agents to form a final straight line $S_L$, starting from any shape $ S_I $, whose \emph{associated graph} contains a Hamiltonian path.
\end{abstract}
\keywords{Line movement \and  Discrete transformations \and  Shape formation \and  Reconfigurable robotics \and  Programmable matter \and Distributed algorithms}
\section{Introduction}
The explosive growth of advanced technology over the last few decades has contributed significantly towards the development of a wide variety of distributed systems consisting of large collections of tiny robotic-units, known as \emph{monads}. These monads are able to move and communicate with each other by being equipped with microcontrollers, actuators and sensors. However, each monad is severely restricted and has limited computational capabilities, such as a constant memory and lack of global knowledge. Further, monads are typically homogeneous, anonymous and indistinguishable from each other. Through a simple set of rules and local actions, they collectively act as a single unit and carry out several complex tasks, such as transformations and explorations. 

In this context, scientists from different disciplines have made great efforts towards developing innovative, scalable and adaptive collective robotic systems. This vision has recently given rise to the area of programmable matter, first proposed by Toffoli and Margolus \cite{TM91} in 1991, referring to any kind of materials that can algorithmically change their physical properties, such as shape, colour, density and conductivity through transformations executed by an underlying program. This newborn area has been of growing interest lately both from a theoretical and a practical viewpoint.

One can categorise programmable matter systems into \emph{active} and \emph{passive}. Entities in the passive systems have no control over their movements. Instead, they move via interactions with the environment based on their own structural characteristics. Prominent examples of research on passive systems appear in the areas of population protocols \cite{AADFP06,MS16a,MS18}, DNA computing \cite{Ad94,BON96} and tile self-assembly \cite{Do12,RW00,Wi98}. On the other hand, the active systems allow computational entities to act and control their movements in order to accomplish a given task, which is our primary focus in this work. The most popular examples of active systems include metamorphic systems \cite{DSY04b,NGY00,WWA04}, swarm/mobile robotics \cite{CDP20,FPS12,PARK17,RCN14,SY10}, modular self-reconfigurable robotics \cite{ABD13,Fuk88,YSS07} and recent research on programmable matter \cite{DGPR16,DGRSS16}. Moreover, those robotic systems have received an increasing attention from the the engineering research community, and hence many solutions and frameworks have been produced for milli/micro-scale \cite{BG15,GKR10,KCL12} down to nanoscale systems \cite{DDL09,Ro06}.  

Shape transformations (sometimes called \emph{pattern formation}) can be seen as one of the most essential goals for almost every system among the vast variety of robotic systems including programmable matter and swarm robotic systems. In this work, we focus on a system of a two-dimensional square grid containing a collection of entities typically connected to each other and forming an initial connected shape $S_I$.
Each entity is equipped with a linear-strength mechanism that can push an entire line of consecutive entities one position in a single time-step in a given direction of a grid. The goal is to design an algorithm that can transform an initial shape $S_I$ into a given target shape $S_F$ through a chain of permissible moves and without losing the connectivity. That is, in each intermediate configuration we always want to guarantee that the graphs induced by the nodes occupied by the entities are connected. The connectivity-preservation is an important  assumption for  many practical applications, which usually require energy for data exchange as well as the implementation of various locomotion mechanisms.

\subsection{Related Work}
Many models of centralised or distributed coordination have been studied in the context of shape transformation problems. The assumed mechanisms  in those models can significantly influence the efficiency and feasibility of shape transformations. For example, the authors of \cite{AKT21,DP04,DSY04a,DSY04b,MSS19} consider mechanisms called sliding and rotation by which an agent can move and turn over neighbours through empty space. Under these models of individual movements, Dumitrescu and Pach \cite{DP04} and Michail \emph{et al.} \cite{MSS19} present universal transformations for any pair of connected shapes $(S_I,S_F)$ of the same size to each other. By restricting to rotation only, the authors in \cite{MSS19} proved that the decision problem of transformability is in $\mathbf{P}$; however, with a constant number of extra seed nodes connectivity preserving transformation can be completed with
 $\Omega(n^2)$  moves  \cite{MSS19}.

The alternative less costly reconfiguration solutions can be designed  by employing some parallelism,where multiple movements can occur at the same time, see theoretical studies in \cite{DDG18,DiLuna2019} and more practical implementation in \cite{RCN14}. Moreover, it has been shown that there exists a universal transformation with rotation and sliding that converts any pair of connected shapes to each other within $O(n)$ parallel moves in the worst case \cite{MSS19}. Also fast reconfiguration might be achieved by exploiting actuation mechanisms, where a single agent is now equipped with more strength to move many entities in parallel in a single time-step. A prominent example is the linear-strength model of Aloupis \emph{et al.} \cite{ABD13,ACD08}, where an entity is equipped with arms giving it the ability to extend/extract a neighbour, a set of individuals or the whole configuration in a single operation. Another elegant approach by Woods \emph{et al.} \cite{WCG13} studied another linear-strength mechanism by which an entity can drag a chain of entities parallel to one of the axes directions.

A more recent study along this direction is shown in \cite{AMP20}, and  introduces the \textit{line-pushing} model. In this model, an individual entity can push the whole line of consecutive entities one position in a given direction in a single time-step. As we shall explain, this model generalises some existing constant-strength models with a special focus on exploiting its parallel power for fast and more general transformations. Apart from the purely theoretical benefit of exploring fast reconfigurations, this model also provides a practical framework for more efficient reconfigurations in real systems. For example, self-organising robots could be reconfiguring into multiple shapes in order to pass through canals, bridges or corridors in a mine. In another domain, individual robots could be containers equipped with motors that can push an entire row to manage space in large warehouses. Another future application could be a system of very tiny particles injected into a human body and transforming into several shapes in order to efficiently traverse through the veins and capillaries and treat infected cells. 

This model is capable of simulating some constant-strength models. For example, it can simulate the sliding and rotation model \cite{DP04,MSS19} with an increase in the worst-case running time only by a factor of 2. This implies that all universality and reversibility properties of individual-move transformations still hold true in this model. Also, the model allows the diagonal connections on the grid. Several sub-quadratic time centralised transformations have been proposed, including an $O(n \sqrt{n} )$-time universal transformation that preserves the connectivity of the shape during its course \cite{AMP2020}. By allowing transformations to disconnect the shape during their course, there exists a centralised universal transformation that completes within $O(n \log n )$ time.

Another recent related set of models studied in \cite{CDP20,FGHKSS21,GHK19} consider a single robot which moves over a static shape consisting of tiles and the goal is for the robot to transform the shape by carrying one tile at a time. In those systems, the single robot which controls and carries out the transformation is typically modelled as a finite automaton. Those models can be viewed as partially centralised as on one hand they have a unique controller but on the other hand that controller is operating locally and suffering from a lack of global information.

\subsection{Our Contribution}
\label{sec:Contribution}

In this work, our main objective is to give the first distributed transformations for programmable matter systems implementing the linear-strength mechanism of the model of line moves. All existing transformations for this model are centralised, thus, even though they reveal the underlying transformation complexities, they are not directly applicable to real programmable matter systems. Our goal is to develop distributed transformations that, if possible, will preserve all the good properties of the corresponding centralised solutions. These include the \emph{move complexity} (i.e., the total number of line moves) of the transformations and their ability to preserve the connectivity of the shape throughout their course.

However, there are considerable technical challenges that one must deal with in order to develop such a distributed solution. As will become evident, the lack of global knowledge of the individual entities and the condition of preserving connectivity greatly complicate the transformation, even when restricted to special families of shapes. Timing is an essential issue as the line needs to know when to start/stop pushing. When moving or turning, all agents of the line must follow the same route, ensuring that no one is being pushed off. There is an additional difficulty due to the fact that agents do not automatically know whether they have been pushed (but it might be possible to infer this through communication and/or local observation).

Consider a discrete system of $n$ simple indistinguishable devices, called \emph{agents}, forming a connected shape $S_I$ on a two-dimensional square grid. Agents act as finite-state automata (i.e., they have constant memory) that can observe the states of nearby agents in a Moore neighbourhood (i.e., the eight cells surrounding an agent on the square gird).  They operate in synchronised Look-Compute-Move (LCM) cycles on the grid. All communication is local, and actuation is based on this local information as well as the agent's internal state.

Let us consider a very simple distributed transformation of a diagonal line shape $S_D$ into a straight line $S_L$, $|S_D| = |S_L| = n$, in which all agents execute the same procedure in parallel synchronous rounds. In general, the diagonal appears to be a hard instance because any parallelism related to line moves that might potentially be exploited does not come for free. Initially, all agents are occupying the consecutive diagonal cells on the grid  $(x_1,y_1), (x_1+1, y_1+1), \ldots ,(x_1+n-1, y_1+n-1)$. In each round, an agent $p_i = (x,y)$ moves one step down if $(x-1,y-1)$ is occupied, otherwise it stays still in its current cell. After $O(n)$ rounds, all agents form $S_L$ within a total number of $1 + 2 + \ldots +n = O(n^2)$ moves, while preserving connectivity during the transformation (throughout, connectivity includes horizontal, vertical, and diagonal adjacency). See Figure \ref{fig:Simulation}.

\begin{figure}[hbt!] 
	\centering 
	\includegraphics[scale=0.6]{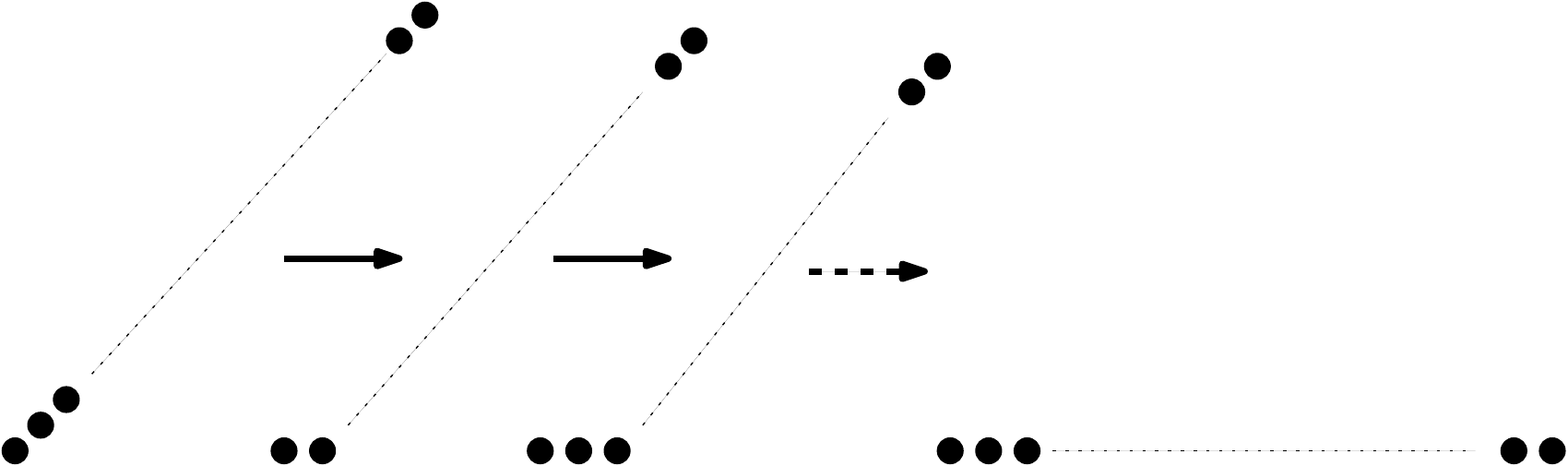}
	\caption{A simulation of the simple procedure. From left to right, rounds $0,1,2,\ldots,n$.}
	\label{fig:Simulation}
\end{figure}

The above transformation, even though time-optimal has a move complexity asymptotically equal to the worst-case single-move distance between $S_I$ and $S_F$. This is because it always moves individual agents, thus not exploiting the inherent parallelism of line moves. Our goal, is to trade time for number of line moves in order to develop alternative distributed transformations which will complete within a sub-quadratic number of moves. 
Given that actuation is a major source of energy consumption in real programmable matter and robotic systems, moves minimisation is expected to contribute in the deployment and implementation of energy-efficient systems.

We already know that there is a centralised $O(n \log n)$-move connectivity-preserving transformation, working for a large family of connected shapes \cite{AMP2020}. That centralised strategy transforms a pair of connected shapes \textbf{$(S_I,S_F)$} of the same order (i.e., the number of agents) to each other, when the associated graphs of both shapes contain a Hamiltonian path (see also Itai \emph{et al.}  \cite{IPS82} for rectilinear Hamiltonian paths), while preserving connectivity during the transformation. This approach initially forms a line from one endpoint of the Hamiltonian path, then flattens all agents along the path gradually via line moves, while successively doubling the line length in each round. After $O(n \log n)$ moves, it arrives at the final straight line $S_L$ of length $n$, which can be then transformed into $S_F$ by reversing the transformation of $S_F$ into $S_L$, within the same asymptotic number of moves. 

In this work, we introduce the first distributed transformation exploiting the linear-strength mechanism of the \textit{line-pushing} model. It provides a solution to  the line formation problem, that is, for any initial Hamiltonian shape $S_I$, form a final straight line $S_L$ of the same order. It is essentially a distributed implementation of the centralised Hamiltonian transformation of \cite{AMP2020}. We show that it preserves the asymptotic bound of $O(n \log n)$ line moves (which is still the best-known centralised bound), while keeping the whole shape connected throughout its course. This is the first step towards distributed transformations between any pair of Hamiltonian shapes. The inverse of this transformation ($S_L$ into $S_I$) appears to be a much more complicated problem to solve as the agents need to somehow know an encoding of the shape to be constructed and that in contrast to the centralised case, reversibility does not apply in a straightforward way. Hence, the reverse of this transformation ($S_L$ into $S_I$) is left as a future research direction.

We restrict attention to the class of Hamiltonian shapes. This class, apart from being a reasonable first step in the direction of distributed transformations in the given setting, might give insight to the future development of universal distributed transformations, i.e., distributed transformations working for any possible pair of initial and target shapes. This is because geometric shapes tend to have long simple paths. For example, the length of their longest path is provably at least $\sqrt{n}$. We here focus on developing efficient distributed transformations for the extreme case in which the longest path is a Hamiltonian path. However, one might be able to apply our Hamiltonian transformation to any pair of shapes, by, for example, running a different or similar transformation along branches of the longest path and then running our transformation on the longest path. We leave how to exploit the longest path in the general case (i.e., when initial and target shapes are not necessarily Hamiltonian) as an interesting open problem.

We assume that a pre-processing phase provides the Hamiltonian path, i.e., a global sense of direction is made available to the agents through a labelling of their local ports (e.g., each agent maintains two local ports incident to its predecessor and successor on the path). Similar assumptions exist in the literature of systems of complex shapes that contain a vast number of self-organising and limited entities. A prominent example is \cite{RCN14} in which the transformation relies on an initial central phase to gain some information about the number of entities in the system. 

Now, we are ready to sketch a high-level description of the transformation. A Hamiltonian path $P$ in the initial shape $S_I$ starts with a head on one endpoint labelled $l_h$, which is leading the process and coordinating all the sub-procedures during the transformation. The transformation proceeds in $\log n$ phases, each consisting of six sub-phases (or sub-routines) and every sub-phase running for one or more synchronous rounds. Figure \ref{fig:I_Phase} gives an illustration of a phase of this transformation when applied on the diagonal line shape. Initially, the head $l_h$ forms a trivial line of length 1. By the beginning of each phase $i$, $0 \le i \le \log n - 1$, there exists a line $L_i$ starting from the head $l_h$ and ending at a tail $l_t$ with $2^i-2$ internal agents labelled $l$ in between. By the end of phase $i$, $L_i$ will have doubled its length as follows. 

\begin{figure}[hbt!]
		\centering
		\subfigure[\textsf{DefineSeg}, \textsf{CheckSeg} and \textsf{DrawMap}.]{\includegraphics[scale=0.45]{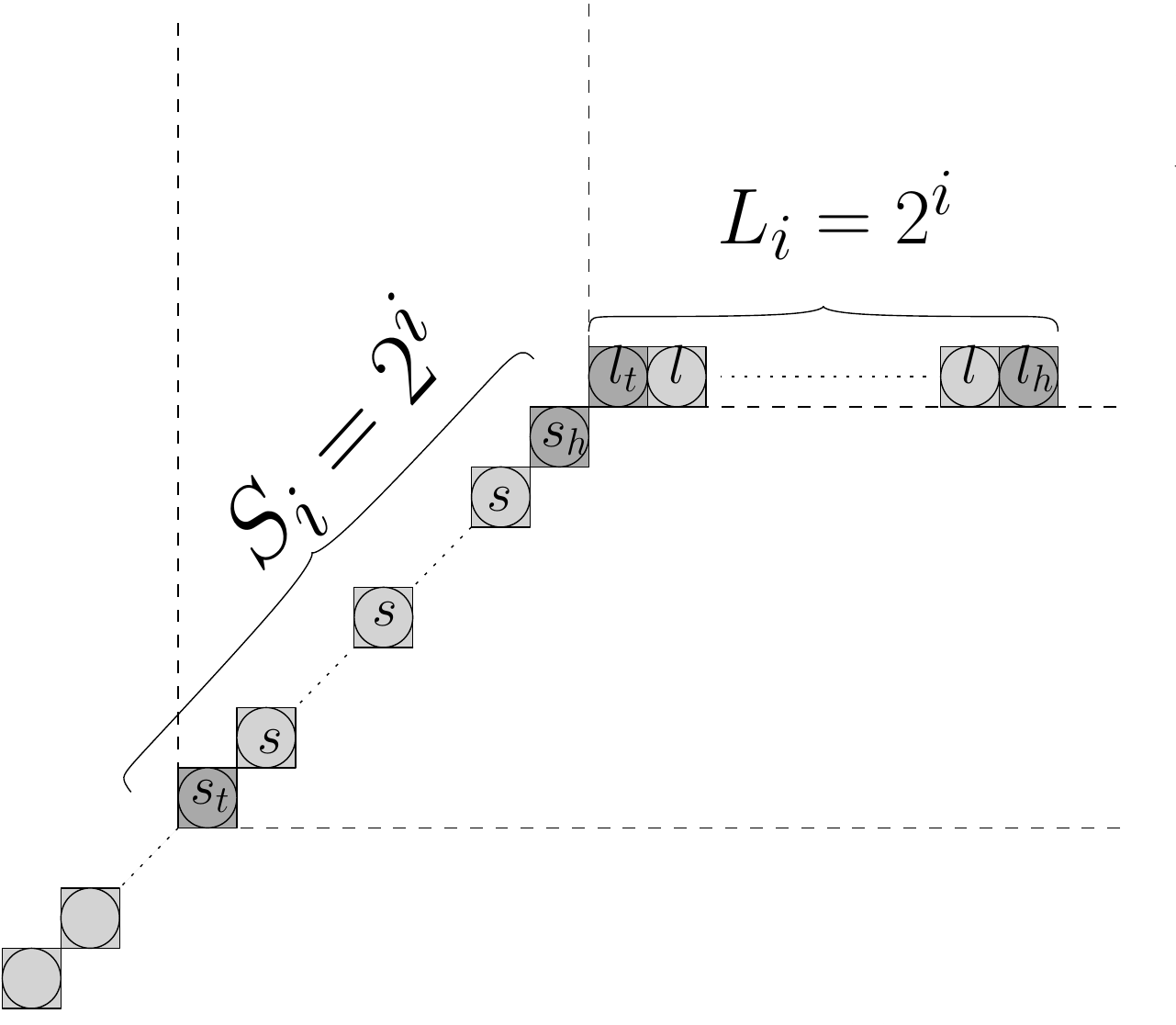}} \quad
		\subfigure[\textsf{Push}.]{\includegraphics[scale=0.45]{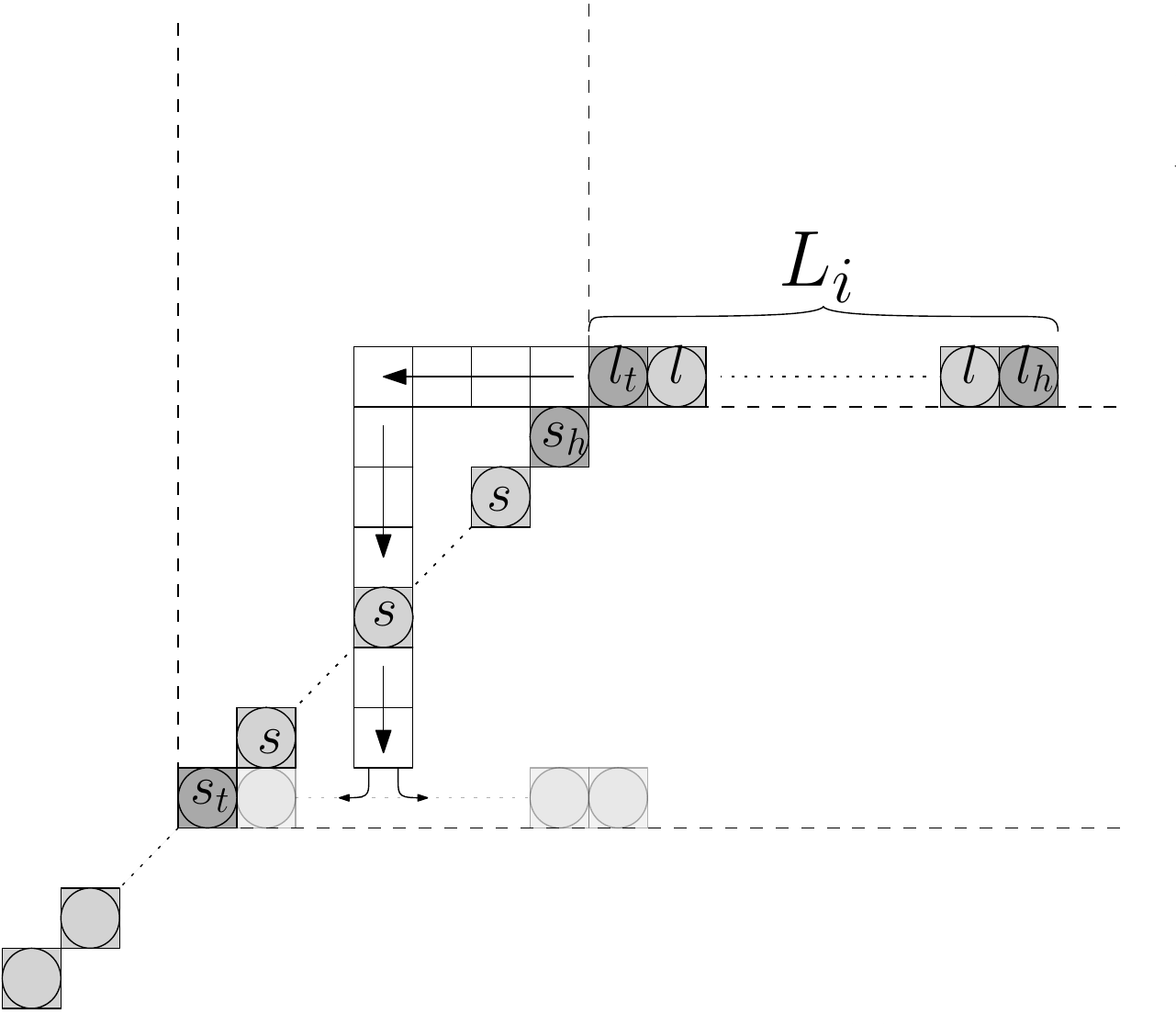}} \quad
		\subfigure[\textsf{RecursiveCall}.]{\includegraphics[scale=0.45]{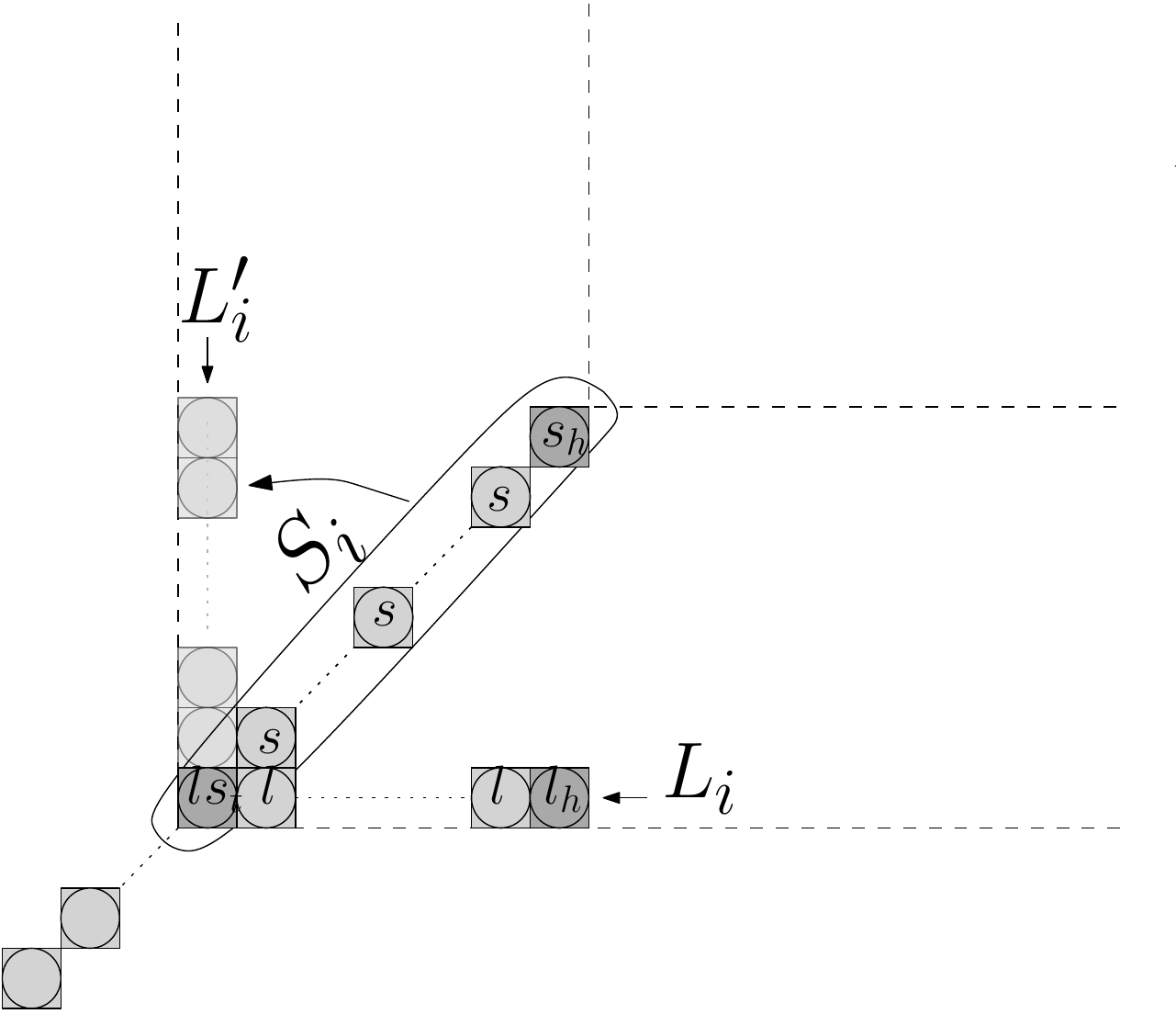}}\quad
		\subfigure[\textsf{Merge}.]{\includegraphics[scale=0.45]{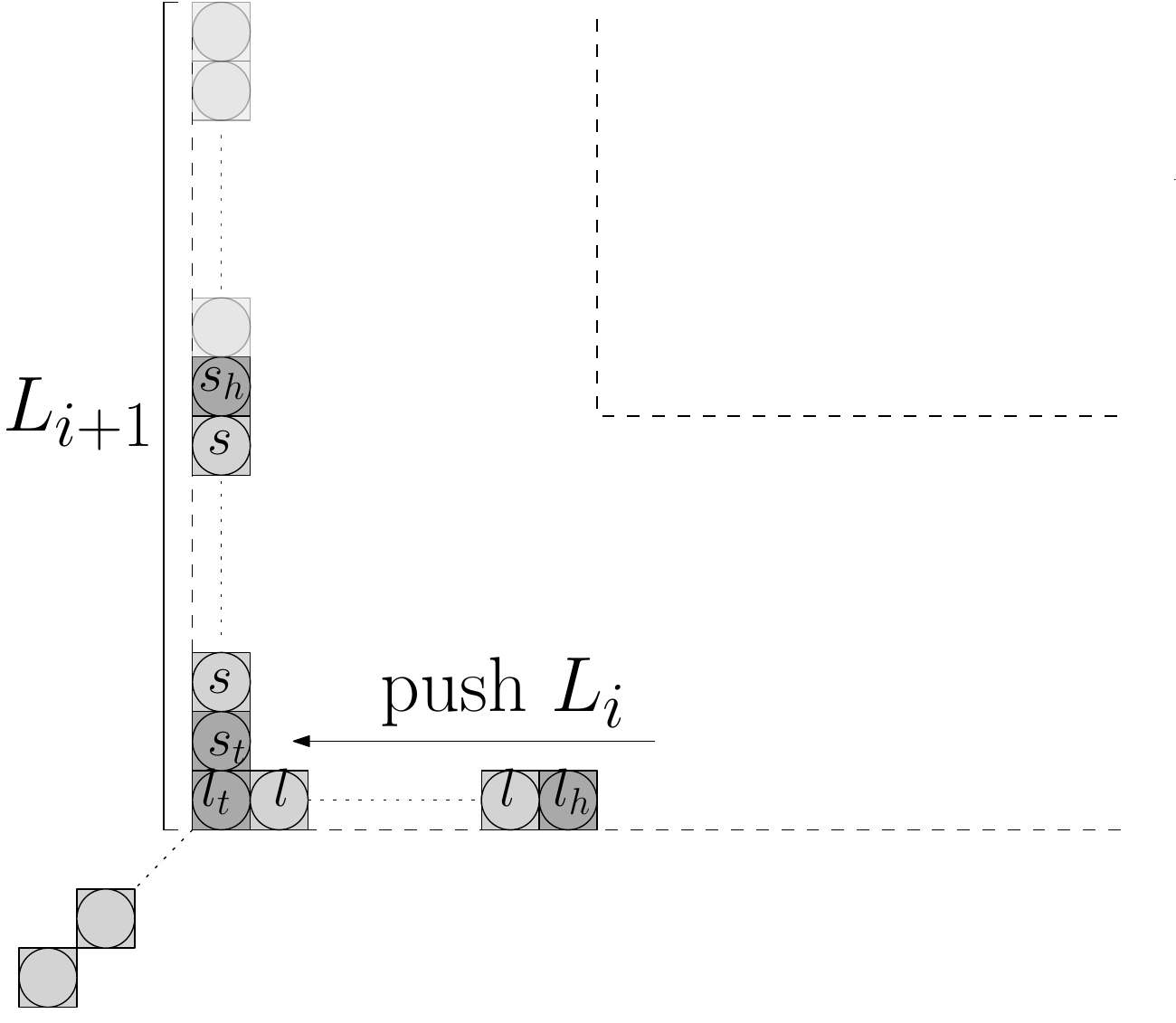}}\\
		
		\caption{From \cite{AMP2020}, a snapshot of phase $i$ of the Hamiltonian transformation on the shape of a diagonal line. Each occupied cell shows the current label state of an agent. Light grey cells show ending cells of the corresponding moves.}
		\label{fig:I_Phase}
\end{figure} 

In \textsf{DrawMap}, $l_h$ designates a route on the grid through which $L_i$ pushes itself towards the tail $s_t$ of $S_i$. It consists of two primitives: \textsf{ComputeDistance} and \textsf{CollectArrows}. In  \textsf{ComputeDistance}, the line agents act as a distributed counter to compute the Manhattan distance between the tails of $L_i$ and $S_i$. In \textsf{CollectArrows}, the local directions are gathered from $S_i$'s agents and distributed into $L_i$'s agents, which collectively draw the route map. Once this is done, $L_i$ becomes ready to move and $l_h$ can start the \textsf{Push} sub-phase. During pushing, $l_h$ and $l_t$ synchronise the movements of $L_i$'s agents as follows: (1) $l_h$ pushes while $l_t$ is guiding the other line agents through the computed route and (2) both are coordinating any required swapping of states with agents that are not part of $L_i$ but reside in $L_i$'s trajectory. Once $L_i$ has traversed the route completely, $l_h$ calls \textsf{RecursiveCall} to apply the general procedure recursively on $S_i$ in order to transform it into a line $L_i^{\prime}$. Figure \ref{fig:DignalPattern} shows a graphical illustration of the core recursion on the special case of a diagonal line shape. Finally, the agents of $L_i$ and $L_i^{\prime}$ combine into a new straight line $L_{i+1}$ of $2^{i+1}$ agents through the \textsf{Merge} sub-procedure. Then, the head $l_h$ of $L_{i+1}$ begins a new phase $i+1$. 

\begin{figure}[hbt!]
	\centering
	\includegraphics[scale=0.65]{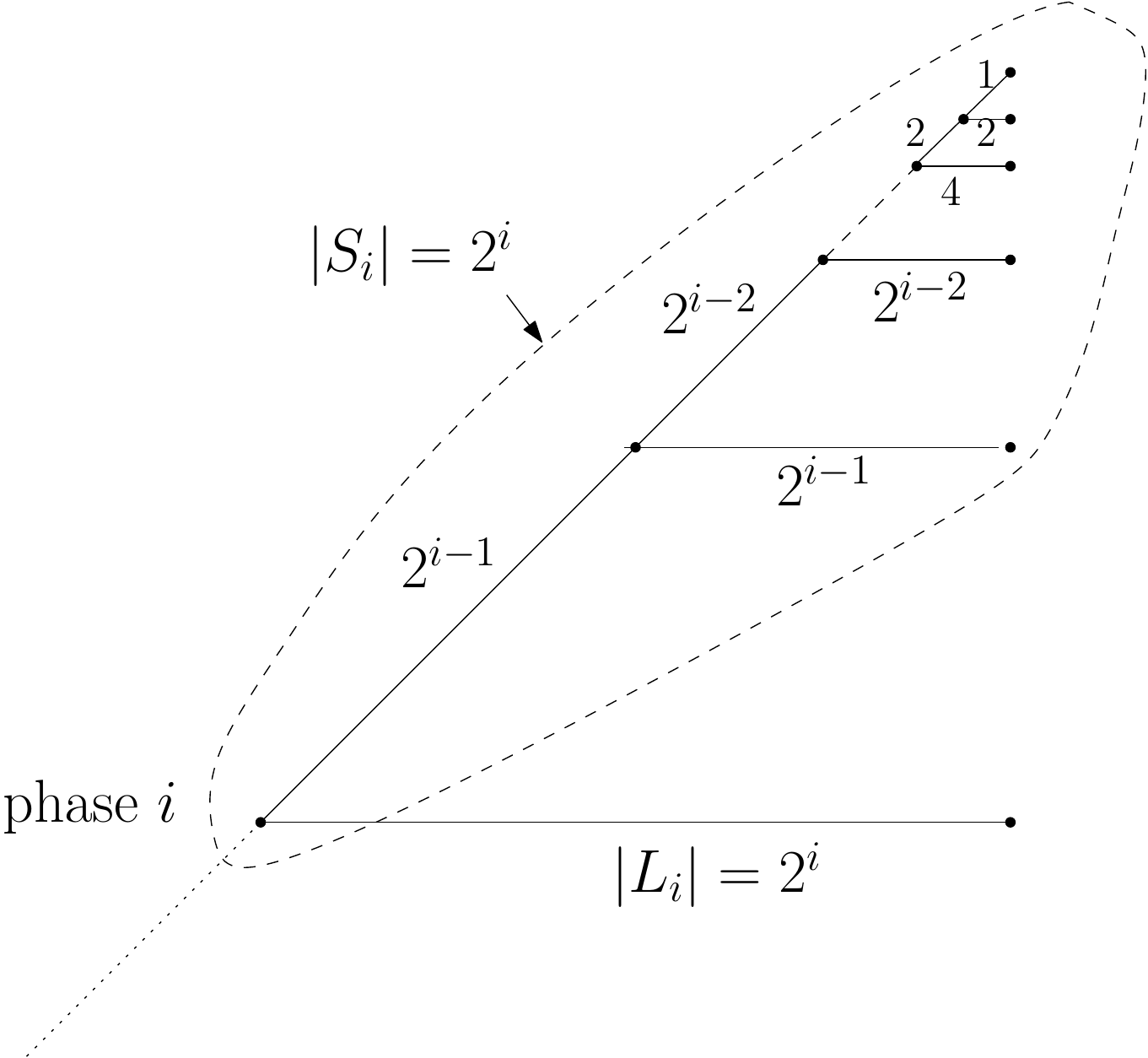}
	\caption{A zoomed-in picture of the core recursive technique \textsf{RecursiveCall} in Figure \ref{fig:I_Phase}(c).}
	\label{fig:DignalPattern}
\end{figure}

Section \ref{sec:model} formally defines the model and the problem under consideration. Section \ref{sec:DistributedHamiltonianTransformation} presents our distributed connectivity-preserving transformation that solves the line formation problem for Hamiltonian shapes, achieving a total of $O(n \log n)$ line moves.  

\section{Model}
\label{sec:model}
We consider a system consisting of $n$ agents forming a connected shape $S$ on a two-dimensional square grid in which each agent $p \in S$ occupies a unique cell $cell(p) =(x,y)$, where $x$ indicates columns and $y$ represents rows. Throughout, an agent shall also be referred to by its coordinates. Each cell $(x,y)$ is surrounded by eight adjacent cells in each cardinal and ordinal direction, (\emph{N}, \emph{E}, \emph{S}, \emph{W}, \emph{NE}, \emph{NW}, \emph{SE}, \emph{SW}). At any time, a cell $(x,y)$ can be in one of two states, either empty or occupied. An agent $p \in S $ is a \emph{neighbour} of (or \emph{adjacent} to) another agent $p^{\prime} \in S$, if $p^{\prime} $ occupies one of the eight adjacent cells surrounding $p$, that is their coordinates satisfy $p^{\prime}_x -1 \le p_x \le p^{\prime}_x +1$ and $p^{\prime}_y -1 \le p_y \le p^{\prime}_y +1$, see Figure \ref{fig:Compass}. For any shape $S$, we associate a graph $G(S) = (V, E)$ defined as follows, where $V$ represents agents of $S$ and $E$ contains all pairs of adjacent neighbours, i.e. $(p, p^{\prime}) \in E$ iff $p$ and $p^{\prime}$ are neighbours in $S$. We say that a shape $S$ is connected iff $G(S)$ is a connected graph. The \emph{distance} between agents $p \in S$ and $p^{\prime} \in S$ is defined as the Manhattan distance between their cells, $\Delta(p, p^{\prime}) = |p_x - p^{\prime}_x|+|p_y - p^{\prime}_y|$. A shape $S$ is called \emph{Hamiltonian shape} iff $G(S)$ contains a Hamiltonian path, i.e. a path starting from some $p \in S$, visiting every agent in $S$ and ending at some $p^{\prime} \in S$, where $p \ne p^{\prime}$.

\begin{figure}[hbt!] 
	\centering 
	\includegraphics[scale=0.5]{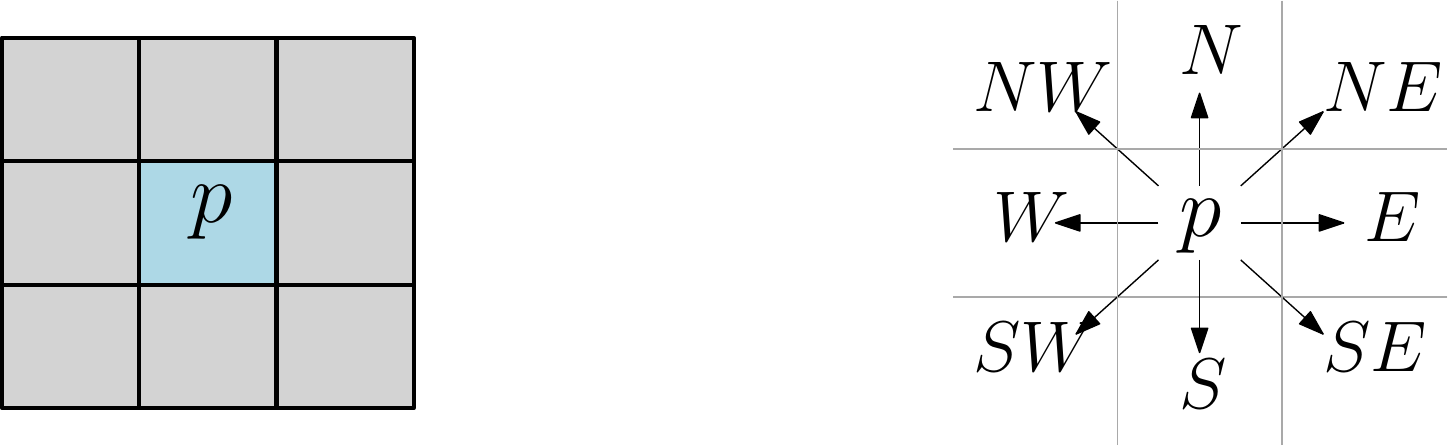}
	\caption{An agent $p$ is a neighbour to any agent locating at one of the eight surrounding cells in grey.}
	\label{fig:Compass}
\end{figure}

In this work, each agent is equipped with the linear-strength mechanism introduced in \cite{AMP20}, called the \emph{line pushing mechanism}. A line $L$ consists of a sequence of $k$ agents occupying consecutive cells on the grid, say w.l.o.g, $L= (x,y), (x+1, y), \ldots, (x+k-1, y)$, where $1 \le k \le n$. The agent $p \in L$ occupying $(x,y)$ is capable of performing an operation of a \textbf{line move} by which it can push all agents of $L$ one position rightwards to positions $(x+1,y), (x+2, y), \ldots, (x+k, y)$ in a single time-step. The \emph{line moves} towards the ``down'', ``left'' and ``up'' directions are defined symmetrically by rotating the system $ 90^{\circ}$, $180^{\circ} $ and $ 270^{\circ} $ clockwise, respectively. From now on, this operation may be referred to as \emph{move}, \emph{movement} or \emph{step}. We call the number of agents in $S$ the \emph{size} or \emph{order} of the shape, and throughout this work all logarithms are to the base 2.

We assume that the agents share a sense of orientation through a consistent labelling of their local ports. Agents do not know the size of $S$ in advance neither they have any other knowledge about $S$. Each agent has a constant memory (of size independent of $n$) and a local visibility mechanism by which it observes the states of its eight neighbouring cells simultaneously. The agents act as finite automata operating in synchronous rounds consisting of LCM steps. Thus, in every discrete round, an agent observes its own state and for each of its eight adjacent cells, checks whether it is occupied or not. For each of those occupied, it also observes the state of the agent occupying that cell. Then, the agent updates its state or leaves it unchanged and performs a \emph{line move} in one direction \emph{d} $\in \{$\emph{up, down, right, left}$\}$ or stays still. A \emph{configuration} $C$ of the system is a mapping from $\mathbb{Z}_{\geq 0}^2$ to $\{0\} \cup Q$, where $Q$ is the state space of agents. We define $S(C)$ as the shape of configuration $C$, i.e., the set of coordinates of the cells occupied in $S$. Given a configuration $C$, the LCM steps performed by all agents in the given round, yield a new configuration $C^{\prime}$ and the next round begins. If at least one move was performed, then we say that this round has transformed $S(C)$ to $S(C^{\prime})$.

Throughout this work, we assume that the initial shape $S_I$ is Hamiltonian and the final shape is a straight line $S_L$, where both $S_I$ and $S_L$ have the same order. We also assume that a pre-elected leader is provided at one endpoint of the  Hamiltonian path of $S_I$. It is made available to the agents in the distributed way that each agent $p_i$ knows the local port leading to its predecessor $p_{i-1}$ and its successor $p_{i+1}$, for all $1 \le i \le n$.

An agent $p \in S$ is defined as a 5-tuple $(X,M,Q,\delta,O)$, where
$Q$ is a finite set of states, $X$ is the input alphabet representing the states of the eight cells that surround an agent $p$ on the square grid, so $|X|={|Q|}^8$, $M = \{\uparrow, \downarrow, \rightarrow, \leftarrow, none\}$ is the output alphabet
corresponding to the set of moves, a transition function $ \delta: Q \times X \rightarrow Q \times  M$ and the output function $O: \delta \times X \rightarrow M$.

\subsection{Problem definition} 

We now formally define the problem considered in this work.\\

\noindent\textbf{{\sc HamiltonianLine}.} Given any initial Hamiltonian shape $S_I$, the agents must form a final straight line $S_L$ of the same order from $S_I$ via line moves while preserving connectivity throughout the transformation.

\section{The Distributed Hamiltonian Transformation}
\label{sec:DistributedHamiltonianTransformation}

In this section, we develop a distributed algorithm exploiting line moves to form a straight line $S_L$ from an initial connected shape $S_I$ which is associated to a graph that contains a Hamiltonian path. As we will argue, this strategy performs $O(n \log n)$ moves, i.e., it is as efficient w.r.t. moves as the best-known centralised transformation \cite{AMP2020}, and completes within $O(n^2 \log n)$ rounds, while keeping the whole shape connected during its course.

We assume that through some pre-processing the Hamiltonian path $P$ of the initial shape $S_I$ has been made available to the $n$ agents in a distributed way. $P$ starts and ends at two agents, called the head $p_1$ and the tail $p_n$, respectively. The head $p_1$ is leading the process (as it can be used as a pre-elected unique leader) and is responsible for coordinating and initiating all procedures of this transformation. In order to simplify the exposition, we assume that $n$ is a power of 2; this can be easily dropped later. The transformation proceeds in $\log n$ phases, each of which consists of six sub-phases (or sub-routines). Every sub-phase consist of one or more synchronous rounds. The transformation starts with a trivial line of length 1 at the head's endpoint, then it gradually flattens all agents along $P$ gradually while successively doubling its length, until arriving at the final straight line $S_L$ of length $n$. 

A state $q \in Q$ of an agent $p$ will be represented by a vector with seven components $(c_1, c_2, c_3, c_4, c_5,$ $ c_6, c_7)$. The first component $c_1$ contains a label $\lambda$ of the agent from a finite set of labels $\Lambda$, $c_2$ is the transmission state that holds a string of length at most three, where each symbol of the string can either be a special mark $w$ from a finite set of marks $W$ or an arrow direction $a \in A = \{\rightarrow, \leftarrow, \downarrow, \uparrow, \nwarrow, \nearrow, \swarrow,\searrow\}$ and $c_3$ will store a symbol from $c_2$'s string, i.e., a special mark or an arrow. The local Hamiltonian direction $a \in A$ of an agent $p$ indicating  predecessor and successor is recorded in $c_4$, the counter state $c_5$ holds a bit from $\{0,1\}$, $c_6$ stores an arrow $a \in A$ for map drawing (as will be explained later) and finally $c_7$ is holding a pushing direction $d \in M$. The ``$\cdot$'' mark indicates an empty component; a non-empty component is always denoted by its state. An agent $p$ may be referred to by its label $\lambda \in \Lambda$ (i.e., by the state of its $c_1$ component) whenever clear from context.  

By the beginning of phase $i$, $0 \le i \le \log n - 1$, there exists a terminal straight line $ L_i $ of $2^i$ active agents occupying a single row or column on the grid, starting with a head labelled $l_h$ and ending at a tail labelled $ l_t $, while internal agents have label $l$. All agents in the rest of the configuration are inactive and labelled $k$. During phase $i$, the head $l_h$ leads the execution of six sub-phases:

\begin{enumerate}
	\item \textsf{DefineSeg}: Identify the next segment $S_i$ of length $2^i$ in the Hamiltonian path.
	\item \textsf{CheckSeg}: Check whether $S_i$ is in line or perpendicular line to $L_i$. Go to (6) if perpendicular or start phase $i+1$ otherwise.  
	\item \textsf{DrawMap}: Compute a rout map that takes $ L_i $ to the end of $S_i$.
	\item \textsf{Push}: Move $ L_i $ along the drawn route map.
	\item \textsf{RecursiveCall}: A recursive-call on $S_i$ to transform it into a straight line $L_i^{\prime }$.
	\item \textsf{Merge}: Combine $L_i$ and $L_i^{\prime }$ together into a straight line $L_{i+1}$ of $2^{i+1}$ double length. Then,  phase $i+1$ begins.  
\end{enumerate}

Figure \ref{fig:I_Phase} gives an illustration of a phase of this transformation when applied on the diagonal line shape. First, it identifies the next $2^i$ agents on $P$. These agents are forming a segment $S_i$ which can be in any configuration. To do that, the head emits a signal which is then forwarded by the agents along the line. Once the signal arrives at $S_i$, it will be used to re-label $S_i$ so that it starts from a head in state $s_h$, has $2^i-2$ internal agents in state $s$, and ends at a tail $s_t$; this completes the \textsf{DefineSeg} sub-phase. Then, $l_h$ calls \textsf{CheckSeg} in order to check whether the line defined by $S_i$ is in line or perpendicular to $L_i$. This can be easily achieved through a moving state initiated at $L_i$ and checking for each agent of $S_i$ its local directions relative to its neighbours. If the check returns true, then $l_h$ starts a new round $i+1$ and calls \textsf{Merge} to combine $L_i$ and $S_i$ into a new line $L_{i+1}$ of length $2^{i+1}$. Otherwise, $l_h$ proceeds with the next sub-phase, \textsf{DrawMap}.

In \textsf{DrawMap}, $l_h$ designates a route on the grid through which $L_i$ pushes itself towards the tail $s_t$ of $S_i$. It consists of two primitives: \textsf{ComputeDistance} and \textsf{CollectArrows}. In  \textsf{ComputeDistance}, the line agents act as a distributed counter to compute the Manhattan distance between the tails of $L_i$ and $S_i$. In \textsf{CollectArrows}, the local directions are gathered from $S_i$'s agents and distributed into $L_i$'s agents, which collectively draw the route map. Once this is done, $L_i$ becomes ready to move and $l_h$ can start the \textsf{Push} sub-phase. During pushing, $l_h$ and $l_t$ synchronise the movements of $L_i$'s agents as follows: (1) $l_h$ pushes while $l_t$ is guiding the other line agents through the computed route and (2) both are coordinating any required swapping of states with agents that are not part of $L_i$ but reside in $L_i$'s trajectory. Once $L_i$ has traversed the route completely, $l_h$ calls \textsf{RecursiveCall} to apply the general procedure recursively on $S_i$ in order to transform it into a line $L_i^{\prime}$. Figure \ref{fig:DignalPattern} shows a graphical illustration of the core recursion on the special case of a diagonal line shape. Finally, the agents of $L_i$ and $L_i^{\prime}$ combine into a new straight line $L_{i+1}$ of $2^{i+1}$ agents through the \textsf{Merge} sub-procedure. Then, the head $l_h$ of $L_{i+1}$ begins a new phase $i+1$. Now, we are ready to proceed with the detailed description of each sub-phase.

\subsection{Define the next segment $S_i$}
\label{sec:Define}
This sub-phase identifies the following segment $S_i$ and activates its $2^i$ agents, which instantly follow the Hamiltonian path's terminal straight line $L_i$ of length $2^i$. The algorithm works as follows: The line head $l_h$ transmits a special mark ``$\textcircled{\scriptsize H} $'' to go through all active agents in the Hamiltonian path $P$. It updates its transmission component $c_2$ as follows: $\delta(l_h,\cdot,\cdot,a\in A,\cdot,\cdot,\cdot)= (l_h,\textcircled{\scriptsize H},\cdot, a \in A,\cdot,\cdot,\cdot)$. This is propagated by active agents by always moving from a predecessor $p_{i}$ to a successor $p_{i+1}$, until it arrives at the first inactive agent with label $k$, which then becomes active and the head of its segment by updating its label as $\delta(k,\textcircled{\scriptsize H},\cdot,a \in A,\cdot,\cdot,\cdot)= (s_h,\cdot,\cdot, a \in A,\cdot,\cdot,\cdot)$. Similarly, once a line agent $p_i$ passes ``$\textcircled{\scriptsize H} $'' to $p_{i+1}$, it also initiates and propagates its own mark ``$\textcircled{\scriptsize l} $'' to activate a corresponding segment agent $s$. The line tail $l_t$ emits ``$\textcircled{\scriptsize T} $'' to activate the segment tail $s_t$, which in turn bounces off a special end mark ``$\otimes$'' announcing the end of \textsf{DefineSeg}. By that time, the next segment $S_i$ consisting of $2^i$ agents, starting from a head labelled $s_h$, ending at a tail $s_t$ and having $2^i-2$ internal agents with label $s$, has been defined. The ``$\otimes$'' mark is propagated back to the head $l_h$ along the active agents, by always moving from $p_{i+1}$ to $p_{i}$.

\begin{lemma} \label{lem:DefineNextSegCorrectness}
	\textsf{DefineSeg} correctly activates all agents of $S_i$ in $O(n)$ rounds. 
\end{lemma}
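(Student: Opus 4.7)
The plan is to prove correctness and the round bound by separately analyzing the forward propagation (which activates and labels all $2^i$ agents of $S_i$) and the backward propagation of the termination mark $\otimes$. For the forward phase, I would argue by induction on the round number that the mark $\textcircled{\scriptsize H}$ emitted by $l_h$ at round $0$ advances exactly one position per round along the Hamiltonian path $P$; it traverses $L_i$ in $2^i - 1$ rounds and then crosses into the inactive portion of $P$, relabeling the first inactive agent as $s_h$ at round $2^i$. In parallel, each line agent, upon forwarding $\textcircled{\scriptsize H}$ from its predecessor to its successor, initiates its own mark one round later ($\textcircled{\scriptsize l}$ for internal agents, $\textcircled{\scriptsize T}$ for $l_t$).

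A second induction, on the index $j$ of the emitting line agent, shows that these marks form a pipeline with unit spacing along $P$ and that the mark originating at the $j$-th line agent arrives at the $j$-th inactive agent of $S_i$ at a well-defined round, relabeling it as $s$ for $2 \le j \le 2^i - 1$ and as $s_t$ for $j = 2^i$. This yields the intended labeling $s_h, s, \ldots, s, s_t$ for all $2^i$ agents of $S_i$, with the last activation occurring by round $2 \cdot 2^i - 1$. For the backward phase, once $s_t$ has been activated it emits $\otimes$, which traverses $S_i$ in the successor-to-predecessor direction and then climbs back up $L_i$ to $l_h$; a symmetric one-step-per-round argument gives an additional $O(2^i)$ rounds, so the total round count is $O(2^i) = O(n)$.

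The step I expect to be the main obstacle is establishing non-interference among the several concurrently in-flight marks, so that the transition function $\delta$ is well defined at every configuration encountered. Concretely, I must verify that no two marks ever need to occupy the transmission component $c_2$ of the same agent in the same round. For the forward phase this follows from the unit spacing between consecutive marks established by the pipelining induction: the mark emitted by $p_j$ is always exactly one cell behind the mark emitted by $p_{j-1}$. For the transition into the backward phase, the key observation is that $\otimes$ is emitted only after $s_t$ has itself just been activated by $\textcircled{\scriptsize T}$, by which time all forward marks have already exited $L_i$ and entered $S_i$; hence on any line agent at most one in-flight symbol is present at any round, and the backward wave meets only already-activated agents on its way home. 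A careful bookkeeping of the rounds at which each mark enters and exits each cell along $P$ simultaneously finalizes correctness and the $O(n)$ round bound claimed in the statement.
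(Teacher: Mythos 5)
Your proposal follows essentially the same route as the paper's proof: a forward wave of pipelined marks ($\textcircled{\scriptsize H}$ from $l_h$, one per internal agent, $\textcircled{\scriptsize T}$ from $l_t$) that activate the $2^i$ agents of $S_i$ one per mark, followed by the backward $\otimes$ wave, each taking linearly many rounds. If anything, your plan is more careful than the paper's own argument, which is a transition-by-transition walkthrough that asserts rather than proves the unit-spacing of the pipeline and never explicitly addresses non-interference of in-flight marks; your two inductions and the observation that $\otimes$ is emitted only after all forward marks have been consumed would make that part rigorous.

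There is one concrete case you omit that the paper handles explicitly: the degenerate first phase, where $L_0$ consists of the single agent $l_h$ with no tail $l_t$ and no internal agents. Your backward-phase argument hinges on $s_t$ being activated by $\textcircled{\scriptsize T}$ and then emitting $\otimes$, but in phase $0$ the segment $S_0$ is a single agent that becomes $s_h$ upon seeing $\textcircled{\scriptsize H}$; no $\textcircled{\scriptsize T}$ is ever emitted and no $s_t$ exists, so under your description the sub-phase would never terminate. The paper patches this by having the lone newly activated $s_h$ itself reflect $\otimes$ back to $l_h$. You should add this base case (or argue separately that for $i=0$ the termination signal is generated by $s_h$); with that addition, your argument covers the lemma as stated.
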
  
\begin{proof}
	When an active agent $p_i$ with label inline $l$ or tail $l_t$ observes the head mark ``$\textcircled{\scriptsize H} $'' on the state of its predecessor $p_{i-1}$, it then updates transmission state $c_2$ to ``$\textcircled{\scriptsize H} $'' and  initiates a special mark on its waiting state $c_4$. This can be either inline ``$\textcircled{\scriptsize L} $'' or tail ``$\textcircled{\scriptsize T} $'' mark.  Once an inactive agent notices predecessor with ``$\textcircled{\scriptsize L} $'' or ``$\textcircled{\scriptsize T} $'' mark, it activates and changes its label $c_1$ to the corresponding state, ``$s$'' or ``$s_t$'', respectively. 	Immediately after activating the tail $s_t$, it bounces off a special end mark ``$\otimes$'' transmitted along all active agents back to the head $l_h$ of the line to indicate the end of this sub-phase.  That is, the tail $s_t$ sets ``$\otimes$'' in transmission state, so when agent $p_i$ observes successor $p_{i+1}$ showing ``$\otimes$'', it updates its transmission state to $c_2 \gets \otimes$. When witnessing predecessor or successor with an empty transmission state, each agent resets its $c_2$ state to ``$\cdot$''.  Once the head $l_t$ detects the ``$\otimes$'' mark, it then calls the next sub-routine, \textsf{CheckSeg}. Because the transformation always doubles the length of the straight line, the line $L_i$ cannot be of odd length, unless it is originally composed of 1 agent labelled $l_h$ and adjacent to an inactive neighbour on the path $P$. In this case, the adjacent agent activates when it observes the head mark, updates label to $s_h$ and reflects an end special mark ``$\otimes$'' back to $l_h$. \qed
\end{proof}

Figure \ref{fig:DNS1} depicts an implementation of \textsf{DefineSeg} on a straight line of four agents, in which the next segment $S_i$ is represented as a line for clarity, but it can be of any configuration. All transitions of this sub-routine is given in Algorithm \ref{alg:DefNxSg}, excluding all that have no effect.

\begin{figure}[hbt!]
	\centering 
	\includegraphics[scale=0.72]{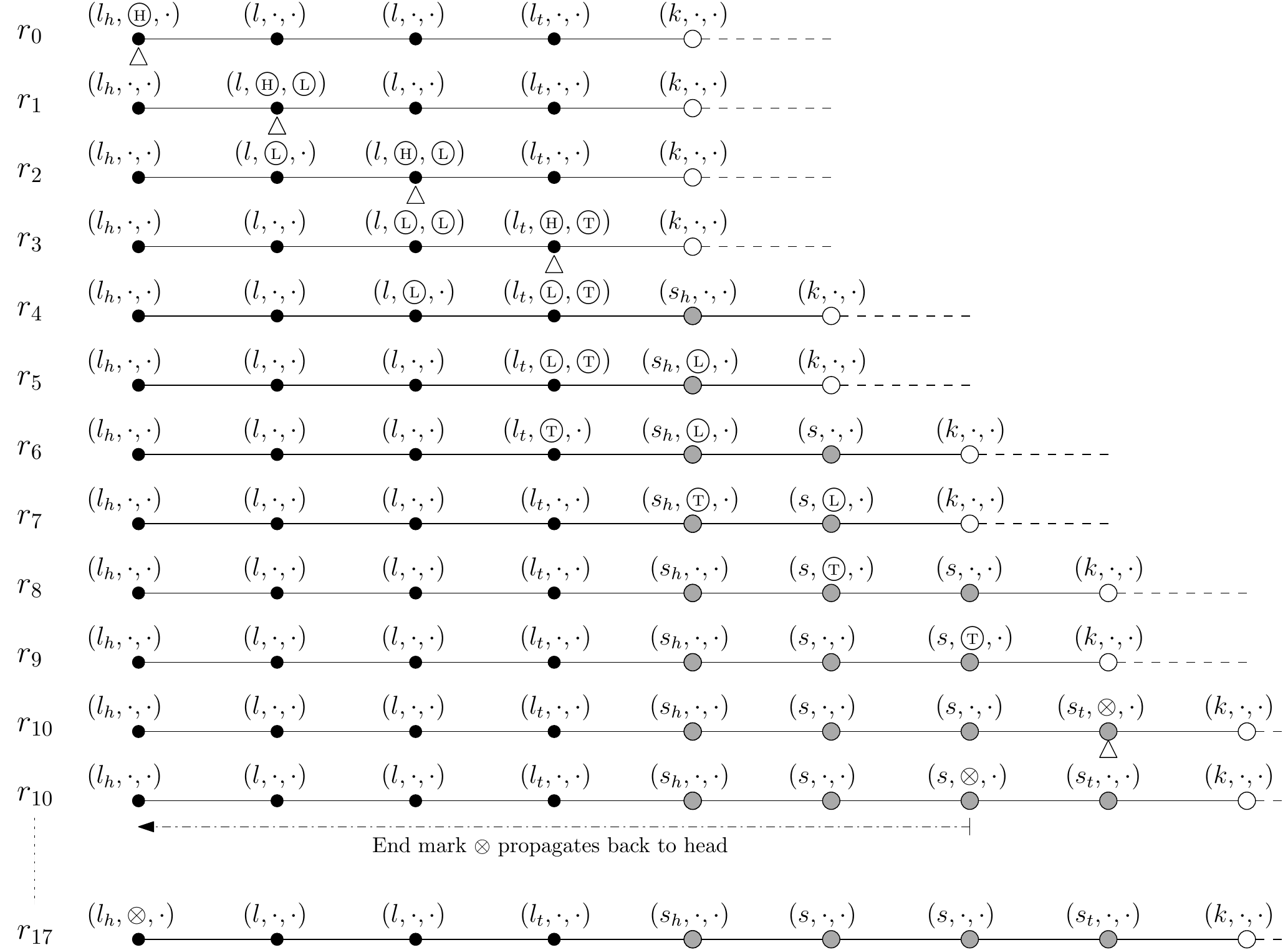}
	\caption[An implantation of \textsf{DefineSeg} on a line $L_i$ of four agents.]{An implantation of \textsf{DefineSeg} on a line $L_i$ of four agents depicted as black dots. Each agent uses only its 3 state components $(c_1,c_2,c_3)$, where $c_1$ is the label state, $c_2$ the transmission state and $c_3$ the waiting mark state. In round $r=0$,  $L_i$ is labelled correctly, starting from a head $l_h$ and ends at a tail $l_t$ with internal agents $l$. Inactive agents with circles are labelled $k$. First, $l_h$ sets $c_2 \gets\textcircled{\scriptsize H}$. Thereafter, when an active agent $p_i$ notices successor $p_{i-1}$ showing ``$\textcircled{\scriptsize H}$'' , it updates to $c_2 \gets\textcircled{\scriptsize H}$ (a small triangle indicates this initialisation in rounds $r_0, r_1,r_2, r_3,$ and $r_{10}$). Agents of label $l$ and $l_t$ propagate ``$\textcircled{\scriptsize L}$'' and ``$\textcircled{\scriptsize T}$'', respectively. Whenever an inactive agent sees predecessor presenting a mark, it activates (grey dots) and updates label to corresponding state. Once activating the segment tail $s_t$, it propagates an end mark ``$\otimes$'' back to the head to start \textsf{CheckSeg}.}
	\label{fig:DNS1}
\end{figure}

\begin{algorithm}[hbt!]
	\caption{\textsf{DefineSeg}}
	\label{alg:DefNxSg}
	$ S  = ( p_1, \ldots, p_{|S|}) $ is a Hamiltonian shape\\
	Initial configuration: $S \leftarrow S_I$, a  line $ L \subset S $ of length $k = 1, \ldots, \log |S|$, labelled as in Figure \ref{fig:DNS1} topmost	\\
	\SetAlgoLined
	\DontPrintSemicolon
	\vspace{7pt} 
	$\textcircled{\scriptsize H} \gets p_1.c_2 $ \tcp{head sets a  mark  in transmission state}
	\Repeat{\upshape$ (p_{1}.c_2 = \otimes )$} 
	{ 
		\tcp{each agent acts based on its current label state}
		\textbf{Head $l_h$:}    \\
		\lIf{\upshape$ (p_1.c_2 = \textcircled{\scriptsize H})$}
		{$\cdot \gets p_i.c_2$ \tcp{reset transmission state}}
		\vspace{-10px}
		\lIf{\upshape($ p_{i+1}.c_2 = \otimes $)}
		{$\otimes \gets p_1.c_2$ \tcp{observe end mark; end this sub-phase}}
		
		\textbf{Active:} \\
		\If{\upshape($ p_{i-1}.c_2 = \textcircled{\scriptsize H} $) \tcp{observe predecessor with head mark}}
		{$\textcircled{\scriptsize H} \gets p_i.c_2$ \\
			\lIf{\upshape(\textbf{inline} $p_{i}.c_1 = l$)}
			{$\textcircled{\scriptsize L}  \gets  p_i.c_3$}
			\lIf{\upshape(\textbf{tail} $p_{i}.c_1 = l_t$)}
			{$\textcircled{\scriptsize T}  \gets  p_i.c_3$}
		}
	    \lIf{\upshape$ p_{i-1}.c_2 = \textcircled{\scriptsize L} $}
	    {$\textcircled{\scriptsize L} \gets p_i.c_2$ \tcp{predecessor shows inline mark}}
	    \vspace{-10px}
	    \lIf{\upshape$ p_{i-1}.c_2 = \textcircled{\scriptsize T} $}
	    {$\textcircled{\scriptsize T} \gets p_i.c_2$ \tcp{predecessor shows tail mark}}
	    \vspace{-10px}
	    \lIf{\upshape$\big((p_i.c_2 = \textcircled{\scriptsize H}  \vee \textcircled{\scriptsize L} \vee  \textcircled{\scriptsize T}) \wedge p_{i-1}.c_2 = \cdot \big)$}
	    {\\$p_{i}.c_2 \gets p_{i}.c_3$ \tcp{transmit  marks}  $\cdot \gets p_{i}.c_3$     \tcp{rest marks}}
	    \vspace{-10px}
	    \lIf{\upshape$ p_{i+1}.c_2 = \otimes $}
	    {$\otimes \gets p_i.c_2$ \tcp{successor shows end mark}}
	    \vspace{-10px}
	    \lIf{\upshape$ p_{i}.c_2 = \otimes $}
	    {$\cdot \gets p_i.c_2$ \tcp{rest transmission state}}

	    \textbf{Inactive:} \\
	    \lIf{\upshape$ (p_{i-1}.c_2 = \textcircled{\scriptsize H}) $}
	    {$ s_h \gets p_i.c_1$ \tcp{activate to segment head  $ s_h$}}
	    \vspace{-10px}
	    \lIf{\upshape$ (p_{i-1}.c_2 = \textcircled{\scriptsize L}) $}
	    {$ s \gets p_i.c_1$ \tcp{activate to insegment  $ s$}}
	    \vspace{-10px}
	    \lIf{\upshape$ (p_{i-1}.c_2 = \textcircled{\scriptsize T}) $}
	    {\\$ s_t \gets p_i.c_1$ \tcp{activate to segment tail $ s_t$} $\otimes \leftarrow p_i.c_2$ \tcp{initiate  end mark}}
	    \vspace{-10px}
	}	
	\textsf{CheckSeg}
\end{algorithm}
 
\begin{lemma} \label{lem:DefNxSg}
	\textit{DefineSeg} requires at most $O(n)$ rounds to define $S_i$. 
\end{lemma}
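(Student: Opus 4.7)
The plan is to bound the number of rounds directly by tracking the longest signal chain that \textsf{DefineSeg} produces along the Hamiltonian path $P$. Every mark used in this sub-phase travels at the rate of one hop per round and never leaves $P$, so the total time is essentially the length of the longest round-trip along the active region.

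First, I would analyse the forward sweep. The head $l_h$ emits $\textcircled{\scriptsize H}$ at round $0$; this mark traverses the $2^i$ agents of $L_i$ and then enters $S_i$, eventually activating the first inactive agent as $s_h$. In parallel, each internal agent $l$ launches an $\textcircled{\scriptsize L}$ mark and the tail $l_t$ launches a $\textcircled{\scriptsize T}$ mark, each exactly one round after observing $\textcircled{\scriptsize H}$ on its predecessor. These marks form a pipeline in which the mark sourced by the $j$-th active agent eventually arrives at the $j$-th inactive agent on $P$ and activates it. Since the Hamiltonian distance from $l_h$ to the last target ($s_t$) is $|L_i|+|S_i|-1 = 2^{i+1}-1$, the entire forward sweep finishes within $O(2^{i+1})$ rounds.

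Next, I would bound the return sweep: once activated, $s_t$ initiates the end mark $\otimes$, which travels backwards through the $2^{i+1}$ active agents to $l_h$, taking at most another $2^{i+1}$ rounds. Summing the two sweeps, \textsf{DefineSeg} completes within $O(2^{i+1})$ rounds; since $i \le \log n - 1$, this is $O(n)$. The base case $i=0$ is covered trivially: the single neighbour of $l_h$ activates in one round and reflects $\otimes$ in the next, contributing $O(1)$.

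The one point needing care is verifying that the multiple distinct marks ($\textcircled{\scriptsize H}$, $\textcircled{\scriptsize L}$, $\textcircled{\scriptsize T}$, $\otimes$) traversing the same path do not collide or get overwritten. This follows from Algorithm \ref{alg:DefNxSg}: the transmission component $c_2$ carries at most one mark per round while the waiting component $c_3$ temporarily stores the agent's own outgoing mark until its predecessor's slot is clear, so each agent's relay action in every round is unambiguous and the pipeline is conflict-free. Combined with the distance bound above, this yields the claimed $O(n)$ round complexity.
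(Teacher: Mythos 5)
Your proof is correct and follows essentially the same approach as the paper's: a counting argument on the pipelined marks travelling along the active portion of the Hamiltonian path, with the worst case $|L_i|=n/2$ giving the $O(n)$ bound. If anything, yours is slightly more complete, since you also account for the return sweep of the ``$\otimes$'' mark and the conflict-freeness of the pipeline, which the paper's proof of this lemma leaves implicit.
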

\begin{proof}
	The head mark ``$\textcircled{\scriptsize H} $'' shall traverse all agents of the line $L_i$ of length $|L_i|$ until it arrives at the first inactive agent, taking $O(|L_i|)$ rounds. Thus, all other agents on the line propagate marks that take $O(|L_i|)$ parallel rounds to reach at their corresponding agents on the next segment. In the worst case, the line can be of length $n/2$, which requires at most $O(n)$ rounds of communication to identity the next segment $S_i$ of length  $n/2$. \qed
\end{proof}

\subsection{Check the next segment $S_i$}
\label{sec:Check}

This sub-phase checks the geometrical configuration of the new defined segment $S_i$, determining if it is in line with $L_i$, perpendicular to $L_i$ or contains one turn (L-shape). It aims is to save energy in the system, surpassing one or more of the subsequent sub-phases. First, when $S_i$ is in line with $L_i$ (see illustrated in Figure \ref{fig:DNS1}), both $S_i$ and $L_i$ already form a single straight line $L_{i+1}$ of double length, and so the next phase $i+1$ begins. This reduces the cost of \textsf{DrawMap}, \textsf{Push}, \textsf{RecursiveCall} and \textsf{Merge}. Second, $S_i$ is producing a line perpendicular to $L_i$ (see Figure \ref{fig:ConfigOfSeg}(a)), in which case $L_i$ just needs to reverse direction and line up with $S_i$ to generate $L_{i+1}$, proceeding directly to \textsf{PushLine} and avoiding the extra cost of\textsf{DrawMap}. Lastly, $S_i$ has a single turn (looks like L-shape)  (see Figure \ref{fig:ConfigOfSeg}(b)), where it can simply turn at its corner and align with $L_i$, create $L_{i+1}$ and save the cost of \textsf{DrawMap}, \textsf{Push} and \textsf{RecursiveCall}.  A high-level explanation is provided below.

When $l_h$ observes ``$\otimes$'', it propagates its own local direction stored in component $c_4 =  a \in A$ by updating $c_2 \gets c_4$. Then, all active agents on the path forward $a$ from $p_i$ to $p_{i+1}$ via their transmission components. Whenever a $p_i$ with a local direction $c_4 = a^{\prime}  \in A$ notices $a^{\prime} \ne a$, it combines $a$ with its local direction $a^{\prime}$ and changes its transmission component to $c_2 \gets aa^{\prime}$. After that, if a $p_i^{\prime}$ having $c_4 = a^{\prime\prime}  \in A$ observes $a^{\prime\prime} \ne a^{\prime}$, it updates its transmission component into a negative mark, $c_2 \gets \neg$. All signals are to be reflected by the segment tail $s_t$ back to $l_h$, which acts accordingly as follows: (1) starts the next sub-phase \textsf{DrawMap} if it observes ``$\neg$'', (2) calls \textsf{Merge} to combine the two perpendicular lines if it observes $aa^{\prime}$ or (3) begins a new phase $i+1$ if it receives back its local direction $a$. Algorithm \ref{alg:CheckSg}  shows the pseud-code of this sub-routine. 

\begin{algorithm}[hbt!]
	\caption{\textsf{CheckSeg}}
	\label{alg:CheckSg}
	$ S  = ( p_1, \ldots, p_{|S|}) $ is a Hamiltonian shape\\
	Initial configuration: $S \leftarrow S_I$, a  line $ L \subset S $ of length $k = 1, \ldots, \log |S|$, labelled correctly as in Figure \ref{fig:DNS1} bottommost	\\
	\SetAlgoLined
	\DontPrintSemicolon
	\vspace{7pt} 
	$ p_1.c_2 \gets p_1.c_4 $ \tcp{head emits its direction}
	\Repeat{\upshape$p_1.c_2 = \neg$} 
	{ 
		\tcp{each agent acts based on its current label state}
		\textbf{Head $l_h$:}    \\
		\lIf{\upshape$ (p_1.c_2 = c_4)$}
		{$\cdot \gets p_i.c_2$ \tcp{reset transmission state}}
		\vspace{-10px}
		\lIf{\upshape($ p_{i+1}.c_2 = \neg$)}
		{$\neg \gets p_1.c_2$ \tcp{end this sub-phase}}
		\vspace{-10px}
		\lIf{\upshape($ p_{i+1}.c_2 =$ \checkmark)}
		{start phase $i+1$ \tcp{a new phase begins}}
		\vspace{-10px}
		\lIf{\upshape($ p_{i+1}.c_2 =$ \RoundedLsteel)}
		{\textsf{PushLine($L$)} \tcp{$S_i$ has one turn}}
		
		\textbf{Active:}    \\
		\lIf{\upshape($ p_{i-1}.c_2 = c_4)$}
		{$c_4 \gets p_i.c_2$ \tcp{observe same direction}}
		\vspace{-10px}
		\lIf{\upshape$ p_{i-1}.c_2 \ne c_4$)}
		{$c_4$\RoundedLsteel \hspace{1px}$\gets p_i.c_2$ \tcp{show a turn}}
		\vspace{-10px}
		\lIf{\upshape$(p_{i-1}.c_2 = c_4$\RoundedLsteel)}
		{$\neg \gets p_i.c_2$ \tcp{show another turn}}
		\vspace{-10px}
		\lIf{\upshape$ (p_{i+1}.c_2 = \neg$ $ \vee$ \checkmark \hspace{2px}$\vee$ \RoundedLsteel )}
		{$p_i.c_2 \gets p_{i+1}.c_2$ \tcp{transmit marks backwards}}
		\vspace{-10px}
		\lIf{\upshape$ (p_{|2L|-1}.c_2 =  c_4)$}
		{\checkmark $\gets p_{|2L|.c2}$\tcp{$s_i$ transmits mark backwards}}
		\vspace{-10px}
		\lIf{\upshape$ (p_{|2L|-1}.c_2 =$  $c_4$\RoundedLsteel)}
		{\RoundedLsteel $\gets p_{|2L|.c2}$ \tcp{$s_i$ transmits mark backwards}}
		\vspace{-10px}
		\lIf{\upshape$ (p_{i-1}.c_2 \ne \cdot)$}
		{$\cdot \gets p_i.c_2$  \tcp{reset transmission state}}
		\vspace{-10px}	
	}	
	\textsf{DrawMap} 
\end{algorithm}

\begin{figure}[hbt!]
	\centering 
	\subfigure[$L_i$ is perpendicularly to $S_i$.]{\includegraphics[scale=1.2]{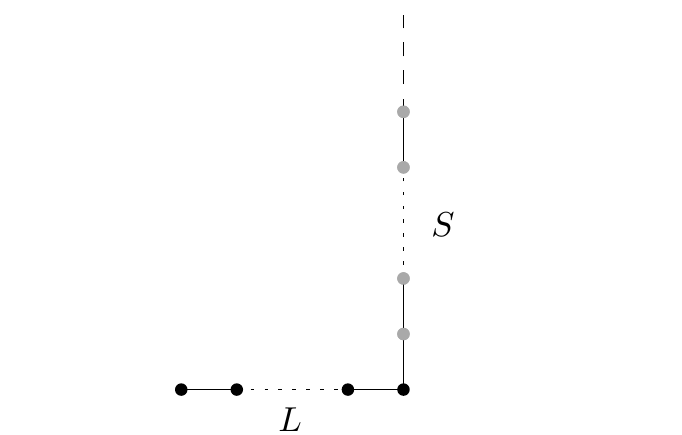}}\qquad \qquad 
	\subfigure[$S_i$ has a single turn.]{\includegraphics[scale=1.2]{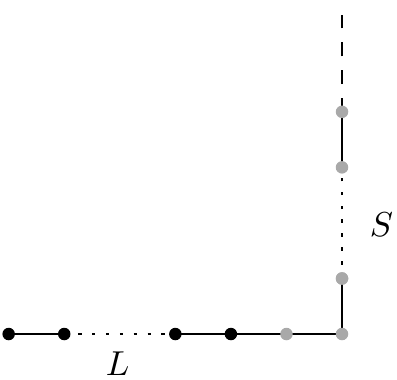}}
	\caption[Two configurations of a Hamiltonian path.]{Two configurations of a Hamiltonian path terminates at a straight line $L_i$ (in black dots) followed by a segment $S_i$ (in grey dots) on the path.}
	\label{fig:ConfigOfSeg}
\end{figure}

\begin{lemma} \label{lem:CheckNextSegCorrection}
	\textsf{CheckSeg} correctly checks the configuration of $S_i$  to be one of the following:
	\begin{itemize}
		\item $S_i$ is in line with $L_i$.
		\item $S_i$ forms a straight line perpendicular to $L_i$.
		\item $S_i$ forms an L-shape.
		\item $S_i$ contains more than one turn. 
	\end{itemize} 
\end{lemma}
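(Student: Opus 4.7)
The plan is to show correctness by interpreting the forward-propagated symbol in the transmission component $c_2$ as a small finite-state counter of how many direction changes (turns) have been observed so far along the path from $l_h$. The invariant I would maintain is: after the forward wave has passed through active agent $p_j$, the symbol stored in $p_j.c_2$ lies in one of three stages -- (i) a raw direction $a$ if no turn occurred among $p_1,\ldots,p_j$; (ii) the combined mark $a'$\RoundedLsteel{} carrying the post-turn direction $a'$ if exactly one turn occurred; or (iii) the negation mark $\neg$ if at least two turns occurred. This invariant is preserved inductively by the transitions in Algorithm~\ref{alg:CheckSg}: an agent observing an incoming direction equal to its own $c_4$ keeps the counter level; a mismatch while still in stage (i) advances the symbol to (ii); and a mismatch while in stage (ii) advances it to (iii), which is terminal.

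Next, I would argue that once the forward wave reaches the segment tail $s_t$, it reflects the appropriate response code back toward $l_h$: a \checkmark{} if the arriving mark is in stage (i), a \RoundedLsteel{}-mark if in stage (ii), and $\neg$ if in stage (iii). The reflected mark is forwarded unchanged by the active agents on the return trip (the transitions in the algorithm copy these three symbols backwards without modifying them), so $l_h$ eventually reads exactly the code $s_t$ emitted and dispatches to \emph{start phase $i+1$}, \textsf{PushLine}, or \textsf{DrawMap} accordingly. Termination is immediate since each wave travels at most $|L_i|+|S_i|=2^{i+1}$ hops.

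The classification of $S_i$ then follows by a short geometric case analysis. Because $L_i$ is already a straight line by the preceding phase, no turn can occur strictly inside $L_i$; thus every turn counted by the wave is either the \emph{junction turn} between $l_t$ and $s_h$ or a turn strictly inside $S_i$. Zero turns forces $L_i \cup S_i$ to be one straight line, i.e.\ $S_i$ is in line with $L_i$, and $l_h$ receives \checkmark{} and launches phase $i+1$. Exactly one turn places the bend either at the junction (yielding a straight $S_i$ perpendicular to $L_i$) or at a single interior agent of $S_i$ (yielding the L-shape); in either case $l_h$ receives the \RoundedLsteel{}-mark and invokes \textsf{PushLine}, which further distinguishes the two sub-cases. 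Two or more turns means $S_i$ itself contains at least one bend beyond any junction turn, so $S_i$ has more than one turn, and $l_h$ correctly routes to \textsf{DrawMap}.

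The main obstacle I anticipate is handling the junction agent $l_t$ cleanly: its local direction $c_4$ points toward $s_h$, so the junction bend is naturally detected by the transition at $l_t$ rather than at $s_h$, and one must verify that the mark emitted by $l_t$ carries the post-turn direction so that subsequent agents of $S_i$ do not double-count. A related subtlety is the pseudo-code clause ``$p_{i-1}.c_2 = c_4$\RoundedLsteel''; its intended reading is that the direction embedded in the incoming stage-(ii) mark disagrees with the current agent's own $c_4$, and I would state this explicitly and check that whenever the embedded direction still matches, the \RoundedLsteel{}-mark is forwarded unchanged rather than upgraded to $\neg$. Once these two points are pinned down, the stage-(i)/(ii)/(iii) invariant propagates cleanly and the four-way classification in the lemma is immediate.
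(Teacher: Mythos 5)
Your proposal is correct and follows essentially the same route as the paper's proof: a forward wave in $c_2$ acting as a three-stage turn counter (raw direction, direction-plus-\RoundedLsteel{} mark, then $\neg$), reflected by $s_t$ as \checkmark, \RoundedLsteel{} or $\neg$ and relayed unchanged back to $l_h$, which dispatches accordingly. Your explicit invariant formulation and the two subtleties you flag (the junction turn being detected at $l_t$, and the mismatch test being against the post-turn direction embedded in the stage-(ii) mark) are faithful readings of what the paper's transition walkthrough does implicitly.
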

\begin{proof}
	This sub-routine starts as soon as the head  $l_h$ observes the end mark ``$\otimes$'' of \textsf{DefineSeg}, which means that all agents of the segment $S_i$ are active and labelled correctly. Given that, the input configuration of \textsf{CheckSeg} is a Hamiltonian path terminates at straight line $L_i$ followed by $S_i$, both are composed of $2^i$ active agents. All other inactive agents in the rest of the configuration are labelled $k$.  During this sub-phase, the active agents use their local directions  of the path stored in state $c_4$ in which a $p_i$ knows each ports incident to predecessor  $p_{i-1}$ and successor $p_{i+1}$.
	
    Now, $l_h$ updates it transmission state to $c_2 \gets c_4$ where it emits its local direction held in $c_4$.  Assume without loss of generality, $c_4$ holds a local direction pointing to the east neighbour ``$ \rightarrow $'', then $l_h$ performs this state transition: $ \delta(l_h, \otimes,\cdot, \rightarrow) = (l_h, \rightarrow,\cdot, \rightarrow)$. This arrow ``$\rightarrow$" propagates through transmission states to all active agents of $L_i$ and $S_i$. When a $p_{i-1}$ displays an empty transmission state, each agent $p_i$ updates state $c_2$ to ``$\cdot$''.  If ``$\rightarrow$'' matches a local direction stored on $c_4$ of  $p_i$, then $p_i$ transmits the same arrow ``$\rightarrow$'' from $p_{i-1}$ to  $p_{i+1}$. If $p_i$ stores a turning arrow (e.g. ``$\downarrow$'' or ``$\uparrow$'')  on $c_4$, then it updates state $c_2$ with a special L-shape mark, ``$\rightarrow$\RoundedLsteel'', which is then passed to $p_{i+1}$. Whenever  $p_j$ stores a turning arrow and observes $p_{j-1}$ showing ``$\rightarrow$\RoundedLsteel'' , $p_j$ initiates a negative mark $c_2 \gets \neg $, which is relayed back to $l_h$, calling out for \textsf{DrawMap}. Once $s_t$ observes ``$\rightarrow$\RoundedLsteel'', it bounces off the mark ``\RoundedLsteel'' back towards $l_h$ to start \textsf{PushLine}.  Otherwise, $s_t$ propagates a special check-mark ``\checkmark''  backwards, alerting $l_h$ that both $L_i$ and $S_i$ already form a straight line.  \qed
\end{proof}

Now, we provide analysis of this procedure.

\begin{lemma} \label{lem:CheckSeg}
	An execution of \textsf{CheckSeg} requires at most $O(n)$ rounds of communication. 
\end{lemma}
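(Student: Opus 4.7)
The plan is to bound the round complexity by tracking the longest signal round-trip on the combined path $L_i \cup S_i$, whose total length is at most $2^{i+1} \le n$. The execution of \textsf{CheckSeg} consists of exactly one forward wave (from $l_h$ to $s_t$) and one reflected wave (from $s_t$ back to $l_h$), so showing that each wave advances one agent per synchronous round will immediately yield the $O(n)$ bound.

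First I would argue that the forward wave propagates at unit speed along the active agents. The head $l_h$ initialises the wave by writing its local direction into $c_2$ in a single round; thereafter, each active agent $p_i$ reads $p_{i-1}.c_2$ and updates its own $c_2$ (possibly mutating into \checkmark, \RoundedLsteel, or $\neg$ as dictated by the case analysis in the proof of Lemma \ref{lem:CheckNextSegCorrection}). Since each copy takes exactly one round and there are at most $2^{i+1}$ active agents on the path, the forward signal reaches $s_t$ within $O(2^{i+1})$ rounds.

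Next I would show that the reflected wave obeys the same bound. Once $s_t$ receives the transformed signal, it writes the appropriate reply (\checkmark if the directions aligned throughout, \RoundedLsteel if a single L-shape was detected, or $\neg$ if more than one turn occurred) into its own $c_2$; thereafter each active agent transmits the reply backwards by the symmetric rule $p_i.c_2 \gets p_{i+1}.c_2$ appearing in Algorithm \ref{alg:CheckSg}. Thus the reply reaches $l_h$ in at most another $2^{i+1}$ rounds. Summing the two waves and taking the worst case $i = \log n - 1$ yields $O(n)$ rounds in total.

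The main subtlety, and the step I expect to require most care, is verifying that the pipeline never stalls and no mark is dropped: one has to check that each rule in Algorithm \ref{alg:CheckSg} fires in the very round its trigger appears, so there is no idle waiting between consecutive agents, and that the reset rule $\cdot \gets p_i.c_2$ fires only once the symbol has been forwarded, so the chain is preserved end-to-end (in particular the forward and reflected waves do not overwrite each other on the same cell). Once unit-speed propagation is certified, the $O(n)$ bound follows directly from the length bound $|L_i| + |S_i| \le n$.
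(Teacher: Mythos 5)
Your proof is correct and follows essentially the same route as the paper's: a forward direction wave from $l_h$ to $s_t$ at one agent per round, followed by a reflected reply wave back to $l_h$, giving at most $2n-2$ rounds over a path of length at most $n$. Your additional care about pipeline stalls and the reset rule is a reasonable elaboration but does not change the argument.
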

\begin{proof}
	Consider the worst-case in which the direction mark traverses a $n$-length path and a special mark ``\checkmark'' bounces off the other end of the path and returns to at the head. This journey takes at most $2n-2$ rounds, during which an agent $p_i$, $1\le i \le n$, emits the direction mark to $p_{i+1}$ and ``\checkmark'' to $p_{i-1}$, with the exception of the two endpoints of the path. \qed
\end{proof}

\subsection{Draw a route map}
\label{sec:Draw}

This local procedure generates a map and computes the shortest possible route with the fewest turns, aiming for the lowest cost and energy consumption. On the square grid, the most efficient way to accomplish this is to draw a rout of a single turn, such as L-shaped routs. For the purpose of connectivity preservation, it can be demonstrated that there exist some worst-case routes in which the line $L_i$ may disconnect while travelling towards the tail of $S_i$. Essentially, this can be seen in a route where the Manhattan distance between the the line tail $l_t$ and the segment tail $s_t$ is $\Delta(l_t, s_t) \ge |L_i|$, for additional information, see \cite{AMP2020}.

Thus, this distance $\Delta(l_t, s_t)$ is important in determining whether to take an L-shaped route directly to $s_t$ or to go through an intermediary agent of $S_i$ who passes through two L-shaped routes. From our distributed perspective, the Manhattan distance $\Delta(l_t,s_t)$ cannot be computed in a straightforward manner due to several challenges, such as individuals with constant local memory and limited computational power. Below, \textsf{DrawMap} addresses these challenges by using $L_i$  agents as (1) a distributed binary counter for calculating the distance and (2) a distributed memory for storing local directions of agents, which collectively draw the route map.

This sub-phase computes the Manhattan distance $\Delta(l_t,s_t)$ between the line tail $l_t$ and the segment tail $s_t$, by exploiting \textsf{ComputeDistance} in which the line agents implement a distributed binary counter. First, the head $l_h$ broadcasts ``$\textcircled{\scriptsize C}$'' to all active agents, asking them to commence the calculation of the distance. Once a segment agent $p_i$ observes ``$\textcircled{\scriptsize C}$'', it emits one increment mark ``$\oplus$'' if its local direction is cardinal or two sequential increment marks if it is diagonal. The ``$\oplus$'' mark is forwarded from $p_i$ to $p_{i-1}$ back to the head $l_h$. Correspondingly, the line agents are arranged to collectively act as a distributed binary counter, which increases by 1 bit per increment mark, starting from the least significant at $l_t$.  

When a line agent observes the last ``$\oplus$'' mark, it sends a special mark ``$\textcircled{\scriptsize 1}$'' if $\Delta(l_t,s_t) \le |L_i| $ or ``$\textcircled{\scriptsize 2}$'' if $\Delta(l_t,s_t) > |L_i| $ back to $l_h$. As soon as $l_h$ receives ``$\textcircled{\scriptsize 1}$'' or ``$\textcircled{\scriptsize 2}$'', it calls \textsf{CollectArrows} to draw a route that can be either heading directly to $s_t$ or passing through the middle of $S_i$ towards $s_t$. In \textsf{CollectArrows}, $l_h$ emits ``\counterplay'' to announce the collection of local directions (arrows) from $S_i$. When ``\counterplay'' arrives at a  segment agent, it then propagates its local direction stored in $c_4$ back towards $l_h$. Then, the line agents distribute and rearrange $S_i$'s local directions via several primitives, such as cancelling out pairs of opposite directions, priority collection and pipelined transmission. Finally, the remaining arrows cooperatively draw a route map for $L_i$ (see Definition \ref{def:route}). The following lemma shows that this procedure calculates $\Delta(l_t, s_t)$ in linear time. Below, we give more details of \textsf{DrawMap}. 

\begin{definition}[A route] \label{def:route} 
	A route is a rectangular path $R$ consisting of a set of cells $R = [c_1, \ldots, c_{|R|}]$ on $\mathbb{Z}^2$, where $c_i$ and $c_{i+1}$ are two cells adjacent vertically or horizontally, for all $1 \le i \le |R|-1$. Let $C$ be a system configuration, $C_R$ denotes the configuration of $R$ where $C_R \subset C$ defined by $[c_1, \ldots, c_{|R|}]$. 
\end{definition}

\subsubsection{Distributed Binary Counter.}

Due to the limitations of this model, individual agents cannot calculate and keep non-constant numbers in their state. Alternatively, the line $L_i$ of $k = 2^i$ agents can be utilised as a distributed binary counter (similar to a Turing machine tape) which is capable to store up to $2^k-1$ unsigned values. This $k$-bit binary counter supports increment which is the only operation we need in this procedure. Each agent's counter state $c_5$ is initially ``$\cdot$''  and can then hold a bit from $\{0,1\}$. The line tail $l_t$ denotes the least significant bit of the counter.  An increment operation is performed as follows: Whenever a line agent $p_i$ detects $p_{i+1}$ showing an increment mark ``$\oplus$'', $p_i$ switches counter component $c_5$ from ``$\cdot$'' or 0 to 1 and destroys the ``$\oplus$'' mark. If $p_i$ holds 1 in $c_5$, it flips 1 to  0 and redirects the increment mark ``$\oplus$'' to $p_{i-1}$ (i.e. update the transmission state $c_2$ to ``$\oplus$''). See an implantation of this counter in Figure \ref{fig:Counter}.

\begin{figure}[hbt!]
	\centering 
	\includegraphics[scale=0.72]{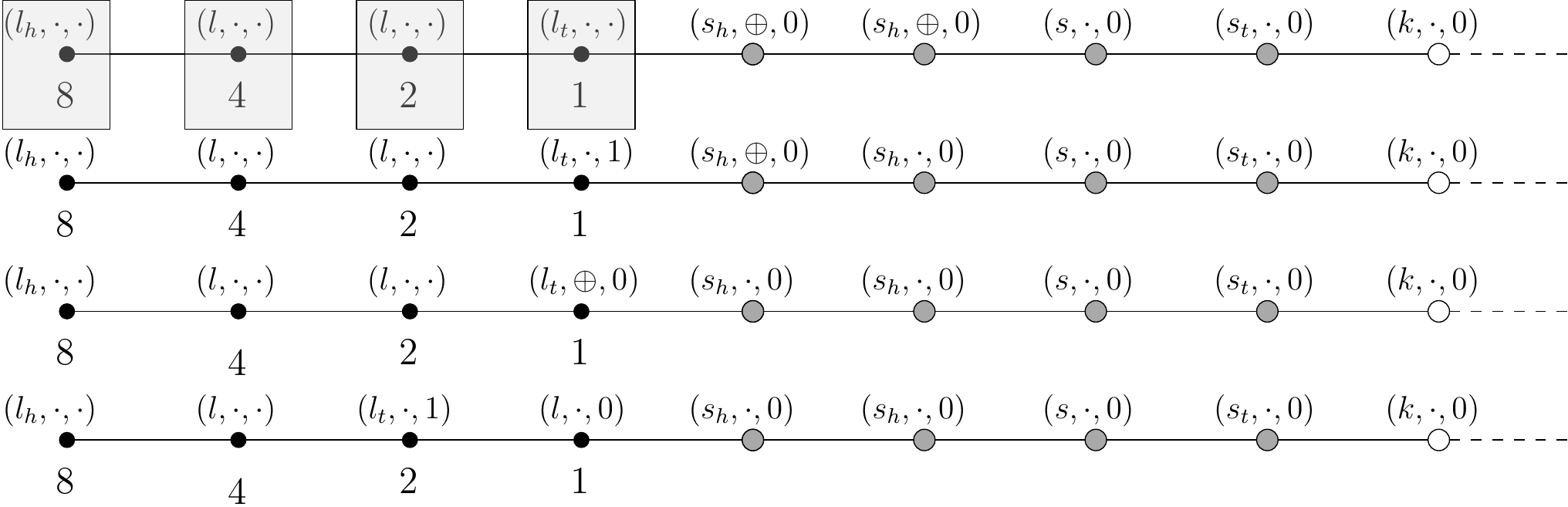}
	\caption[An implantation of 4-bit line counter $L_i$.]{A 4-bit line counter $L_i$. Agents of $L_i$ and $S_i$ are depicted by black grey dots, respectively. The state of an agent is $(c_1, c_2, c_5)$ denoting $c_1$ the label, $c_2$ transmission and $c_5$ counter components, omitting others with no effect. Each shaded area shows a corresponding decimal number.  Top: the counter has a decimal value of 0. 2nd: an increment of 1. 3rd: the line tail $l_t$ flips state $c_5$ from 1 to 0 and updates $c_2$ with ``$\oplus$''. Bottom: the counter increased by 1 corresponding to a decimal value 2.}
	\label{fig:Counter}
\end{figure}

\subsubsection{\textsf{ComputeDistance} procedure}

Initially, the head $l_h$ emits a special mark ``$\textcircled{\scriptsize C}$'' to all active agents, asking them to commence the calculation of the Manhattan distance $\Delta(l_t, s_t)$ between the line tail $l_t$ and the segment tail $s_t$. Whenever a segment agent $p_i$ (of label $s_h, s$ or $ s_t$) observes $p_{i-1}$ with ``$\textcircled{\scriptsize C}$'', it performs one of two transitions: (1) It updates transmission state to $c_2 \gets \oplus$ if its local direction stored in $c_4$ is cardinal (horizontal or vertical) from $\{ \rightarrow, \leftarrow, \uparrow, \downarrow \}$, (2) if $c_4$ holds a diagonal direction from $\{ \nwarrow, \nearrow, \swarrow, \searrow \}$, it receptively updates the transmission and waiting states, $c_2$ and $c_3$, to ``$\oplus$''. Eventually, the segment head $s_h$ produces the last special increment mark ``$\oplus^{\prime}$''. In principle, any diagonal direction between two cells in a square grid can increase the distance by two (in the Manhattan distance), whereas horizontal and vertical directions always increase it by one.
 
As a result, all increment marks initiated by segment agents are transmitted backwards to the counter $L_i$, in the same way that the propagation of an end mark is described in \textsf{DefineSeg}. Hence, the binary counter increases by 1 bit each time it detects ``$\oplus$'', starting from the least significant bit stored in $l_t$. Because of transmission parallelism, the binary counter may increase by more than one bit in a single round. When a line agent $p_i$ sees predecessor with the last increment mark ``$\oplus^{\prime}$'', $p_i$ passes ``$\textcircled{\scriptsize 1}$'' towards the line head $l_h$. This mark ``$\textcircled{\scriptsize 1}$''  is altered to ``$\textcircled{\scriptsize 2}$'' on its way to $l_h$ only if it passes a line agent of a counter state $c_5 = 1$, otherwise it is left unchanged. Eventually, the head $l_h$ observes either ``$\textcircled{\scriptsize 1}$'', by which it calls \textsf{CollectArrows} procedure to draw a route map directly to the tail $s_t$ of $S_i$, or ``$\textcircled{\scriptsize2}$'', by which it calls \textsf{CollectArrows} to push via a middle agent $s$ towards $s_t$. We provide Algorithm \ref{alg:ComputeDistance} of the \textsf{ComputeDistance} procedure below.

\begin{algorithm} [hbt!] 
	\caption{\textsf{ComputeDistance($L_i, S_i$)}} 
	\label{alg:ComputeDistance} 
	$ S  = ( p_1, \ldots, p_{|S|}) $ is a Hamiltonian shape\\
	Initial configuration: a straight line $L_i$ and a segment $S_i$ labelled as in Figure \ref{fig:Counter} \\
	\SetAlgoLined
	\DontPrintSemicolon
	\vspace{7pt} 
	1. The line head $l_h$ propagates counting mark $\textcircled{\scriptsize C}$ along $L_i$ and $S_i$

	2. Once $\textcircled{\scriptsize C}$ arrives at the segment tail $s_t$, a segment agent acts as follows: 
	
	3. $s_t$ sends one increment $\oplus$ back to $l_h$ if its direction is cardinal or two $\oplus$ if diagonal 
	
	\tcp{pipelined transmission}
	
	4a. $s$ observes $\oplus$, sends one increment $\oplus$ back to $l_h$ if its direction is cardinal or two $\oplus$ if diagonal 
	
	4b. $s_l$ observes $\oplus$, sends one increment $\oplus^{\prime}$ back to $l_h$ if its direction is cardinal or two $\oplus^{\prime}$ if diagonal 
	
	5. The distributed counter $L_i$ increases by 1 bit each time it receives  $\oplus$
	
    6. A line agent observes the last $\oplus^{\prime}$ coming to $L_i$, it sends  a mark $\textcircled{\scriptsize 1}$ back to $l_h$ 
    
    7a. Each line agent observes $\textcircled{\scriptsize 1}$ and has 1 bit, it passes $\textcircled{\scriptsize 2}$ towards $l_h$ 
    
    7b. Each line agent observes $\textcircled{\scriptsize 1}$ and has 0 bit, it passes $\textcircled{\scriptsize 1}$ towards $l_h$ 

	7c. Each line agent observes $\textcircled{\scriptsize 2}$, it passes $\textcircled{\scriptsize 2}$ towards $l_h$
	
	\tcp{Manhattan distance $\Delta \le i $}
	8a. When $l_h$ sees $\textcircled{\scriptsize 1}$, it calls  \textsf{CollectArrows} to draw one L-shaped route 
	
	\tcp{Manhattan distance $\Delta >  i $}		
	8b. Otherwise, $l_h$ sees $\textcircled{\scriptsize 2}$ and calls \textsf{CollectArrows} to draw two L-shaped route 
\end{algorithm}

Let $\Delta(l_t, s_t)$ denote the Manhattan distance between the line tail $l_t$ and the segment tail $s_t$. The following lemma shows that this procedure calculates $\Delta(l_t, s_t)$ in linear time. 

\begin{lemma} \label{lem:CompDis}
	\textsf{ComputeDistance} requires $O(|L_i|)$ rounds to compute $\Delta(l_t, s_t)$.
\end{lemma}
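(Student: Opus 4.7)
The plan is to decompose the execution of \textsf{ComputeDistance} into four concurrent sub-stages and bound each one separately, then use the fact that they pipeline to show the whole thing is $O(|L_i|)$. Since the segment $S_i$ is defined to have $|S_i| = |L_i| = 2^i$ agents, any bound of the form $O(|L_i| + |S_i|)$ already collapses to $O(|L_i|)$, which is the only fact I will rely on about the segment's length.

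First I would bound the forward propagation of the counting request: the head $l_h$ sets $c_2 \gets \textcircled{\scriptsize C}$, and by the same chain-propagation argument used in Lemma~\ref{lem:DefNxSg}, the mark reaches $s_t$ after at most $|L_i| + |S_i|$ rounds, hence in $O(|L_i|)$ rounds. Next I would argue about the backward flow of increment marks. Each segment agent, upon seeing $\textcircled{\scriptsize C}$, emits either one increment $\oplus$ (cardinal local direction) or two increments $\oplus$ (diagonal local direction, using both $c_2$ and $c_3$). Hence at most $2|S_i|$ increments are generated in total, and since they travel through a single transmission channel they form a pipelined stream: the first $\oplus$ arrives at $l_t$ at most $|S_i|$ rounds after being emitted, and subsequent increments arrive at the rate of one per round, so the whole stream crosses into $L_i$ within $O(|L_i|)$ rounds.

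The third step, which I expect to be the delicate one, is to verify that the distributed binary counter inside $L_i$ can \emph{absorb} this pipelined stream without introducing extra asymptotic delay. The counter has $|L_i| = 2^i$ bits, and by the preceding paragraph the total number of increments is at most $2|S_i| = 2 \cdot 2^i$, so the stored value never exceeds $2|S_i| < 2^{|L_i|}$ and the counter never overflows. Each $\oplus$ propagates leftwards through a ripple of $1 \to 0$ transitions until it hits the first $0$; in the worst case a single increment uses $O(i)$ bit positions before being absorbed. Because bit position $j$ can host an in-flight $\oplus$ mark in at most one round at a time, collisions between successive increments can delay later increments by at most a constant per position. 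A standard pipelining argument (the last $\oplus$ is emitted by $s_h$ within $O(|L_i|)$ rounds of $l_h$ sending $\textcircled{\scriptsize C}$, and its absorption into the counter finishes within $O(|L_i|)$ additional rounds) then yields an $O(|L_i|)$ bound for the entire counting process.

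Finally, I would handle the return trip of the verdict mark: once a line agent witnesses the distinguished last increment $\oplus^{\prime}$, it emits either $\textcircled{\scriptsize 1}$ or $\textcircled{\scriptsize 2}$, which is then relayed leftwards along $L_i$, with each intermediate line agent possibly upgrading $\textcircled{\scriptsize 1}$ to $\textcircled{\scriptsize 2}$ based on its local bit $c_5$. This is a single hop-by-hop propagation along at most $|L_i|$ agents, contributing another $O(|L_i|)$ rounds. Summing the four phases gives the claimed $O(|L_i|)$ bound. The main obstacle will be the pipelining argument in the third step, specifically demonstrating that carry propagation inside the counter never stalls the incoming increment stream by more than a constant factor; I would formalise this by showing that at any round, along the agents of $L_i$, the number of in-flight $\oplus$ marks plus the positions of agents currently flipping bits form a non-decreasing function of the agent index, so that successive increments can never overtake one another and the throughput remains one increment per round.
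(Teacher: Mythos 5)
Your proposal is correct and follows essentially the same route as the paper's proof: forward propagation of the counting mark, a pipelined backward stream of increment marks into the distributed binary counter, and a final relay of the $\textcircled{\scriptsize 1}$/$\textcircled{\scriptsize 2}$ verdict, each bounded by $O(|L_i|)$ rounds and summed. The only difference is that you explicitly justify the counter's ability to absorb one increment per round despite ripple carries (via a non-overtaking argument), a point the paper asserts implicitly under ``pipelined transmission'' without elaboration.
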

\begin{proof}
	Consider an input configuration labelled $(\overbrace{l_h, \ldots, l, \ldots, l_t}^{L_i}, \overbrace{s_h,\ldots, s, \ldots, s_t}^{S_i},$ $ k, \ldots, k)$, starting at a line head $p_1$ of label $l_h$, where $|L_i| = |S_i|$. We only show affected states in this proof. Initially, $l_h$ emits a counting mark ``$\textcircled{\scriptsize C}$'' by updating transmission state to $p_1.c_2 \gets \textcircled{\scriptsize C}$, then $l_h$ resets  transmission state to $ c_2 \gets \cdot $ in subsequent rounds.  Once an active agent $p_i$ in round $r_{j-1}$ (where $j \le 2|L_i|$) detects predecessor showing state $p_{i-1}.c_2 = \textcircled{\scriptsize C}$, it updates transmission state to $p_i.c_2 \leftarrow \textcircled{\scriptsize C}$ in $r_j$ and then resets $p_i.c_2 \gets \cdot$ in $r_{j+1}$.  Upon arrival of  ``$\textcircled{\scriptsize C}$'' at $s_t$, its predecessor changes transmission state to $c_2 \leftarrow\oplus$ and puts another increment mark in waiting state $c_3 \leftarrow\oplus$ if it stores a diagonal arrow in its local direction  $c_4$.
	
	Due to the goal of counting, the direction of $s_t$ is dropped. Each segment agent $p_i$ of label $s_h$ and $s$ observes a successor presenting state $p_{i+1}.c_2 = \oplus$ in round $r_{j-1}$, then the following transitions apply in $r_{j}$: (1) $p_i.c_2 \leftarrow \oplus$ if $p_{i+1}.c_2 \leftarrow \oplus$, (2) if $p_i.c_2 \leftarrow \oplus$ if $p_{i+1}.c_2 \leftarrow \cdot$ and $p_i.c_3\leftarrow \oplus$, (3) the head of segment $s_h$ sets  $p_i.c_2 \leftarrow \oplus^{\prime}$ if $p_{i+1}.c_2 \leftarrow \cdot$ and $p_i.c_3\leftarrow \oplus$ and (4)  $p_i.c_2 \leftarrow \cdot$ if $p_{i+1}.c_2 \leftarrow \cdot$ and $p_i.c_3\leftarrow \cdot$.
	
	Correspondingly, the line agents (of labels $l_h$, $l$ and $l_t$) behave as a binary counter described above and illustrated in Figure \ref{fig:Counter}. When a line agent $p_i$ detects ``$\oplus$'' in the state of $p_{i+1}$ in round $r_{j-1}$, it updates state based on one of theses transitions in round $r_{j-1}$: (1) $p_i.c_5 \leftarrow 1$ if $p_i.c_5 \leftarrow \cdot$ or $p_i.c_5 \leftarrow 0$ or (2) $p_i.c_5 \leftarrow 0$ and $p_i.c_2 \leftarrow \oplus$  if  $p_i.c_5 \leftarrow 1$. In the case where the last increment mark ``$\oplus^{\prime}$'' detected by $p_i$ in round $r_{j-1}$, then $p_i$ updates state to $p_i.c_2 \leftarrow \textcircled{\scriptsize 1}$ in $r_j$. When $p_{i-1}$ observes $\textcircled{\scriptsize 1}$, then it updates states to either (1) $p_{i-1}.c_2 \leftarrow \textcircled{\scriptsize 1}$ if $p_{i-1}.c_5 = 0$ or (2) $p_{i-1}.c_2 \leftarrow \textcircled{\scriptsize 2}$ if $p_{i-1}.c_5 = 1$. Thus, the mark ``$\textcircled{\scriptsize 2}$'' is sent back to the head $l_h$, which finally sees either `$\textcircled{\scriptsize 1}$'' or ``$\textcircled{\scriptsize 2}$'' and acts appropriately (calls \textsf{CollectArrows} procedure). The counter size is sufficient to calculate $\Delta(l_t, s_t)$ because the since the worst-case would distance is $|L_i| - 2$. 
	
	Now, we analyse the cost of communication of this procedure in a number of rounds. First, the counter mark ``$\textcircled{\scriptsize C}$'' goes on a journey that takes $t_1 = 2|L_i|= O(|L_i|)$ rounds. That is, the pipelined transmission of increment marks requires at most $t_2 = O(|L_i|)$ parallel rounds of communication. Moreover, the marks ``$\textcircled{\scriptsize 1}$'' or ``$\textcircled{\scriptsize 2}$'' travel to the head $l_h$ within at most $t_3 = O(|L_i|)$. Altogether, the total running time is bounded by $t  = t_1 +  t_2 + t_3 = O(|L_i|)$ parallel rounds. \qed
\end{proof}

\subsubsection{\textsf{CollectArrows} procedure}

Informally, the distance obtained from the \textsf{ComputeDistance} procedure can be (1) equal or less than the line length $|L_i|$  ($l_h$ observes this mark ``$\textcircled{\scriptsize 1}$'') or (2) greater than $|L_i|$ ($l_h$ observes ``$\textcircled{\scriptsize 2}$'').  In case (1), it propagates a special collection mark ``\counterplay'' through all active agents until it reaches the segment tail $s_t$. When ``\counterplay'' arrives, $s_t$ broadcasts its local arrow in $c_4$ back to $l_h$ via active agent transmission states. This journey accomplishes the following: (a) Gathers arrows similar to $s_t$ and puts them in priority transmission. (b) Eliminates pairs of opposite arrows and replaces them with a hash mark ``\#''. (c) Arranges the arrows on $L_i$'s distributed memory. In case (2), $l_h$ emits a special mark  ``$ \textcircled{\scriptsize M} $''  to  $s_h$, defining a midpoint on $S_i$ through which the line $L_i$ passes towards $s_t$.

Now, $s_h$ propagates two marks down $s_t$, a fast mark ``$ \textcircled{\tiny m1} $'' is transmitted every round and a slow mark moves three rounds slower ``$ \textcircled{\tiny m2} $''. The fast mark ``$ \textcircled{\tiny m1} $''  bounces off $s_t$, where both ``$ \textcircled{\tiny m1} $'' and ``$ \textcircled{\tiny m2} $'' meet in a $S_i$ middle agent $p_j$, which changes label to $s_t^{\prime}$ and a successor $p_{j+1}$ switches to $s_h^{\prime}$. This temporally divides $S_i$ into two segments, $S_i^1= s_h, \ldots,  s_t^{\prime}$ and $S_i^2 = s_h^{\prime}, \ldots , s_t $. The middle agent $s_t^{\prime}$ propagates ``$ \textcircled{\scriptsize M} $'' to tell $l_h$ that a midpoint has been identified. Case (1) is then repeated twice to collect arrows from  $S_i^1$ and $S_i^2$ and distribute them into the line agents (distributed memory). After that,  \textsf{Push($S$)} begins. Algorithm \ref{alg:CollectArrows} presents the pseudocode that briefly formulates this procedure.

\begin{algorithm}[hbt!] 
	\caption{\textsf{CollectArrows($L_i, S_i$)}} 
	\label{alg:CollectArrows} 
		 Input: a straight line $L_i$ and a segment $S_i$
		 \SetAlgoLined
		 \DontPrintSemicolon
		 \vspace{7pt} 
		 
		 \tcp{{\small priority and pipelined transmission, see text for details}}
		\underline{(A) Line head $l_h$ observes $\textcircled{\scriptsize 1}$}
		
		1.  $l_h$ propagates collection mark \counterplay 
		
		 2. Each active agent $p_{i}$ emits \counterplay \hspace{1px} to $p_{i+1}$ 
		  
		3. $s_t$ observes $\textcircled{\scriptsize 1}$ and propagates its direction $d$ in $c_4$, $c_2 \leftarrow c_4$ 
		
		 4. Each segment agent $p_{i}$ passes a direction to $p_{i-1}$ 
		
		5. Distribute directions into the line agents
	    
	    6. Rearrangement of directions
		
		7. \textsf{Push($S$)} begins \\
		
		 \vspace{7pt} 
		\underline{(B) Line head $l_h$ observes $\textcircled{\scriptsize 2}$}
		 
		 1.  $l_h$ propagates a midpoint mark $ \textcircled{\scriptsize M} $
		
		2. Each line agent $p_{i}$ broadcasts $ \textcircled{\scriptsize M} $ to $p_{i+1}$ 
		
		3. $s_h$ sees $ \textcircled{\scriptsize M} $, then emits fast  $ \textcircled{\tiny m1} $ and slow $ \textcircled{\tiny m2} $ waves down to  $s_t$
		
		4. $ \textcircled{\tiny m1} $ bounces off $s_t$  and meets $ \textcircled{\tiny m2} $ at middle agent $p_j$ with label changed to $s_t^{\prime}$
		
		5. $s_t^{\prime}$ propagates $ \textcircled{\scriptsize M} $ to $l_h$
		
		6. Once $l_h$ sees $ \textcircled{\scriptsize M} $ again, it goes to (A)   
\end{algorithm}

The following lemma proves the correctness and analysis of \textsf{CollectArrows}.

\begin{figure}[hbt!]
	\centering
	\includegraphics[scale=0.8]{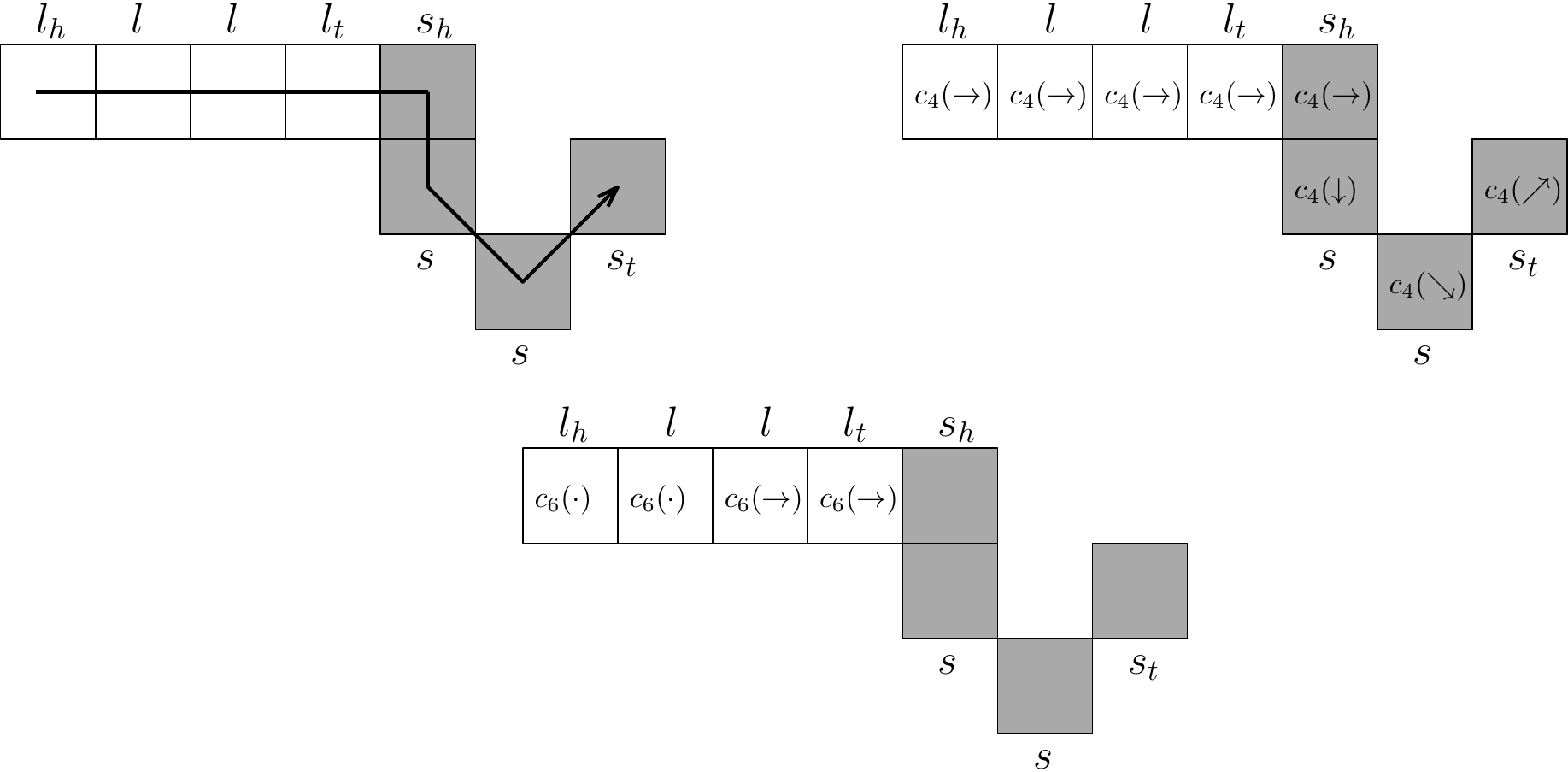}
	\caption[Drawing a map: A path and corresponding local arrows.]{Drawing a map: from top-left a path across occupied cells  and corresponding local arrows stored on state $c_4$ in top-tight, where the diagonal directions, ``$\searrow$'' and ``$\nearrow$'', are interpreted locally as, ``$\downarrow\rightarrow$'' and ``$\uparrow\rightarrow$''. The bottom shows a route map drawn locally on state $c_6$ of each line agent.}
	\label{fig:ArrowAndDirections}
\end{figure}

\begin{lemma} \label{lem:CollectArrows}
	The \textsf{CollectArrows} procedure completes within $O(|L_i|)$ rounds.
\end{lemma}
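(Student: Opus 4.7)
The plan is to decompose the procedure into its two branches (case (A) when $l_h$ observes $\textcircled{\scriptsize 1}$, case (B) when $l_h$ observes $\textcircled{\scriptsize 2}$) and bound the running time of each stage by $O(|L_i|)$, exploiting that $|S_i|=|L_i|=2^i$ so the total number of active agents at this point in phase $i$ is $2|L_i|$.

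For case (A), I would argue exactly as in the proof of Lemma \ref{lem:CompDis} that signals travel along the active path in a pipelined fashion. First, the collection mark \textsf{\counterplay} is propagated from $l_h$ along the $2|L_i|$ active agents, reaching $s_t$ in at most $2|L_i|-1 = O(|L_i|)$ rounds. Once $s_t$ fires, every segment agent places its local arrow from $c_4$ onto its transmission state $c_2$ on the round it sees \textsf{\counterplay} pass, so the stream of arrows is injected into the path in one contiguous burst of length at most $|S_i|$. Standard pipelining (each agent relays in the next round) then guarantees that all arrows arrive at and are absorbed by the line agents within another $O(|L_i|)$ rounds. The rearrangement primitives (priority transmission of arrows matching $s_t$'s direction, cancellation of opposite pairs into \#, and final placement into the map component $c_6$) are all local comparisons between consecutive transmission states, each introducing only a constant slow-down, so the whole collection-and-distribution finishes within $O(|L_i|)$ rounds.

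For case (B), the midpoint mark $\textcircled{\scriptsize M}$ first travels from $l_h$ to $s_h$ in $|L_i|+1 = O(|L_i|)$ rounds. The classical tortoise-and-hare computation then takes place entirely inside $S_i$: the fast wave $\textcircled{\tiny m1}$ advances one cell per round down to $s_t$ (reached in $|S_i|-1$ rounds), bounces back, and meets the slow wave $\textcircled{\tiny m2}$ (which moves one cell every three rounds) at a middle agent $p_j$; an easy calculation shows this rendezvous occurs within $O(|S_i|)=O(|L_i|)$ rounds and correctly identifies a cell at distance roughly $|S_i|/2$ from each end. The temporary relabelling of $p_j$ to $s_t'$ and $p_{j+1}$ to $s_h'$ is local, and then $s_t'$ propagates $\textcircled{\scriptsize M}$ back to $l_h$ in another $O(|L_i|)$ rounds. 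Finally, case (A) is invoked twice, once on each sub-segment $S_i^1$ and $S_i^2$; since each sub-segment has length at most $|L_i|$, each invocation costs $O(|L_i|)$ by the analysis above. Summing the constantly many $O(|L_i|)$ stages gives the claimed bound.

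The main obstacle I foresee is arguing rigorously that the rearrangement stage of case (A), which simultaneously performs priority reordering and opposite-pair cancellation while arrows are still being injected from $S_i$, does not cause any single arrow to be delayed by more than $O(|L_i|)$ rounds. My plan is to treat each arrow as a token whose position along the line is a monovariant: on every round a token either advances towards its destination cell in $c_6$, is annihilated by an opposite token, or yields its transmission slot to a higher-priority token that is itself advancing. Because at most $|S_i|=O(|L_i|)$ tokens are ever in transit and the line has $|L_i|$ slots of distributed memory, a standard potential-function argument (summing the remaining distances of all live tokens) shows the whole process drains in $O(|L_i|)$ rounds, which combined with the other stages completes the proof.
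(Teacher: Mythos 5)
Your proposal is correct and follows essentially the same decomposition as the paper's proof: case (A) versus case (B), pipelined propagation of the collection mark and arrows in $O(|L_i|)$ rounds, the fast/slow wave rendezvous for the midpoint, and two invocations of (A) on the halves, summed over a constant number of $O(|L_i|)$ stages. Your added potential-function argument for the rearrangement stage is a reasonable way to make the pipelining bound rigorous (the paper simply asserts it); just note that the number of tokens can be up to $2|S_i|$ rather than $|S_i|$, since each diagonal direction contributes two cardinal arrows, which is the reason the paper stores arrows in both $c_6$ and $c_7$ and still only changes the bound by a constant factor.
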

\begin{proof}
	Given an initial configuration defined in Lemma \ref{lem:CompDis}. Assume the Manhattan distance $\delta(l_t, s_t) \le |L_i|$. For simplicity, we prove (A) in algorithm \ref{alg:CollectArrows} showing only affected states. Once $l_h$ observes $\textcircled{\scriptsize 1}$, it emits a collection mark ``\counterplay'', which then transfers forwardly among active agents until it reaches $s_t$, similar to counting mark transmission described previously in Lemma \ref{lem:CompDis}. When  $s_t$ detects ``\counterplay'', it updates transmission state $c_2$ with its local direction held in  $c_4= d$; recall that $d$  is an arrow that locally shows where the Hamiltonian path comes in and out, $d \in \{\rightarrow, \leftarrow, \downarrow, \uparrow, \nwarrow, \nearrow, \swarrow,\searrow\}$.
	
	In what follows, we distinguish between cardinal $\{\rightarrow, \leftarrow, \downarrow,\uparrow\}$ and diagonal directions $\{ \nwarrow, \nearrow, \swarrow, \searrow\}$. Figure \ref{fig:ArrowAndDirections} shows how local arrows are assigned to agents according to the Hamiltonian path. For a cardinal local direction, $s_t$ updates transmission state to $c_2 \gets  d$ and marks local direction state with a star $c_4 \gets d^{\star}$, indicating that $d$ has been collected. A diagonal local direction between any two neighbouring cells on the two-dimensional square grid is made up of two cardinal arrows, such as $\nwarrow$ is composed of  $\uparrow$ and $\leftarrow$. In other words, an agent needs to move two steps to occupy an adjacent diagonal cell. For example, if  $s_t$ stores a diagonal direction in $c_4$, it puts $d^1$ on transmission $c_2 \gets d^1$, $d^2$ on waiting state $c_3 \gets d^2$,  and marks it with a star, $c_4 \gets d^{\star}$. Next round, the transmission state of $s_t$ resets $c_2 \gets \cdot$ if $c_3$ is empty or sets $c_2 \gets c_3$ if  $c_3$ contains an arrow.
		
	We now show the priority and pipelined collection of local arrows of $S_i$ (in Algorithm \ref{alg:CollectArrows}). Assume a direction (arrow) $d^{+}$ transmits from the segment tail $s_t$, travelling through transmission states via an active agent $p_{i+1}$ to $p_i$. When $d^{+}$ encounters an opposite arrow $d^{-}$ recorded in transmission state $p_i.c_2$, both are erased and replaced by the hash sign ``\#''.  If $d^{+}$ and $d^{+}$ are similar, both take priority in $c_2$. If  $d^{+}$ observes a perpendicular arrow $\perp d$, $d^{+}$ is placed in $c_2$ and $\perp d$ in waiting state $c_3$. For example, Figure \ref{fig:ExCollectArrow} depicts a configuration of $S_i$ consisting of 8 agents, the arrows of which are collected in Figure \ref{fig:CollectArrows}. Full details for the associated transitions per active agent are provided below. 
		
	Given a segment agent $p_i$ of label $s$ and $s_t$ in round $r_{j-1}$, where $j \le 2|L_i|$. Then we show how $p_i$ acts when the direction is either cardinal or diagonal. Consider $p_i$ of an uncollected cardinal direction $d_i$ observes $p_{i+1}$ showing a direction $d_{i+1}$, two directions $d_{i+1} (d_{i+1}^1d_{i+1}^2)$  or \# in transmission component $c_2$. Then, $p_i$ updates its state in $r_j$ as follows: (1) Set $d_{i+1}$ or $ d_{i+1}^1 $  in transmission $p_i.c_2 \leftarrow p_{i+1}.c_2$, put $d_{i}$ in waiting $p_i.c_3 \leftarrow p_i.c_4$ and mark it $p_i.c_4 \leftarrow d_{i}^{\star}$ if $d_{i}$ is perpendicular to $d_{i+1}$, such as $\rightarrow$ and $\uparrow$. (2) Set $p_i.c_2 \leftarrow $ \#, put $d_{i}$ in waiting $p_i.c_3 \leftarrow p_i.c_4$ and mark its local direction $p_i.c_4 \leftarrow d_{i}^{\star}$ if $d_{i}$ and $d_{i+1}$ are a pair of opposite arrows, such as $\uparrow$ and $\downarrow$.  (3) Set both directions $d_{i+1}$ and $d_{i}$ in transmission $p_i.c_2 \leftarrow d_{i+1} d_{i}$, resets $c_3 \leftarrow \cdot$ and  mark $d_{i}$ with a star $p_i.c_4 \leftarrow d_{i}^{\star}$ if $d_{i}$ and $d_{i+1}$ are a pair of same arrows, such as $\uparrow$ and $\uparrow$. When a cardinal direction is already collected $d_i^{\star}$,  $p_i$ sets $d_{i+1}$ or $ d_{i+1}^1 $  in transmission $p_i.c_2 \leftarrow p_{i+1}.c_2$. If $d_{i+1}$ ( or $d_{i+1}^1 $)  and $ c_3 = d_i$ are similar, then $p_i$ sets $p_i.c_2 \leftarrow d_{i+1}d_i $ (or $p_i.c_2 \leftarrow d_{i+1}^1 d_i $) and resets $p_i.c_3 \leftarrow \cdot$.	 If $p_{i+1}.c_2$ is empty, then $p_i$  puts waiting direction in transmission $p_i.c_2 \leftarrow p_i.c_4$ or rests $p_i.c_2 \leftarrow \cdot$, otherwise. 
	
	In the second case, $p_i$ holds an uncollected diagonal arrow $d_i(d_i^1d_i^2)$ in $r_{j-1}$, so it performs one of the following in $r_j$: (1) Set $d_{i+1}$ and $d_i^1$ in transmission $p_i.c_2 \leftarrow d_{i+1}d_i^1$, put $d_i^2$ in waiting $p_i.c_3 \leftarrow d_i^2$ and mark $d_{i}$ with a star $p_i.c_4 \leftarrow d_{i}^{\star}$ if $d_{i+1}$ and $ d_i^1$ (or $ d_i^2 )$ are similar, such as $\uparrow$ and $\nwarrow= (\uparrow\leftarrow)$. (2) Set $p_i.c_2 \leftarrow $ \#, put $d_i^2$ in waiting $p_i.c_3 \leftarrow d_i^2$ and mark the direction  $d_i^{\star} $ if $d_{i+1}$ is opposites to either  $ d_i^1$ or $ d_i^2 $, such as $\uparrow$ and $\swarrow =(\downarrow\leftarrow)$. If a diagonal arrow has been already collected $d_i^{\star}$, then $p_i$ sets $d_{i+1}$ or $ d_{i+1}^1 $  in transmission $p_i.c_2 \leftarrow p_{i+1}.c_2$. If $d_{i+1}$ (or $d_{i+1}^1 $)  and waiting direction $ c_3 = d_i$ are the same, then $p_i$ updates to $p_i.c_2 \leftarrow d_{i+1}d_i $ (or $p_i.c_2 \leftarrow d_{i+1}^1 d_i $)  and resets $p_i.c_3 \leftarrow \cdot$.	 If $p_{i+1}.c_2$ is empty then, $p_i$  puts waiting direction in transmission $p_i.c_2 \leftarrow p_i.c_4$ or rests $p_i.c_2 \leftarrow \cdot$, otherwise. 
	
	Meanwhile, the line agents receive the collected arrows and divide them among respective states as follows. Let $p_i$ denote a line agent, holding a map state $p_{i}.c_6 =\cdot$,  observes $p_{i+1}$ showing a direction $ d_{i+1}$ or a hash sign ``\#''. Then, $p_i$ acts accordingly: (1)  $p_{i}.c_6 \gets d_{i+1}$, (2) if $p_i$ is $l_h$ or sees $p_{i-1}$ with a map state $c_6= $\#, then  $p_{i}.c_6 \gets $\#. Whenever $p_{i}.c_6 \ne\cdot$ detects $p_{i+1}.c_2 = d$ or $p_{i+1}.c_2 =$\# , then $p_i$ updates state to  $d_{i+1}$ or ``\#"  if $p_{i-1}.c_6= \cdot$. 	Once the line tail $l_t$  of a non-empty map component detects $p_{i+1}.c_2  = \cdot$, it propagates a special mark ``\counterplay\checkmark'' via line agents towards $l_h$, announcing the completion of arrows collection. 
	
	Now, let us discuss (B) in algorithm \ref{alg:CollectArrows} in which $l_h$ observes ``$\textcircled{\scriptsize 2}$'', indicating the Manhattan distance $\delta(l_t, s_t) > |L_i|$. In reaction to this, $l_h$ emits the midpoint mark ``$ \textcircled{\scriptsize M} $'' forwardly down the line agents towards $s_h$. Once $s_h$ detects ``$ \textcircled{\scriptsize M} $'', it emits two waves via the segment, fast ``$ \textcircled{\tiny m1} $'' and slow ``$ \textcircled{\tiny m2} $''. The fast wave ``$ \textcircled{\tiny m1} $'' moves from $p_{i}$ to $p_{i+1}$ every round, while the slow wave ``$ \textcircled{\tiny m2} $'' passes every three rounds. In this way, the fast wave ``$ \textcircled{\tiny m1} $'' bounces off $s_t$ and meets `$ \textcircled{\tiny m2} $'' at a middle agent $p_i^{\prime}$ of $S_i$ which updates label to $s_t^{\prime}$, and $p_{i+1^{\prime}}$ changes label to $s_h^{\prime}$ as well. See a demonstration in Figure \ref{fig:FastSlowWaves}. Consequently, $S_i$ is temporarily divided into two halves $S_i^1$ and $S_i^2$ labelled:
	\begin{align*}
		(\ldots, \overbrace{s_h, \ldots, s, \ldots, s_t^{\prime}}^{S_i^1}, \overbrace{s_h^{\prime},\ldots, s, \ldots, s_t}^{S_i^2},\ldots).
	\end{align*}
    Now,  $s_t^{\prime}$ emits the ``$ \textcircled{\scriptsize M} $'' mark back to $l_h$ via transmission states, from $p_i$ to $p_{i-1}$. Upon arrival of ``$ \textcircled{\scriptsize M} $'', $l_h$ invokes the sub-procedure (A) to begin collection on the first half $S_i^1$ and \textsf{Push($S$)} to move towards $s_t^{\prime}$, after which $l_h$ calls (A) again to travel into $s_t$.
		
	We argue that the line $L_i$ always has sufficient memory to store all collected arrows. The Manhattan distance will always be $\delta(l_t, s_t) > |L_i|$ if the segment $S_i$ has at least one diagonal connection. Consider the worst-case scenario of a diagonal segment in which each agent $p_i$ gains a local diagonal direction at a cost of two cardinal arrows.  Recall that each agent can store two arrows in its state, in $c_6$ and $c_7$. Given that, in the worst-case the segment contains a total of $2|S_i|$ local arrows. Thus, by applying (A) twice in each half of $S_i$, each single arrow of $S_i$ will find a room in $L_i$.
	
	We now calculate the running time of the  \textsf{CollectArrows($L_i, S_i$)} procedure on a number of rounds. Starting from steps 1 and 2 of (A), the ``\counterplay'' mark takes a journey from $l_h$ to $s_t$ requiring at most $t_1 = |L_i| + |S_i|= O(|L_i|)$ rounds. Then, the pipelined collection and rearrangement of arrows in steps 3-6,  require at most a number of parallel rounds equal asymptotically to the length of $|S_i| + |L_i|$, namely $t_2 = O(|L_i|)$. Moreover, the cost of ``\counterplay\checkmark''  transmission takes time $t_3 =|L_i|$ rounds. In (B), the propagation of  ``$ \textcircled{\scriptsize M} $'' costs $t_4 =|L_i|$, another cost $t_5 =3|S_i|$ is preserved for (1) and (2), which is the communication of fast ``$ \textcircled{\tiny m1} $'' and slow ``$ \textcircled{\tiny m2} $'' and the return of ``$ \textcircled{\scriptsize M} $'' to the head, respectively. Hence, (A) costs at most $t_A = t_1 + t_2 + t_3  = O(|L_i|)$ parallel rounds of communication, whereas (B) requires at most $t_B = t_4 + t_5  = O(|L_i|)$. The same bound holds in the worst-case by applying (A) twice. Therefore, this procedure requires a total number of at most $T = 2t_A + t_B = O(|L_i|)$ parallel rounds to draw a rout map.	 \qed
\end{proof}

\begin{figure}[hbt!]
	\centering
	\includegraphics[scale=0.8]{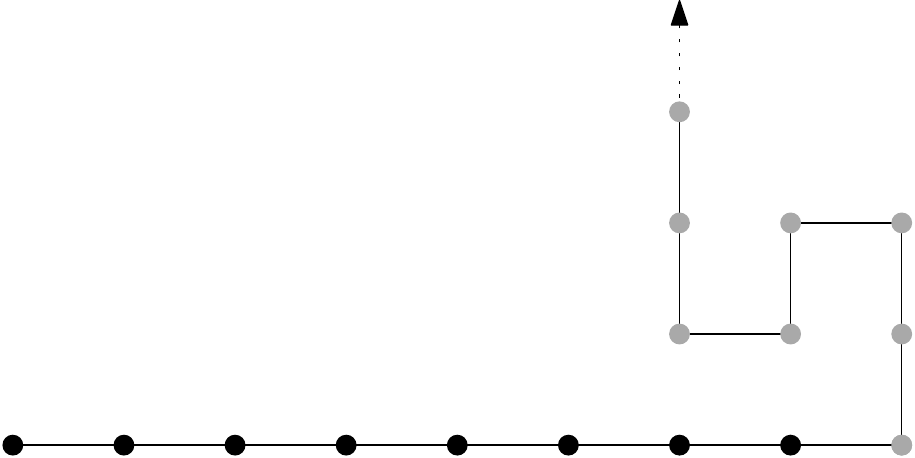}
	\caption{A configuration of $L_i$  (black dots)  and $S_i$ (grey dots).}
	\label{fig:ExCollectArrow}
\end{figure}
\begin{figure}[hbt!]
	\centering
	\includegraphics[scale=0.60]{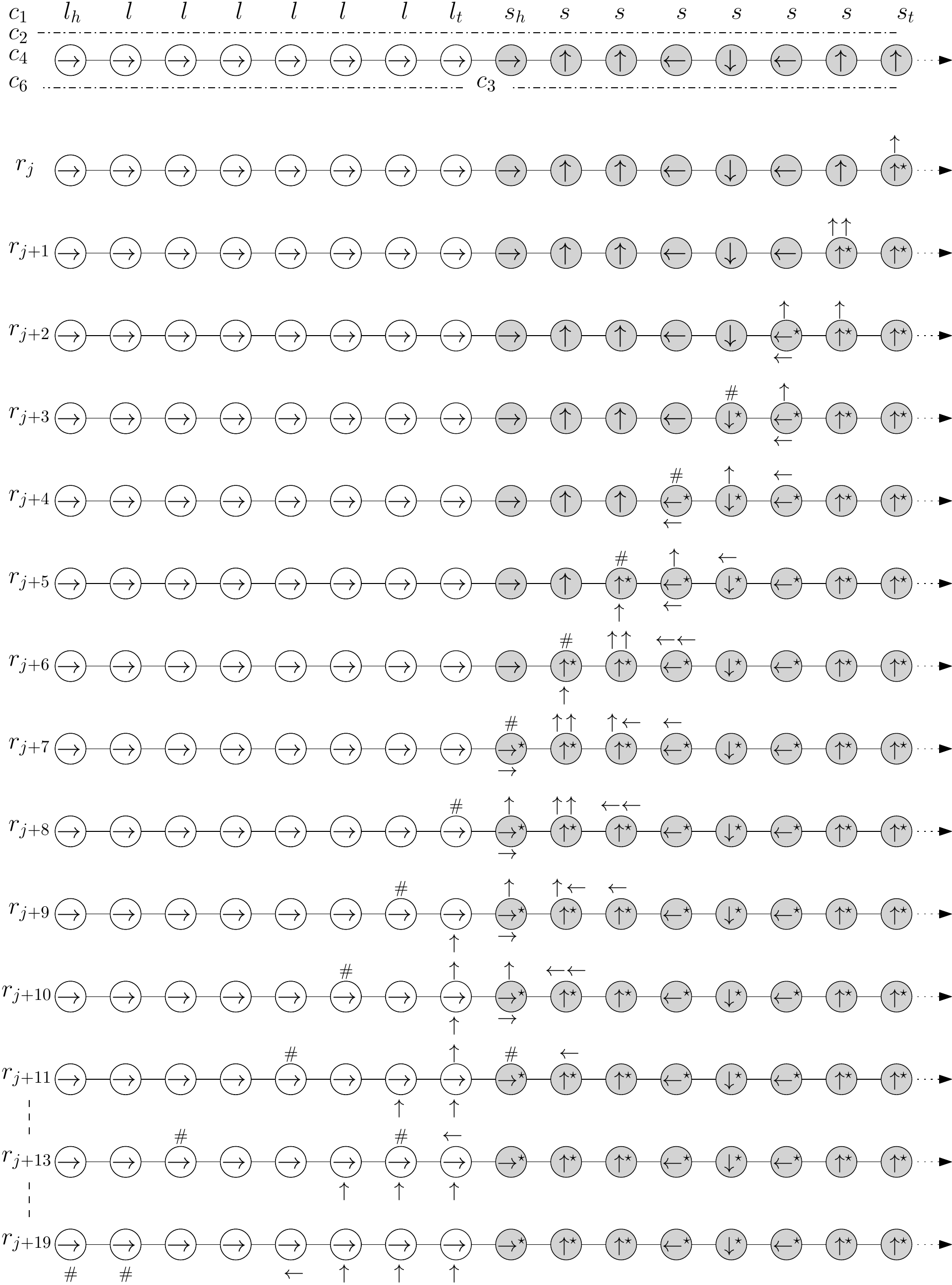}
	\caption[An implementation of the arrows collection on the shape in Figure \ref{fig:ExCollectArrow}.]{An implementation of the arrows collection on the shape in Figure \ref{fig:ExCollectArrow}. For better visibility, we represent  $L_i$ and $S_i$ as a tab, where white nodes depict $L_i$ and greys indicate $S_i$. Topmost shape shows that each agent represents its local direction $c_4$ inside nodes, label $c_1$ and transmission $c_2$ above nodes, waiting $c_3$ (only for segment agents) and map state $c_6$ (only for line agents) below nodes. The process starts from round $r_j$ downwards. See Lemma \ref{lem:CollectArrows} for a full description.}
	\label{fig:CollectArrows}
\end{figure}
\begin{figure}[hbt!]
	\centering
	\includegraphics[scale=0.85]{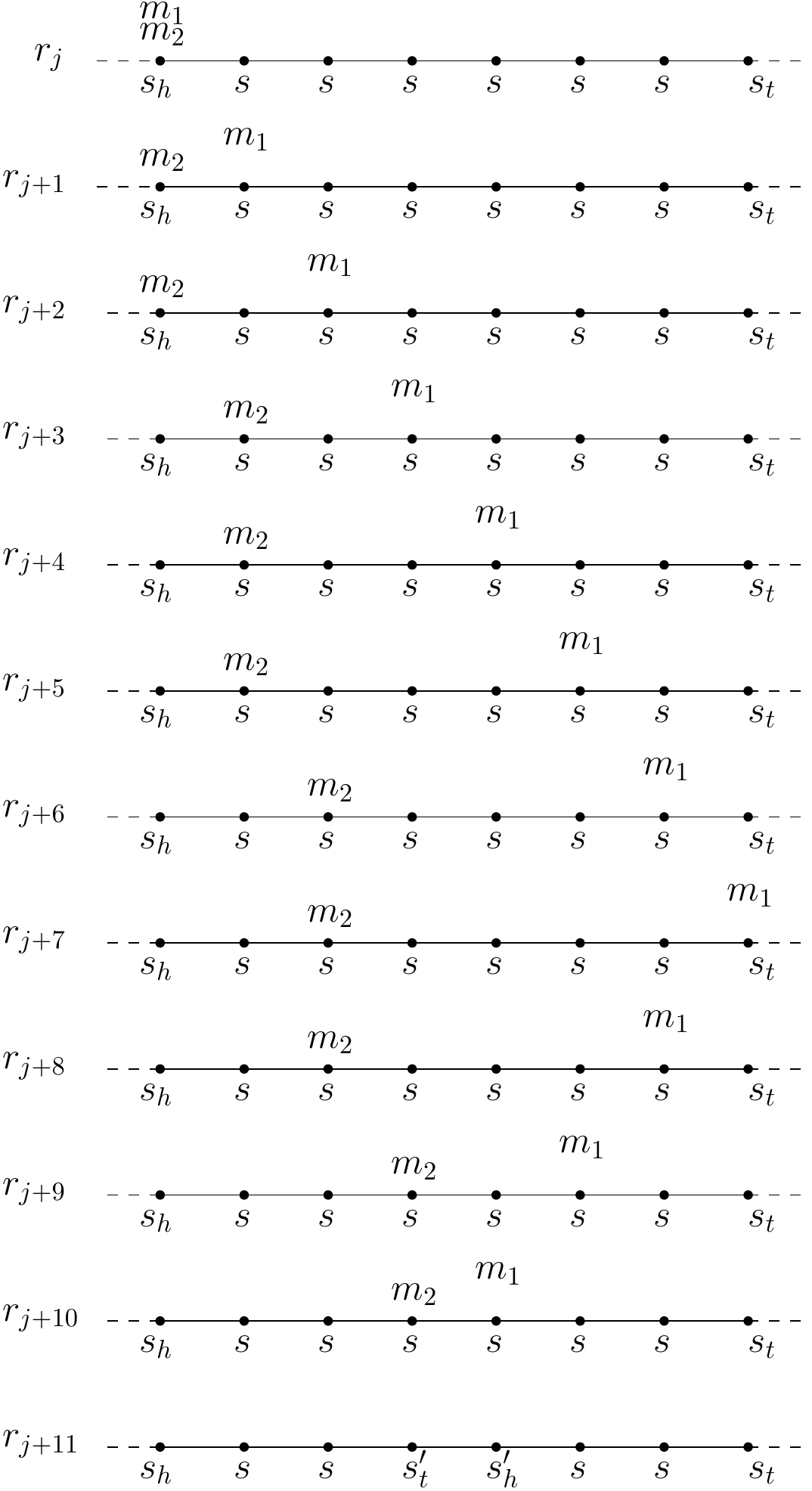}
	\caption[A fast ``$ \textcircled{\tiny m1} $'' and slow ``$ \textcircled{\tiny m2} $'' wave meeting at the middle of $S_i$ of 8 agents.]{A fast ``$ \textcircled{\tiny m1} $'' and slow ``$ \textcircled{\tiny m2} $'' wave meeting at the middle of $S_i$ of 8 agents. Observe that ``$ \textcircled{\tiny m1} $'' moves every round, while ``$ \textcircled{\tiny m2} $'' is three rounds slower.}
	\label{fig:FastSlowWaves}
\end{figure}

Finally, \textsf{ComputeDistance} and \textsf{CollectArrows} procedures completes the \textsf{DrawMap} sub-phase. By Lemmas \ref{lem:CompDis} and \ref{lem:CollectArrows}, we conclude that:

\begin{lemma} \label{lem:DrawMap}
	\textsf{DrawMap} draws a map within $O(|L_i|)$ rounds.    
\end{lemma}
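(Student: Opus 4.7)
The plan is to treat this lemma as an immediate composition of the two sub-procedures whose costs have already been established. Specifically, \textsf{DrawMap} is nothing more than a sequential call of \textsf{ComputeDistance} followed by \textsf{CollectArrows}, with the second procedure being triggered by the marks (``$\textcircled{\scriptsize 1}$'' or ``$\textcircled{\scriptsize 2}$'') returned to $l_h$ by the first. I would first recall this two-phase structure explicitly and observe that the output of \textsf{ComputeDistance} (namely, which of the two marks the head sees) is exactly the information needed by \textsf{CollectArrows} to choose between drawing a single L-shaped route directly to $s_t$ or splitting $S_i$ at a midpoint and drawing two L-shaped routes.

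Next, I would invoke Lemma \ref{lem:CompDis} to bound the cost of the first phase by $O(|L_i|)$ rounds, noting that this already accounts for the pipelined transmission of ``$\textcircled{\scriptsize C}$'' down the path, the counting increments propagating back to the distributed counter, and the return of ``$\textcircled{\scriptsize 1}$'' or ``$\textcircled{\scriptsize 2}$'' to $l_h$. I would then invoke Lemma \ref{lem:CollectArrows} to bound the second phase, also by $O(|L_i|)$ rounds, even in the worst case where the procedure must be applied twice on the two halves of $S_i$ produced by the midpoint-splitting fast/slow wave. Summing the two contributions yields $O(|L_i|) + O(|L_i|) = O(|L_i|)$.

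The main step is simply to justify that no additional synchronization overhead is incurred at the boundary between the two sub-procedures: \textsf{CollectArrows} begins exactly when $l_h$ reads the mark produced by \textsf{ComputeDistance}, so there is no waiting period, and no agent's state from the counting phase interferes with arrow collection (the counter bits in $c_5$ are not needed afterwards, and the transmission component $c_2$ has been reset). With this observation the composition is clean and the bound follows. I do not anticipate any real obstacle here; the lemma is essentially a corollary of Lemmas \ref{lem:CompDis} and \ref{lem:CollectArrows}, and the only care needed is to record that $|S_i| = |L_i|$ so that both $O(|S_i|)$ and $O(|L_i|)$ bounds appearing in the sub-lemmas collapse into the single target $O(|L_i|)$.
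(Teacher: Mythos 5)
Your proposal matches the paper's argument exactly: the paper also obtains this lemma as an immediate corollary of Lemmas \ref{lem:CompDis} and \ref{lem:CollectArrows}, since \textsf{DrawMap} is precisely the sequential composition of \textsf{ComputeDistance} and \textsf{CollectArrows}, each bounded by $O(|L_i|)$ rounds. Your additional remarks on the clean hand-off between the two sub-procedures and on $|S_i| = |L_i|$ are correct and only add rigor to what the paper states without elaboration.
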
  

\subsection{Push the next segment $S_i$}
\label{sec:Push}

Unlike all previous sub-phases, the transformation now allows individuals to perform line movements on the grid, taking advantage of their linear-strength pushing mechanism. That is, a straight line $L_i$ of $2^i$ agents occupying a column or row of $2^i$ consecutive cells on the square grid can be pushed in a single step depending on its orientation in parallel vertically or horizontally in a single-time step. Furthermore, $L_i$ has the ability to change direction or turn from vertical to horizontal and vice versa. 

A variety of obstacles must be overcome in order to translate the global coordinator of line moves into a system of homogeneous agents capable of only local vision and communication. One of the most essential challenges is timing: an individual agent moving the line must know when to start and stop pushing. Otherwise, it may disconnect the shape and break the connectivity-preservation requirement. Further, the line may change direction and turn around while pushing; hence, it must have some kind of local synchronisation over its agents to ensure that everyone follows the same route and no one is pushed off. Failure to do so may result in a loss of connectivity, communication, or the displacement of other agents in the configuration. Moreover, pushing a line does not necessarily traverse through free space of a Hamiltonian shape; consequently, a line may walk along the remaining configuration of agents while ensuring global connectivity at the same time. However, we were able to address all of these concerns in \textsf{Push}, which will be detailed below.

After some communication, $l_h$ observes that $L_i$ is ready to move and can start \textsf{Push} now. It synchronises with $l_t$ to guide line agents during pushing. To achieve this, it propagates fast ``$ \textcircled{\tiny p1} $'' and slow ``$ \textcircled{\tiny p2} $'' marks along the line, ``$ \textcircled{\tiny p1} $'' is transmitted every round and ``$ \textcircled{\tiny p2} $'' is three rounds slower (shown early in \textsf{DrawMap}). The ``$ \textcircled{\tiny p1} $'' mark reflects at $l_t$ and meets ``$ \textcircled{\tiny p2} $'' at a middle agent $p_i$, which in turn propagates two pushing signals ``$ \textcircled{\tiny P} $'' in either directions, one towards $l_h$ and the other heading to $l_t$. This synchronisation liaises $l_h$ with $l_t$ throughout the pushing process, which starts immediately after ``$ \textcircled{\tiny P} $'' reaches both ends of the line at the same time. Recall the route map has been drawn starting from $l_t$, and hence, $l_t$ moves simultaneously with $l_h$ according to a local map direction $ \hat{a} \in A$ stored in its map component $c_6$. 

Through this synchronisation, $l_t$ checks the next cell $(x,y)$ that $L_i$ pushes towards and tells $l_h$, whether it is empty or occupied by an agent $p \not\in L_i$ in the rest of the configuration. If $(x,y)$ is empty, then $l_h$ pushes $L_i$ one step towards $(x,y)$, and all line agents shift their map arrows in $c_6$ forwardly towards $l_t$. If $(x,y)$ is occupied by $p \not\in L_i$, then $l_t$ swaps states with $p$ and tells $l_h$ to push one step. Similarly, in each round of pushing a line agent $p_i$ swaps states with $p$ until the line completely traverses the drawn route map and restores it to its original state. Figure \ref{fig:PushOccupiedCell} shows an example of pushing $L_i$ through a route of empty and occupied cells. In this way, the line agents can transparently push through a route of any configuration and leave it unchanged. Once $L_i$ has traversed completely through the route and lined up with $s_t$, then \textsf{RecursiveCall} begins. Algorithm \ref{alg:Push} provides a general procedure of \textsf{Push}.

\begin{algorithm}[hbt!] 
	\caption{\textsf{Push}} 
	\label{alg:Push} 
	Input: a straight line $L_i$ and a segment $S_i$
	\SetAlgoLined
	\DontPrintSemicolon
	\vspace{7pt} 
	
	The line head $l_h$ observes  the completion of \textsf{DrawMap}
	
	\Repeat{\upshape$l_h$ swaps labels with $s_h$}
	{
		$l_h$ emits a mark to $l_t$ to start pushing \tcp{$l_t$ sees empty or non-empty cell}
		\If{\upshape$c_6 = d_{l_t}$ point to empty cell \tcp{local arrow of $l_t$ points to empty cell}} 
		{$l_h$ syncs $L_i$: update states and push one step}
		
		\If{\upshape$c_6 = d_{l_t}$ point to non-empty cell $k$}
		{
			$l_t$ activates $k$
			
			$l_h$ syncs $L_i$  \tcp{swap and update states as described in text}
			
			$L_i$  pushes one step
		}		
	}
	\textsf{RecursiveCall} begins 
\end{algorithm}

\begin{figure}[hbt!]
	\centering
	\includegraphics[scale=0.85]{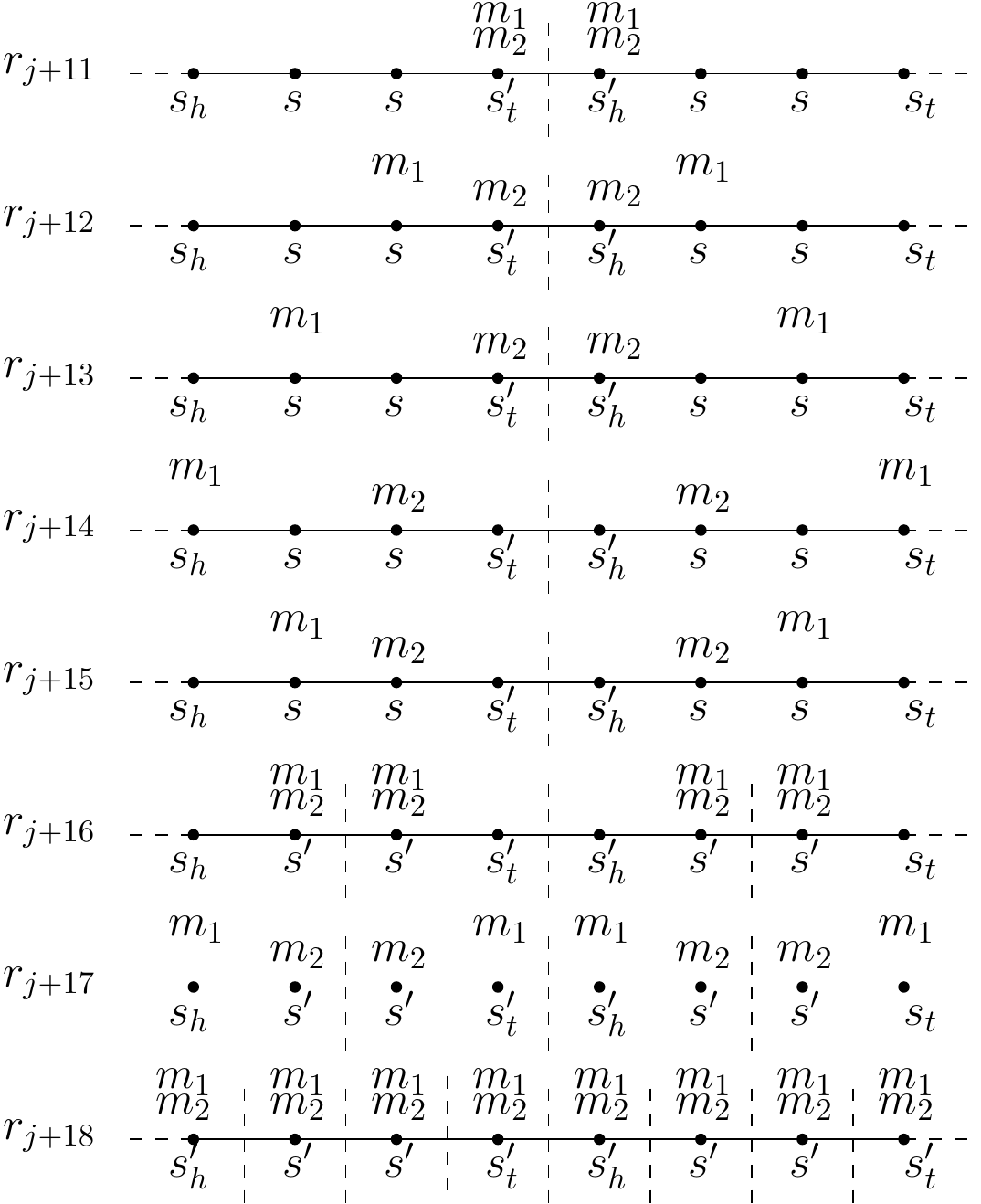}
	\caption[An implementation of synchronising 8 agents that is started in Figure \ref{fig:FastSlowWaves}.]{Synchronising 8 agents that were started in Figure \ref{fig:FastSlowWaves} where the halving procedure repeats until all agents reach a synchronised state.}
	\label{fig:FastSlowWaves_2}
\end{figure}

\subsubsection{Agents synchronisation}
Many agent behaviours, including state swapping and line movements (parallel pushing), are realised to be very efficient in the centralised systems of a global coordinator. In contrast, the constraints in this model make these simple tasks difficult, as individuals with limited knowledge cannot keep track of others during the transformation. This may result in the disconnection of the whole shape, a modification in the rest of the configuration or even the loss of a chain of actions that halts the transformation process. However, the synchronisation of agents can assist to tackle such an issue where individuals can organise themselves to eventually arrive at a state in which all of them conduct tasks concurrently. This concept is similar to a well-known problem in cellular automata known as the firing squad synchronisation problem, which was proposed by Myhill in 1957. McCarthy and Minsky provided a first solution to this problem \cite{minsky67}. The following lemma demonstrates how their solution can be translated to our model in order to coincide a Hamiltonian path of $n$ agents  in such a way that they can perform concurrent actions in linear time.  

\begin{lemma}[Agents synchronisation] \label{lem:synchronisation}
	Let $P$ denote a a Hamiltonian path of $n$ agents on the square grid, starting from a head $p_1$ and ending at a tail $p_n$, where $p_1 \ne p_n$. Then, all agents of $P$ can be synchronised in at most $O(n)$ rounds.
\end{lemma}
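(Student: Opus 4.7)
The plan is to adapt the classical firing squad synchronisation algorithm of McCarthy and Minsky to a Hamiltonian path, exploiting the fact that each agent $p_i$ already knows its local ports to $p_{i-1}$ and $p_{i+1}$ (stored in component $c_4$). Thus, even though the path meanders on the grid, communication along it behaves exactly like communication along a 1D line of $n$ cells, and the classical solution can be simulated step by step. The signals and marks required are drawn from finite sets, so they fit inside the constant-memory state of each agent.

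First I would describe the base mechanism. The head $p_1$ initiates two waves down $P$: a fast wave that propagates from $p_i$ to $p_{i+1}$ in every round, and a slow wave that advances one hop every three rounds. This is essentially the fast/slow signalling pair already used in \textsf{CollectArrows} (see Figure~\ref{fig:FastSlowWaves}) to locate the midpoint of a segment. The fast wave reaches the tail $p_n$ in $n-1$ rounds and bounces back; by elementary arithmetic, it meets the slow wave at (or within one cell of) the middle agent $p_{\lceil n/2\rceil}$ after a further $\Theta(n)$ rounds. That middle agent promotes itself to a new pair of endpoints, splitting $P$ into two subpaths of length $\lceil n/2\rceil$ on which the whole procedure is then invoked recursively in parallel; when $p_1$ or $p_n$ sees a direct neighbour promoted, all involved agents simultaneously enter the designated \emph{fire} state.

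Next I would establish correctness by induction on $n$. The base cases $n=1,2$ fire trivially after a constant number of rounds. For the inductive step, the midpoint promotion produces two subpaths of length $\le \lceil n/2\rceil+1$, each equipped with a head at one end and a tail at the other, which is exactly the input format required by the recursion. Because the promotion is performed simultaneously by both subproblems' shared boundary agents, the two halves are launched in lockstep and, by induction, each fires simultaneously across all of its agents; hence all $n$ agents of $P$ fire in the same round. Connectivity and path structure are preserved because no agent moves: the procedure uses only state updates on $c_2, c_3$ and related transmission components.

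Finally I would bound the running time. Let $T(n)$ be the synchronisation time on a path of $n$ agents. The midpoint is identified after $cn$ rounds for some constant $c$, after which two subproblems of size at most $\lceil n/2\rceil + 1$ run in parallel, giving the recurrence $T(n) \le cn + T(\lceil n/2\rceil + 1)$, whose solution is $T(n) = O(n)$. The hardest part of the argument is ensuring that the midpoint-promotion step is handled consistently when $n$ is odd or when the fast and slow waves meet between two adjacent agents; I would handle this by a small constant-state case analysis that promotes either a single middle agent (odd $n$) or the two central agents as joint boundaries (even $n$), exactly as in the standard McCarthy--Minsky construction, which the finite-automata model supports without any additional memory. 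Combining this with the linear recurrence yields the claimed $O(n)$ bound. \qed
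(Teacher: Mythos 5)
Your proposal is correct and follows essentially the same route as the paper's proof: both simulate the McCarthy--Minsky firing squad construction along the Hamiltonian path using a fast wave (one hop per round) and a slow wave (one hop per three rounds), detect the midpoint where they meet, split into two subpaths handled in parallel, and bound the total by a geometric sum (the paper computes $3n/2 + 3n/4 + \cdots = 3n$, which is the closed form of your recurrence $T(n) \le cn + T(\lceil n/2\rceil + 1)$). Your explicit induction for simultaneous firing and the recurrence formulation are slightly more formal than the paper's exposition, but the underlying argument is the same.
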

\begin{proof}
	From \cite{minsky67}, the strategy consists of two cases, even and odd number of agents. First, the head $p_1$ emits fast mark ``$ \textcircled{\tiny m1} $'' and slow mark ``$ \textcircled{\tiny m2} $'' towards the tail $p_n$. The  ``$ \textcircled{\tiny m1} $'' mark is communicated from $p_i$ to $p_{i+1}$ via transmission components in each round, while is transmitted from $p_i$ to $p_{i+1}$ every three rounds. When ``$ \textcircled{\tiny m1} $'' reaches the other end of the path $p_n$, it returns to $p_1$. Thus, the two marks collide exactly in the middle (see an example in Figure \ref{fig:FastSlowWaves}). Now, the two agents who witness the collision update to a special state, which will effectively split $P$ into two sub-paths. Both agents repeat the same procedure in each half of length $n/2$ in either direction of $P$. Repeat this halving until all agents reach a special state (collision witness) in which they all perform an action simultaneously. An implementation of this synchronisation is depicted in Figure \ref{fig:FastSlowWaves_2}.
	
	Assume a path $P$ of  $n$ odd agents in which $p_1$ emits, ``$ \textcircled{\tiny p1} $'' and ``$ \textcircled{\tiny p2} $'' along  $P$. In this case, the two marks meet in a slightly different way, at an exact single middle agent $p_i$ on $P$. This agent  $p_i$  observes a predecessor $p_{i-1}$ showing ``$ \textcircled{\tiny m2} $'' and successor  $p_{i+1}$ showing ``$ \textcircled{\tiny m1} $''  in transmission state and responds by switching into another special state that allows it to play two roles. That is, it emits  ``$ \textcircled{\tiny p1} $'' and ``$ \textcircled{\tiny p2} $'' to both directions of $P$, this effectively splits $P$ into two sub-paths of length $n/2 -1$ each. Now, repeat the process in each half until the two marks intersect in the middle, at which point two agents notice the collision and change to a special state. In the same way, divide until all agents have updated to a synchronised state. Figure \ref{fig:FastSlowWaves_3} depicts the synchronisation in the odd case.
	
	Now, we are ready to describe the state transitions. In the first round $r_j$,   $p_1$ updates to $p_1.c_2 \gets \textcircled{\tiny m1}$ and combines ``$ \textcircled{\tiny m2} $''  with ``$w$'' in waiting state,  $p_1.c_3 \gets \textcircled{\tiny m2}w$. Next round $r_{j+1}$, $p_1$ updates state to $p_1.c_3 \gets \textcircled{\tiny m2}$ and $p_1.c_2 \gets \cdot $. In the third round $r_{j+2}$, $p_1$ updates transmission state to $p_1.c_2 \gets \textcircled{\tiny m2} $. Whenever $p_i$ notices: (1) $p_{i-1}$ (or $p_{i+1}$) showing ``$ \textcircled{\tiny m1} $'', $p_i$ shifts transmission to $p_i.c_2 \gets \textcircled{\tiny m1}$ and $p_{i-1}$ o(r $p_{i+1}$) rests their transmission next round. (2) $p_{i-1}$ (or $p_{i+1}$) showing  ``$ \textcircled{\tiny m2} $'', $p_i$ updates waiting state to $p_i.c_3 \gets \textcircled{\tiny m2}w$ and $p_{i-1}$ (or $p_{i+1}$) rests their transmission next round. (3) $p_{i+1}$ showing ``$ \textcircled{\tiny m1} $''  and $p_{i-1}$ presenting ``$ \textcircled{\tiny m2} $''  (or vice versa), $p_i$ updates to another special state and repeats (1). When both $p_i$ and $p_{i+1}$ are presenting ``$ \textcircled{\tiny m1} $'' and  ``$ \textcircled{\tiny m2} $'', they update into a special state and repeat the procedure of $p_1$ in either directions. Repeat until all agents and their neighbours reach a special state where all are synchronised. 
	
	Let us now analyse the runtime of this synchronisation in a number of rounds. The fast mark ``$ \textcircled{\tiny m1} $'' moves along $P$ taking $n$ rounds plus $n/2$ to walks back to the centre in a total of at most $3n/2$ rounds. The same bound applies to the slow mark ``$ \textcircled{\tiny m2} $'' arriving and meeting ``$ \textcircled{\tiny m1} $'' in the middle. The whole procedure is now repeated on the two halves of length $n/2$, each takes $3n/4$ rounds. This adds up to a total $\sum_{i=1}^{n} 3n/2^i = 3n/2 + 3n/4 +\ldots+0 = 3n(1/2+1/4 + \ldots+0)= 3n (1) = 3n $. Therefore, this synchronisation requires at most $O(n)$ rounds of communication.  \qed
\end{proof}
\begin{figure}[hbt!]
\centering
\includegraphics[scale=0.85]{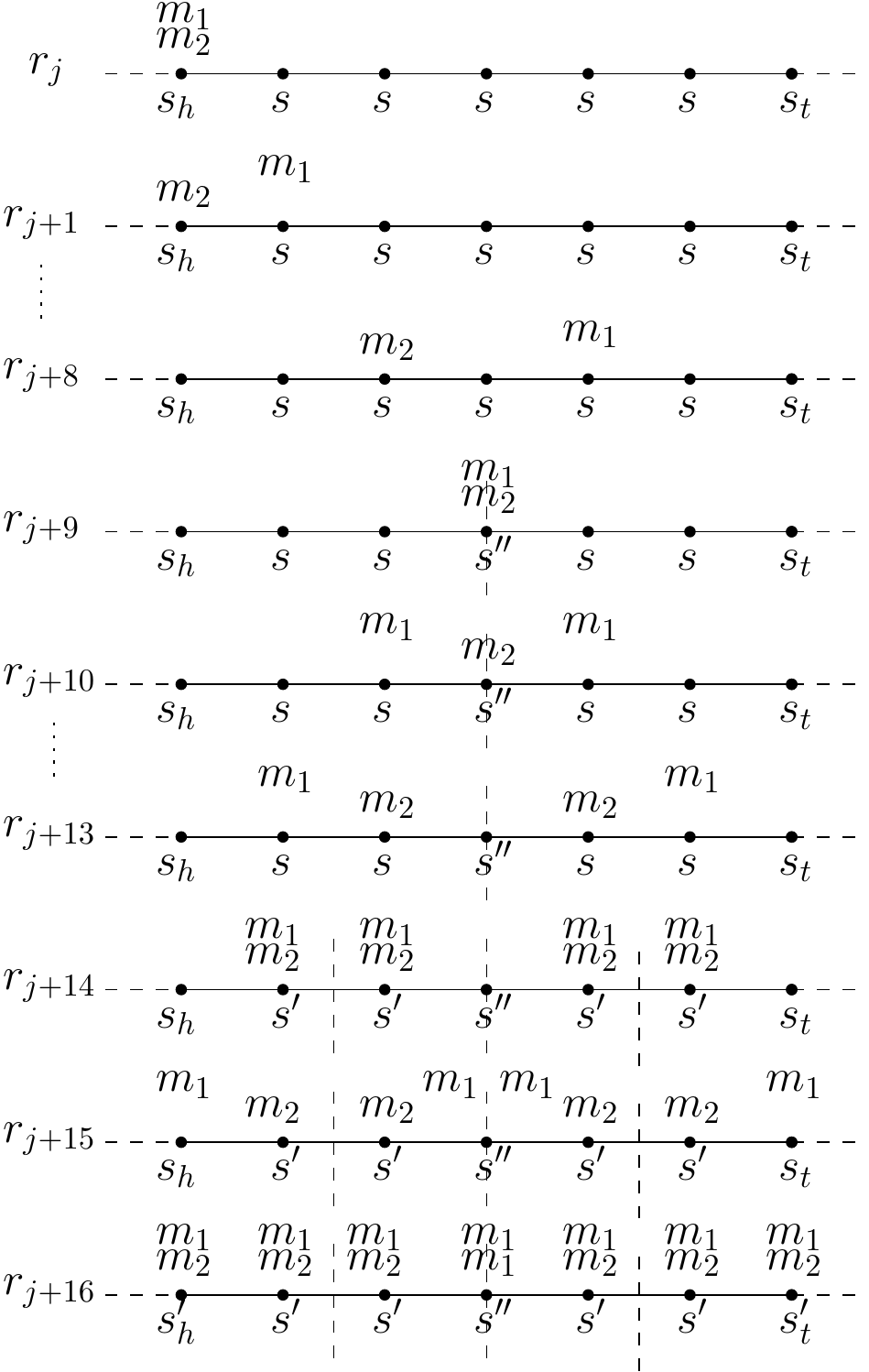}
\caption{An example of synchronising 7 agents - odd case.}
\label{fig:FastSlowWaves_3}
\end{figure}

Now, we show that under this model a number of consecutive agents forming a straight line $L_i$, can traverse  transparently through a route $R$ of cells on the grid of any configuration $C_R$ exploiting only their local knowledge, without breaking the connectivity of the whole shape.

\begin{lemma} \label{lem:PushCorrectness}
	Let $L_i $ denote a terminal straight line and $R$ be a rectangular path of any configuration $C_R$, starting from a cell adjacent to the tail of $L_i$, where $R \le 2|L_i|-1$. Then, there exists a distributed way to push $L_i$ along $R$ without breaking connectivity.  
\end{lemma}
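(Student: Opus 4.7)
The plan is to simulate the centralised push by a local read--move--shift protocol, with the head $l_h$ and the tail $l_t$ synchronised at every step by the firing-squad construction of Lemma~\ref{lem:synchronisation}. Each push round will consist of four events: $l_t$ reads the next direction $\hat{a}\in A$ of $R$ from its map component $c_6$; a ``ready'' wave propagates along the line to $l_h$; $l_h$ performs a single line move in direction $\hat{a}$; and, in parallel, every interior agent shifts the arrow in $c_6$ one position towards $l_t$, so the displaced line still stores the correct suffix of $R$. Because \textsf{DrawMap} produces an (at most) L-shaped route of length $\le 2|L_i|-1$, the direction $\hat{a}$ changes at most once during the whole push, and whenever it changes the line pivots around the single corner cell.

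First I would verify a single push round on an empty target cell: the move is well defined, the shape of $L_i$ is preserved, and the arrows stored in $c_6$ after the shift form a correct suffix of $R$. When the target cell is occupied by some foreign agent $p\notin L_i$ (necessarily in the inactive state $k$), I would insert a state-swap between $l_t$ and $p$ through the Moore neighbourhood before $l_t$ issues the ``ready'' signal: $l_t$ copies $p$'s $k$-state into its own label while handing its current line-state to $p$, so after the ensuing push $p$ sits one cell behind the line carrying its original $k$-state, and the cell vacated by $l_t$ is reoccupied by an agent in state $k$. Iterating these transparent swaps along $R$ leaves the ambient configuration exactly as it was, since each occupied cell of $R$ is swapped into the line once and swapped out of it once.

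The main obstacle, and the only place where the hypothesis $|R|\le 2|L_i|-1$ is used, is establishing global connectivity throughout the motion. I would maintain the invariant that after $d$ push steps the line occupies $\max(|L_i|-d,0)$ cells of its original row or column together with $\min(d,|L_i|)$ cells along $R$, with every swapped-out agent sitting in the strip immediately behind the tail. For $d<|L_i|$ the line still overlaps its initial footprint, so connectivity with the rest of the Hamiltonian path is inherited from the initial configuration. For $d\ge|L_i|$ the bound $|R|\le 2|L_i|-1$ gives $|R|-d\le |L_i|-1$, so the leading end of the line is within grid distance one of $s_t$ (or of the midpoint $s_t'$ when $R$ is traversed in two legs), and connectivity is inherited through $S_i$.

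The delicate sub-case is the pivot at the single L-corner of $R$: during the $|L_i|$ rounds straddling the corner the line decomposes into two perpendicular sub-lines sharing the corner cell, and I would verify that the shared corner, combined with the diagonal adjacency permitted in our definition of connectivity, keeps the whole configuration connected while the sync--push--shift protocol continues to drive both sub-lines uniformly. Once these invariants are in place, the push terminates when $l_t$ reads the terminating symbol from $c_6$, at which point $L_i$ has traversed $R$ completely, the swapped agents have all been restored to their original cells, and \textsf{RecursiveCall} can begin.
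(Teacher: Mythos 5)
Your architecture is essentially the paper's: firing-squad synchronisation of $l_h$ and $l_t$ via Lemma~\ref{lem:synchronisation}, a per-step read/push/shift of the map arrows stored in $c_6$, and ``transparent'' state swaps with foreign agents lying on $R$. Two concrete points do not go through, however. First, you claim the direction $\hat a$ changes at most once, so that the line pivots around a single corner. The route itself is L-shaped, but the line's initial orientation need not agree with the first leg of $R$, and in the case where the Manhattan distance exceeds $|L_i|$ the route is traversed in two L-shaped legs through the midpoint $s_t'$; the paper's proof explicitly allows up to \emph{three} turns, during which $L_i$ decomposes into up to three mutually perpendicular sub-lines that must all be driven by simultaneous pushes under one synchronisation. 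Your single-corner pivot analysis covers only the two-sub-line case. Relatedly, your connectivity dichotomy has a gap: from $d\ge|L_i|$ and $|R|\le 2|L_i|-1$ you get $|R|-d\le|L_i|-1$, which places the leading end within $|L_i|-1$ route cells of $s_t$, not ``within grid distance one''; for $|L_i|<d<|R|$ the line can a priori touch neither its original footprint nor $s_t$, and closing this window requires the fact that each leg of the route has length bounded by the Manhattan distance $\Delta(l_t,s_t)\le|L_i|$ guaranteed by the case split in \textsf{ComputeDistance} (which is the entire reason the midpoint detour exists).

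The more serious omission is a specific disconnection hazard that the generic swap argument does not repair. When the line turns at a corner cell of $R$ that is occupied by a foreign agent (labelled $k_{l_t}c$ in the paper) whose only neighbours outside the line are \emph{diagonal} and not adjacent to any line agent, displacing that agent during the turn severs its diagonal adjacency and disconnects the shape for at least one round; restoring its state afterwards does not help, because connectivity must hold in every intermediate configuration. The paper devotes a dedicated mechanism to this (Figure~\ref{fig:Corner_Pushing}): the corner agent temporarily adopts a special state, is carried one extra step while $l_t$ becomes the turning agent, the whole line turns sequentially up to $l_h$, and only then is the corner agent returned to its original cell. Your proof asserts that ``each occupied cell of $R$ is swapped into the line once and swapped out of it once,'' which is true in aggregate but does not rule out this transient disconnection; without an explicit local detection-and-repair rule for the diagonal-corner case, the connectivity claim of the lemma is not established.
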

\begin{proof}
	In Algorithm \ref{alg:Push}, the line head $l_h$ observes the collection mark ``\counterplay\checkmark'' indicating the completion of \textsf{DrawMap($S$)}, which draws a route $R$ (see Definition \ref{def:route}). As a result, $l_h$ emits the question mark ``$?$'' to $l_t$, which will broadcast via line agent transmission states from $p_i$ to $p_{i+1}$. Once `$?$'' arrives there, $l_t$ checks whether its map arrow $d_{l_t}$ points to an empty or occupied cell, and if so, it emits a special mark ``$ \textcircled{\tiny Y} $'' back to $l_h$ indicating that a rout is free to push. By an application of Lemma \ref{lem:synchronisation},  $l_h$ synchronises all line agents to reach a concurrent state in which the following actions occur concurrently: (1) $l_h$ pushes $L_i$ one position towards $l_t$, based on its local direction on push state $c_7$. (2) $l_t$ pushes one position based on its map arrow $c_6$ either in line direction or perpendicular to $L_i$. In the latter, $l_t$ updates state to $c_4 \gets c_6$ and tells predecessor to turn next round. In general, (3) If $p_i$ turns, it updates local direction $c_4 \gets c_6$, and  $p_{i-1}$ updates push component $p_{i-1}.c_7\gets p_i.c_6$.  (4) $p_i$ of a present push component $c_7$ moves one step in the direction held in $c_7$, which then rests to $p_i.c_7 \gets \cdot$. (5) All line agents shift local map direction forwardly towards $l_t$,  $p_i.c_6 \gets p_{i-1}.c_6$. Repeat these transitions until $l_t$ encounters the segment tail $s_t$ on the route through which $l_t$ tells $l_h$ to sync and push again, while $l_t$ and $s_t$ swaps their states. Hence, any $p_i$ meets $s_t$, they swap states and rest their $c_6$. Eventually, $l_h$ stops pushing once it meets and swaps states with $s_t$. An example is shown in Figure \ref{fig:Push_L_R_empty}.
		\begin{figure}[hbt!]
		\centering
		\includegraphics[scale=0.725]{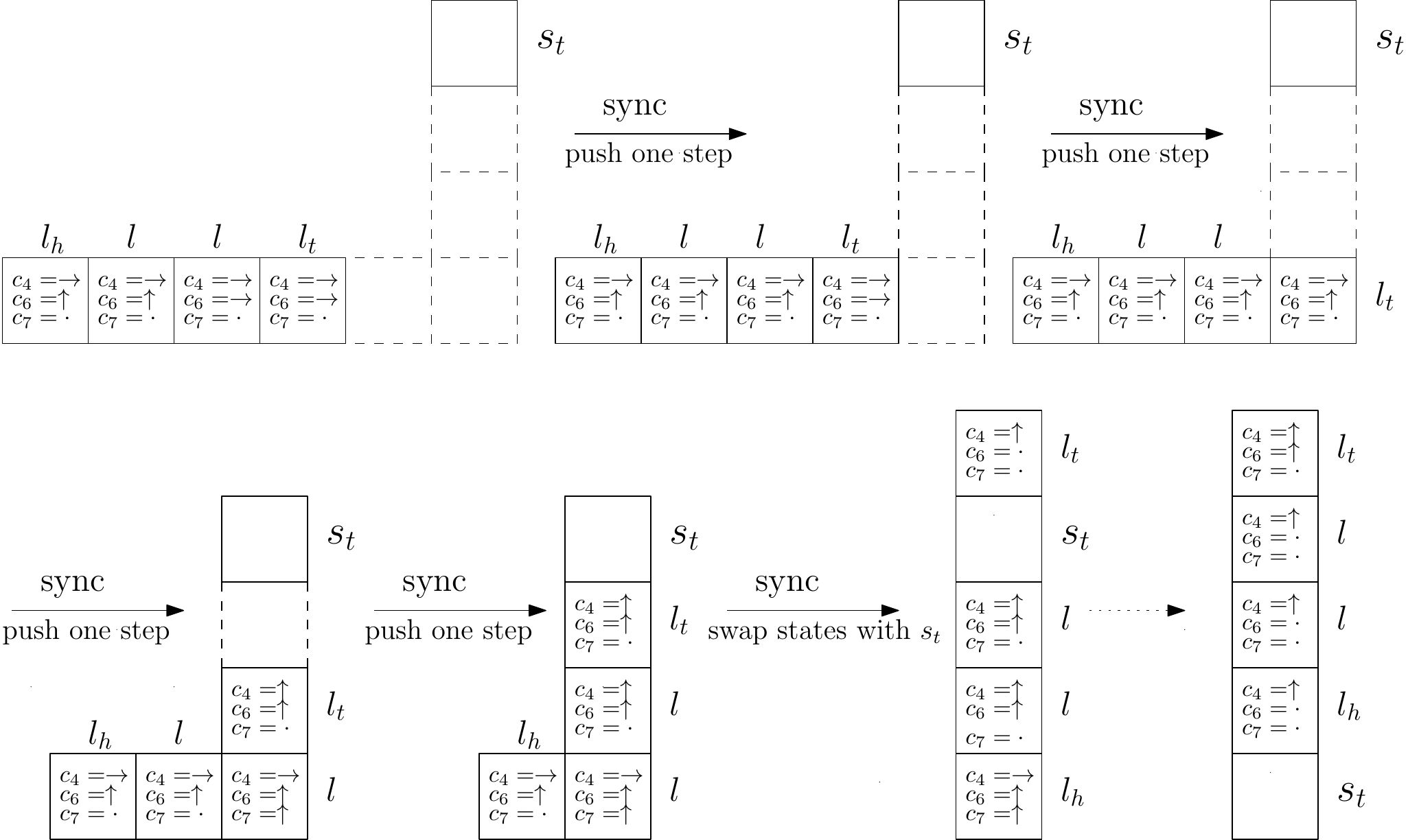}
		\caption[A line $L_i$ of four agents pushing through a route of empty cells.] {A line $L_i$ of four agents pushing through a route of empty cells towards $s_t$. All affected states ($c_4, c_5$ and $c_7$) are shown inside each occupied cell.}
		\label{fig:Push_L_R_empty}
	\end{figure}
	
	During pushing through an L-shape route $R$, $L_i$ may turn one or at most three times. In the following, we show that the number of turns depends on the orientation of both $L_i$ and $R$. Without loss of generality, assume a horizontal $L_i$ turning at a corner towards $s_t$, such as Figure \ref{fig:Push_L_R_empty} where $L_i$ will temporally divide into two perpendicular sub-lines while traversing to $s_t$. By a careful application of Lemma \ref{lem:synchronisation}, both can be synchronised and organised to perform two parallel pushing where $l_h$ liaises with $l_t$ and push the two perpendicular sub-lines concurrently. Now, assume $s_t$ is placed two cells above the middle of $L_i$, resulting in a route $R$ of three turns along which $L_i$ temporally transforms into three perpendicular sub-lines. Three agents  simultaneously drive everyone to advance one step ahead on $R$. Therefore, the line can be synchronised to perform at most three parallel pushing operations that are asymptotically equivalent to the cost of one pushing, without breaking connectivity.  Below are transitions that demonstrate how $L_i$ pushes along $R$ while satisfying all of the transparency properties of line moves in \cite{AMP2020}:     
	\begin{itemize}
		\item[--] No delay: $L_i$ traverses $R$ of any configuration $C_R$ within the same asymptotic number of moves, regardless of how dense is  $C_R$.
		
		\item[--] No effect: $L_i$ restores all occupied cell to their original state and keeps $C_R$ unchanged after traversing $R$.
		
		\item[--] No break: $L_i$ preserves connectivity while traversing along $R$. 
	\end{itemize} 
	
	Now, assume $L_i$ walks over a route $R$ of non-empty cells occupied by other agents (denoted by $k$) in the configuration that are not on $S_i$. Whenever $L_i$ walks through $R$ and $l_t$ meets $k$ on $R$, $l_t$ tells $l_h$ to stop pushing. The agent $k$ now updates to a temporary state labelled $k_{l_t}$ if it has a similar arrow of $l_t$ or $k_{l_t}c$ if it has a turn. Based on the map arrow of $l_t$, $k_{l_t}$ acts as a tail and checks whether the next cell $(x,y)$ on $R$ is empty, which accordingly triggers to one of the following states : (1)  $(x,y)$ is empty, then $k_{l_t}$ emits a mark back to $l_h$ to sync and push $L_i$ one step further. During this, $k_{l_t}$ changes state to $k_{l}$ and each synchronised agent $p_{i}$ shifts map arrow to $p_{i+1}$. During pushing, $k_{l}$ swaps states with its predecessor and ensures that it remains in the same position (see Figure \ref{fig:PushEmptyCell}) until it meets $l_h$, at which $k_{l}$ can update to its original state $k$. (2) $(x,y)$ is occupied by another agent labelled $k$, then (a) $k_{l_t}$ changes to $k_{l_t}^{\star}$ (or $k_{l_t}^{\star}c$ if the map arrow is a turn) and $k$ into $k_{l_t}$, and  (b) $k_{l_t}$ emits a special mark to the line agents asking for the next map arrow $d_{k_{l_t}}$. Repeat this process as long as $d_{k_{l_t}}$ indicates an occupied cell. Once $k_{l_t}$ observes an empty cell $(x^{\prime},y^{\prime})$, it performs (1)  and updates $k_{l_t}^{\star}$. See a demonstration in Figure \ref{fig:PushOccupiedCell}.
	
	\begin{figure}[hbt!]
		\centering
		\includegraphics[scale=0.75]{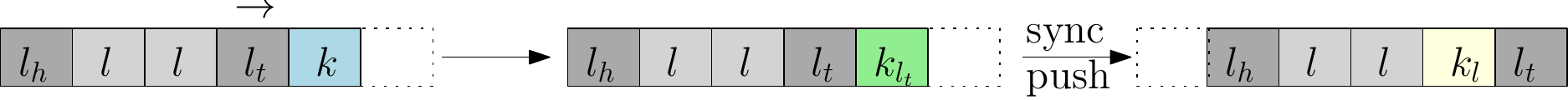}
		\caption[A line pushing through a non-empty cell.]{A line $L_i$ of agents within grey cells pushing through a non-empty cell in blue with a right map direction above $l_t$.}
		\label{fig:PushEmptyCell}
	\end{figure}
	\begin{figure}[hbt!]
		\centering
		\includegraphics[scale=0.85]{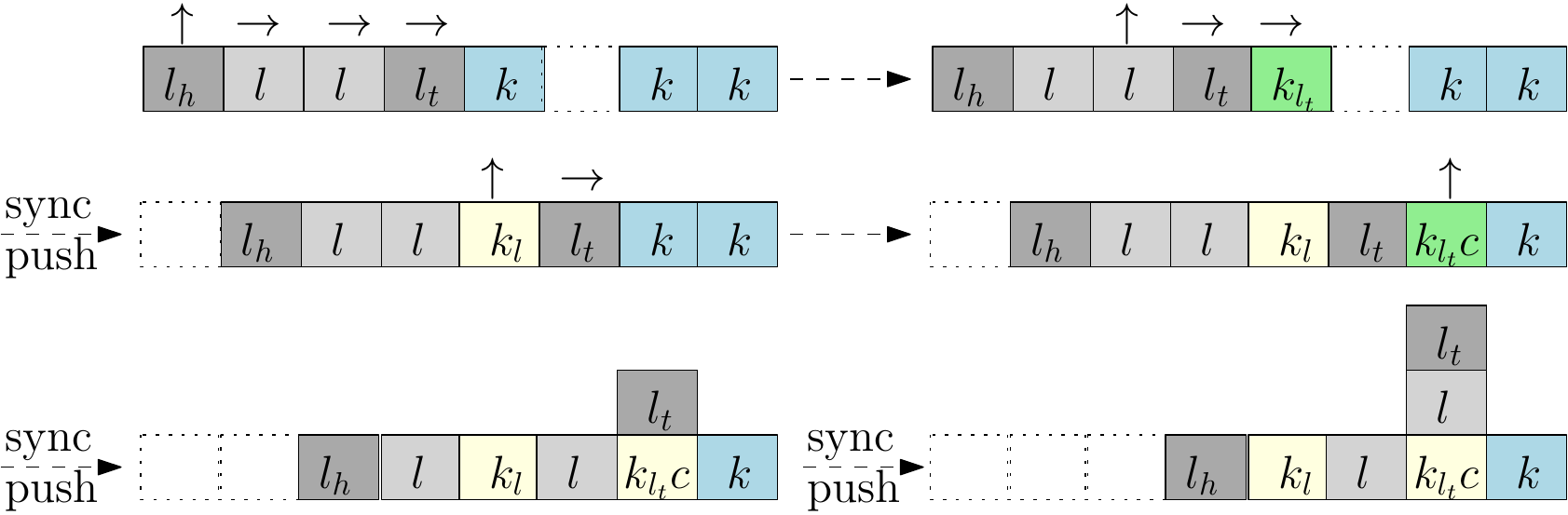}
		\caption[A line pushing and turning through empty and non-empty cell.]{A line $L_i$ of agents inside grey cells, with map directions above, pushing and turning through empty and non-empty cells in blue (of label $k$). The green and yellow cells show state swapping.}
		\label{fig:PushOccupiedCell}
	\end{figure}
	
	When $L_i$ moves through a series of non-empty cells, it guarantees that they are neither separated or disconnected while pushing. To achieve this, when $l_t$ or $k_{l_t}$ calls for synchronisation, any line agent $p_i$ labelled $l$ whose successor shows a label with star ($k_{l}^{\star}$ or $k_{l}^{\star}c$ ), both swap their states. It continues to swap states forwardly via consecutive non-empty cells until reaches the tail $l_t$ or a line agent $l$. Though, when $L_i$ traverses entirely through $R$ and reaches the segment tail $s_t$, it may find another non-empty cell after swapping states with $s_t$. Hence, the same argument above still holds in this case. Figure \ref{fig:Corner_Pushing} shows this case.
	
	\begin{figure}[hbt!]
		\centering
		\includegraphics[scale=0.7]{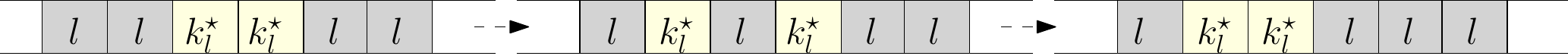}
		\caption[Four agents of a line swap states with other agents.]{Four agents in a line inside grey cells swap states with others occupying consecutive yellow cells.}
		\label{fig:ForwardAgents}
	\end{figure}
	
	Whenever $L_i$ pushes into an empty cell $(x,y)$, it fills $(x,y)$ with an agent $p \in L_i$. During pushing, $L_i$ always keeps the original position of a non-empty cell and restores it to its initial state (via state swapping). However, there exists a case that may break the connectivity. Consider a line $L_i$ pushing along $R$ and turning at a corner agent labelled $k_{l_t}c$, which has two diagonal neighbours where both are not adjacent to any line agent, as depicted in Figure \ref{fig:Corner_Pushing} top. In this case, when $k_{l_t}c$ moves down, it will break connectivity with its upper diagonal neighbour. Hence, the transformation resolves this issue locally depending on the agents' local view. When $k_{l_t}c$ observes a pushing agent and has one or two diagonal neighbours, it temporarily switches to a state that allows it to move one step further while $l_t$ updates into a turning agent. This also permits all line agents to turn sequentially until they reach the head $l_h$, which turns and waits for $k_{l_t}c$ to return to its initial cell. Figure \ref{fig:Corner_Pushing} depicts how to handle this situation. Other orientations follow symmetrically by rotating the system $90^{\circ} , 180^{\circ}$ or $270^{\circ}$ clockwise and counter-clockwise.  
	
	\begin{figure}[hbt!]
		\centering
		\includegraphics[scale=0.70]{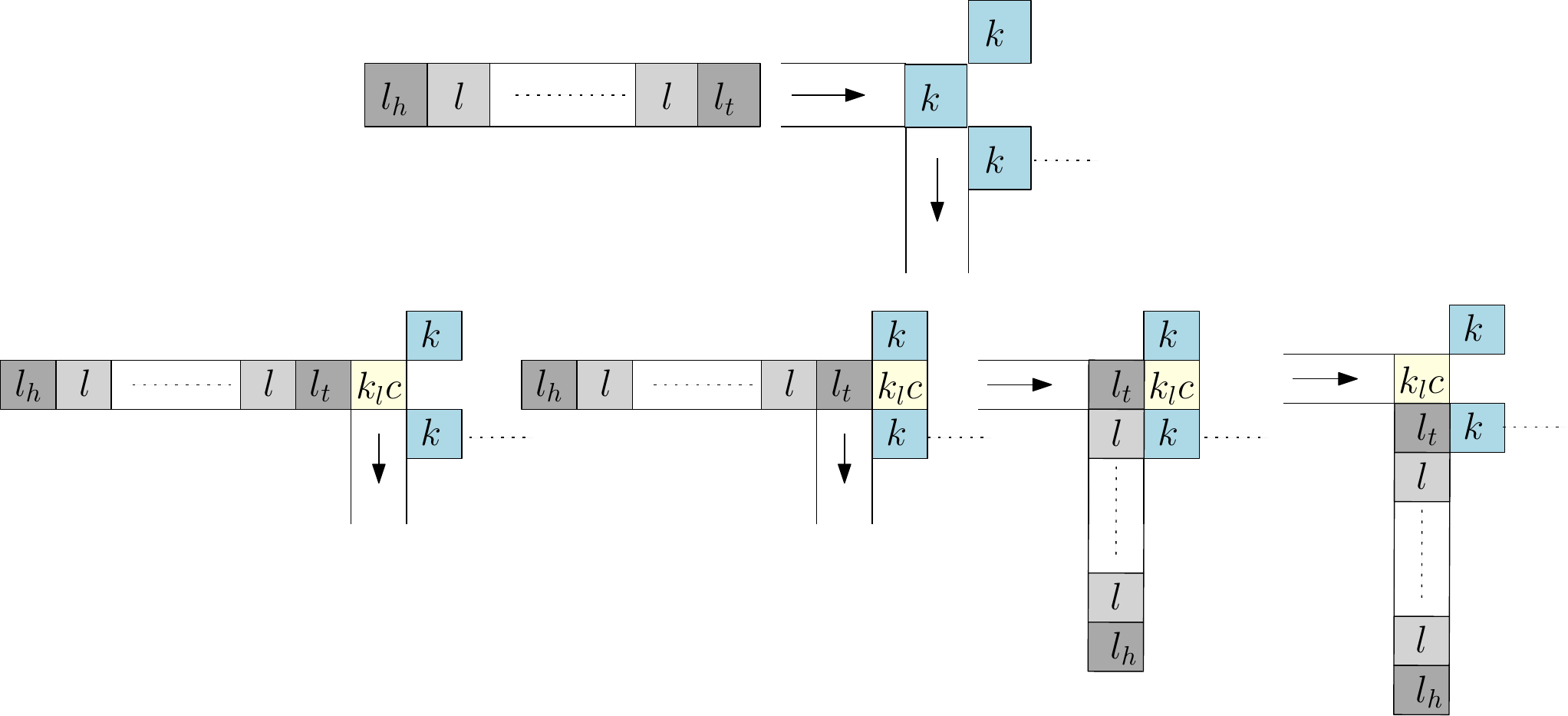}
		\caption[A line pushing through a disconnecting corner.]{A line $L_i$ pushing through a route $R$ turns at a corner agent labelled $k_{l_t}c$ that has two diagonal neighbours, neither of which is adjacent to any line agent.}
		\label{fig:Corner_Pushing}
	\end{figure}
	
	Thus, all agents of  $L_i$ are labelled and organised in such a way that can transparently push through a route $R$ of any configuration $C_R$, whether it is being empty or partially/fully occupied.  It implies that $L_i$ remains connected when travelling as well as the whole configuration. Further, the original state of $C_R$ has been restored and all of its nodes (if any) have been left unchanged. As a result, $L_i$ meets all of the transparency criteria of line moves in \cite{AMP2020}. \qed
\end{proof}

The complexity of \textsf{Push} is provided in the following lemma based on the number of line moves and communication rounds.

\begin{lemma} \label{lem:PushRuntime}
	A straight line $L_i$ traverses through a route $R$ of any configuration $C_R$, taking at most $O(|L_i|)$  line moves within  $O(|L_i|\cdot|R|)$ rounds.
\end{lemma}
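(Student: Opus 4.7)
The plan is to bound the number of line moves and the number of communication rounds separately, leveraging Lemma~\ref{lem:PushCorrectness} for correctness of each step and Lemma~\ref{lem:synchronisation} for the per-step coordination cost. The key observation is that each push advances the (possibly sub-divided) line by exactly one cell along $R$, while every round between two successive pushes is consumed by local coordination whose cost depends only on $|L_i|$.

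For the move bound, I would first argue that the tail $l_t$ must occupy each cell of $R$ exactly once in the course of the traversal, so the line executes at most $|R|$ global pushes. When $R$ contains one or (at most) three turns, Lemma~\ref{lem:PushCorrectness} shows that $L_i$ splits into at most three perpendicular sub-lines which push synchronously in a single step; this inflates the line-move count by only a constant factor per global step. Hence the total number of line moves is $O(|R|)$, and since by hypothesis $|R| \le 2|L_i|-1 = O(|L_i|)$, the $O(|L_i|)$ bound on line moves follows.

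For the round count, I would analyse the number of rounds between two consecutive pushes. Each such interval consists of (i) a synchronisation of the $O(|L_i|)$ agents of the (possibly split) line using the McCarthy--Minsky procedure of Lemma~\ref{lem:synchronisation}, which costs $O(|L_i|)$ rounds, (ii) the ``?''/``$\textcircled{\tiny Y}$'' exchange between $l_h$ and $l_t$ to detect whether the next cell is empty, which also takes $O(|L_i|)$ rounds, and (iii) when the next cell(s) on $R$ are occupied by agents $k \notin L_i$, a forwarding of state swaps along $L_i$, which by Lemma~\ref{lem:PushCorrectness} is confined to the $|L_i|$ line agents and hence again costs $O(|L_i|)$ rounds. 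Summing (i)--(iii) gives $O(|L_i|)$ rounds per push, and multiplying by the $|R|$ pushes yields the claimed $O(|L_i|\cdot |R|)$ total.

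The main obstacle will be arguing that the per-step cost does not blow up in the awkward subcases: chains of non-empty cells along $R$, multiple turns requiring up to three parallel sub-line pushes, and the corner case of Figure~\ref{fig:Corner_Pushing} where an extra local shift is needed to preserve connectivity at a diagonal neighbour. In each of these situations the additional coordination is either confined to $O(1)$ agents around the affected corner, or propagates along at most the full length of $L_i$, so it contributes only a constant factor to the $O(|L_i|)$ per-step bound; combined with the move count, this gives the desired $O(|L_i|)$ moves in $O(|L_i|\cdot|R|)$ rounds.
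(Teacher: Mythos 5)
Your proposal is correct and takes essentially the same approach as the paper: the move bound comes from counting the $O(|R|)=O(|L_i|)$ global advances along $R$, inflated only by a constant factor for the up-to-three perpendicular sub-lines at turns, and the round bound comes from multiplying the number of pushes by the $O(|L_i|)$ per-push cost of synchronisation, the head--tail exchange, and state-swap forwarding. The paper organises the round count as a sum of three aggregate terms ($|R| + |L_i|\cdot|R| + (|L_i|+|R|)$) rather than your per-push accounting, but the dominant product term and the underlying reliance on Lemmas \ref{lem:synchronisation} and \ref{lem:PushCorrectness} are identical.
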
 
\begin{proof}
	The bound of moves depends on three factors, the number of empty cells on $R$, the length of $L_i$  and the number of turns on $R$. Say that $R$ is free of agents (fully empty) and has at most 3 turns, then $L_i$ requires at most $|L_i| + 3|L_i| + |L_i| = 5 |L_i|= O(|L_i|)$ moves (proved in Lemma \ref{lem:PushCorrectness}) to push through $R$.  On the other hand, the communication cost of this sub-phase could be very high in the case of a fully occupied route $R$ when individuals perform many functions such as synchronisation, activation, state swapping, and map arrow forwarding. Those actions can be carried either sequentially or concurrently during the transformation and can be analysed independently of each other. In this case, we set an upper bound on the most dominating work.
	
	Assume that $R$ is completely occupied by other agents in the shape (in a worst-case), from the cell adjacent to the line tail $l_t$ tp the cell adjacent to $s_h$. Then, $l_t$ needs to traverse over at most $ |R|$ agents in order to arrive at $s_h$, which costs $t_1^c = |R|$ rounds. Further, $l_t$ requires a number of synchronisations equal to $|L_i|$  to move all line agents along $R$ at a cost of no more than $t_2^c=|L_i|\cdot|R|$ rounds. In each synchronisation, a line agent swaps its state with $ |R|$ agents and forwards its map direction over line agents to $l_t$ within at most $t_3^c=|L_i|+|R|$ rounds.  Thus, this sub-phase results in a maximum number of communications $T^c = t_1^c  +t_2^c +t_3^c = |R| + (|L_i|\cdot|R|) + (|L_i|+|R|) = O(L_i|\cdot|R|) $ rounds. This bound holds when other agents occupy $|L_i|$ consecutive horizontal and vertical cells beyond $s_h$.  \qed
\end{proof}

\subsection{Recursive call on the segment $S_i$ into a line $L_i^{\prime}$}
\label{sec:RecursiveCall}

This sub-phase, \textsf{RecursiveCall}, is the heart of this transformation and is recursively called on the next segment $S_i$, which eventually transforms into another straight line $L_i^{\prime}$ of $2^i$ agents.

When a segment tail $s_t$ swaps states with $l_h$, it accordingly acts as follows: (1) propagates a special mark transmitted along all segment agents towards the head $s_h$, (2) deactivates itself by updating label to $c_1 \gets k$, (3) resets all of its components, except local direction in $c_4$. Similarly, once a segment agent $p_i$ observes this special mark, it propagates it to its successor $p_{i+1}$, deactivates itself, and keeps its local direction in $c_4$ while resetting all other components. When the segment head $s_h$ notices this special mark, it changes to a line head state ($c_1 \gets l_h$) and then recursively repeats the whole transformation from round $1$ to $i-1$. Figure \ref{fig:DignalPattern} presents a graphical illustration of \textsf{RecursiveCall} applied on a diagonal line shape. 

\subsection{Merge the two lines $L_i$ and $L_i^{\prime}$}
\label{sec:Merge}

The final sub-phase of this transformation is \textsf{Merge}, which combines two straight lines into a single double-sized line, described as follows. The previous sub-phase, \textsf{RecursiveCall}, transforms the segment $S_i$ into a straight line $L_i^{\prime}$, starts from a head $l_h$ and ends at a tail $l_t$. Currently, the tail of $L_i^{\prime}$ occupies a cell adjacent to the head of $ L_i$. Hence, $l_h$ can simply check if $L_i^{\prime}$ is in line or perpendicular to $ L_i$ exploiting the previous procedure of \textsf{CheckSeg}. Without loss of generality, say that the tail agent of $L_i^{\prime}$ occupies cell $(x,y)$ and $L_i$ occupies cells $(x,y), \ldots, (x+|L_i|-1, y)$. Then, $L_i^{\prime}$ could be either (1) perpendicular with agents occupying $(x, y), \ldots, (x, y+|L_i|-1)$  or (2)  in line on cells $(x,y), \ldots, (x-|L_i|-1, y)$. In (1), $l_h$ emits a mark that  travels via agents of $ L_i^{\prime}$ until it reaches the other head, where it asks to change the direction of $ L_i^{\prime}$, allowing  $L_i^{\prime}$ and $L_i$ to combine into a single straight line $L_{i+1}$ of double length and designate one head and tail for $L_{i+1}$.  In (2),  $L_i^{\prime}$ and $L_i$ have already formed $L_{i+1}$; all that remains is to switch and update labels to assign a head $l_h$, tail $l_t$ and $2^{i+1}$ line agents $l$ in between.

Now, it is sufficient to upper bound this sub-phase by analysing only a worst-case of (1). Obviously, the straight line $ L_i^{\prime} $ pushes and turns within a distance equal to its length in order to line up with $L_i$. It is worth noting that the agents of $ L_i^{\prime} $ do not require full synchronisation for each push. Instead, they simply need to sync the head and tail of $ L_i^{\prime} $ where both perform pushing at the same time. When an agent $p_i \in L_i^{\prime}$ turns, it tells its predecessor $p_{i-1} \in L_i^{\prime}$ to turn too. Hence, the total number of moves is at most $O(|L_i^{\prime}|)$. The communication cost splits into: (1) A special mark from $l_h$ traverses across  $ L_i^{\prime}$ in $O(|L_i^{\prime}|)$ rounds. (2) All agents of $ L_i^{\prime} $ synchronise in $O(|L_i^{\prime}|)$ rounds. (3) Label swapping costs at most  $|L_i^{\prime}| + |L_i|= O(|L_i^{\prime}|)$.  Therefore, all agents in \textsf{Merge} communicate in linear time, and then we can say:

\begin{lemma} \label{lem:Merge}
	An execution of \textsf{Merge} requires at most $O(|L_i|)$ line moves and $O(|L_i|)$ rounds of communication.
\end{lemma}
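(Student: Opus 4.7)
The plan is to split into the two geometric cases already sketched in the paragraph preceding the lemma, namely that $L_i^{\prime}$ is either collinear with $L_i$ or perpendicular to $L_i$, and to bound line moves and communication rounds separately in each case. I would first reuse the \textsf{CheckSeg} machinery from Section~\ref{sec:Check} between $L_i^{\prime}$ and $L_i$ to classify the configuration; by Lemma~\ref{lem:CheckSeg} this costs $O(|L_i|)$ rounds and no line moves. In the collinear case, no geometric reconfiguration is needed, and I only need to propagate a relabelling wave along the combined line of length $|L_i|+|L_i^{\prime}|=O(|L_i|)$ to designate a single $l_h$, a single $l_t$ and the internal $l$ labels, contributing zero line moves and $O(|L_i|)$ rounds.

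For the perpendicular case, the plan is to rotate $L_i^{\prime}$ into alignment with $L_i$ by a short sequence of line pushes that pivot at the cell where the two lines meet. First I invoke Lemma~\ref{lem:synchronisation} once on $L_i^{\prime}$ to bring all of its $|L_i^{\prime}|$ agents to a concurrent state at cost $O(|L_i^{\prime}|)=O(|L_i|)$ rounds. Then the head $l_h$ and the tail $l_t$ of $L_i^{\prime}$ execute a scaled-down version of the push protocol from Lemma~\ref{lem:PushCorrectness}, restricted to a single-corner route whose pivot is the shared endpoint adjacent to $L_i$. Because the route length is at most $|L_i^{\prime}|$ and each step of the route is realised by one simultaneous line push (head and tail firing together, with turn signals propagating pipelined among the intermediate $l$-agents), the rotation completes in $O(|L_i^{\prime}|)$ line moves and the same number of additional rounds.

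The main obstacle will be verifying that no connectivity or non-interference invariant is violated during the rotation: in particular that $L_i^{\prime}$ does not detach from the stationary $L_i$ and does not overwrite cells occupied by $L_i$ or by inactive $k$-agents. I would argue this exactly as in Lemma~\ref{lem:PushCorrectness}, noting that the pivot agent of $L_i^{\prime}$ stays adjacent to $l_h$ of $L_i$ throughout the turn (rotation proceeds about that pivot), and that the cells being swept by the rotation are free because the corresponding cells of $S_i$ have just been emptied by \textsf{RecursiveCall}; should any of these cells hold a non-line agent, the state-swap technique of Lemma~\ref{lem:PushCorrectness} restores them transparently within the same asymptotic cost.

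Once alignment is achieved, a final relabelling wave reassigns $l_h$, $l_t$ and the internal $l$ labels across the merged line of length $2^{i+1}$, costing another $O(|L_i|)$ rounds and no moves. Summing the three contributions (classification, synchronised rotation, relabelling) gives the claimed bounds of $O(|L_i|)$ line moves and $O(|L_i|)$ communication rounds, completing the proof of Lemma~\ref{lem:Merge}.
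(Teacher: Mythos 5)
Your proposal matches the paper's own argument essentially step for step: classify the configuration of $L_i^{\prime}$ relative to $L_i$ via the \textsf{CheckSeg} machinery, dispose of the collinear case by a relabelling wave alone, and in the perpendicular case turn $L_i^{\prime}$ at the shared corner over a route of length at most $|L_i^{\prime}|$ using only head--tail synchronisation (rather than a full per-step synchronisation), which is exactly the observation the paper uses to keep the round count at $O(|L_i|)$ instead of quadratic. The cost accounting (classification, one synchronisation, the turn, and final relabelling, each $O(|L_i|)$) coincides with the paper's, so the proposal is correct and takes the same route.
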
 

Finally, we analyse the recursion in a worst-case shape in which individuals consume their maximum energy to communicate and move. The runtime is based on the analysis of the centralised version that has been proved in Section \ref{sec:Hamiltonianshapes}. Let $T_i^c$ and $T_i^m$ denote the total number of  communication rounds and moves in phase $i$, respectively, for all $i \in {1, \ldots , \log n}$. Apart from \textsf{RecursiveCall}, the $2^{i}$ agents forming a straight line $L_i$  in phase $i$ go through \textsf{DefineSeg}, \textsf{CheckSeg}, \textsf{DrawMap},  \textsf{Push} and \textsf{Merge} sub-phases that take total parallel rounds of communication $t^{c}_i $ at most:
\begin{align*}
	t^{c}_i  = (4\cdot|L_i|) + (|L_i| \cdot |R|) \approx O(|L_i| \cdot |L_i|).
\end{align*}

Then, in \textsf{Push} and \textsf{Merge} sub-phases, the line $L_i$ traverses along a route of  total movements $t^{m}_i $ in at most:
\begin{align*}
	t^{m}_i  = |L_i| + |L_i^{\prime}| =  O(|L_i|).
\end{align*}

Now, let $T^{c}_{i-1}$ denote a total number of  parallel rounds required for a recursive call of \textsf{RecursiveCall} on $2^{i}$ agents of the segment $S_i$, which transforms into another straight line $L^{\prime}_i$. Given $|L_i| = 2^i$, this recursion in phase $i$ costs a total rounds bounded by:
\begin{align*}
	T^{c}_{i}  &\le i \cdot (|L_i| \cdot |L_i|) \le i \cdot (2^i)^2 \\
	T^{c}_{i}  & O(\le  i \cdot n^2).
\end{align*}	

Thus, we conclude that the call of \textsf{RecursiveCall} in the final phase $ i =\log n $ requires a total rounds $T^{c}_{\log n}$:
\begin{align*}
	T^{c}_{\log n} &\le n^2 \cdot \log n\\
	&= O(n^2\log n ).
\end{align*}

The same argument follows on the total number of movements $T^{c}_{i-1}$ for a recursive call of \textsf{RecursiveCall}, which costs at most:  
\begin{align*}
	T^{m}_{i}  &\le i \cdot |L_i| \le i \cdot (2^i)\\
	T^{m}_{i}  &\le  O(\le  i \cdot n).
\end{align*}	

Finally, by the final phase $ i =\log n $, all agents in the system pushes a total number of moves   $T^{m}_{\log n}$ that bounded by: 
\begin{align*}
	T^{m}_{\log n} &\le n \cdot \log n  \\
	&= O(n\log n).
\end{align*}

Overall, given a Hamiltonian path in an initial connected shape $S_I$ of individuals of limited knowledge and permissible line moves, the following lemma states $S_I$ can be transformed into a straight line $S_L$ in a number of moves that match the optimal centralised transformation fulfilling the connectivity-preserving condition.  

\begin{lemma} \label{lem:HamiltonToLine}
	Given an initial  Hamiltonian shape $S_I$ of $n$ agents, this strategy transforms $S_I$ into a straight line $S_L$ of the same order in $ O(n\log n) $ line moves and $ O(n^2\log n) $ rounds, while preserving connectivity during transformation.
\end{lemma}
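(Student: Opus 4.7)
The plan is to prove the claim by induction on the number of phases, using the per-sub-phase correctness and complexity bounds already established, and then solving the resulting recurrences for the global move and round counts.

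First I would establish a phase invariant: at the beginning of phase $i$, the active agents form a terminal straight line $L_i$ of length $2^i$ at the head endpoint of the Hamiltonian path $P$, the remaining $n - 2^i$ agents are inactive and still laid out along $P$, and the whole configuration is connected. The base case $i = 0$ holds because the head $p_1$ constitutes a trivial line of length $1$ and $S_I$ is connected by assumption. For the inductive step I would chain the six sub-phase lemmas: \textsf{DefineSeg} (Lemma~\ref{lem:DefineNextSegCorrectness}) correctly identifies and labels the next segment $S_i$ of length $2^i$; \textsf{CheckSeg} (Lemma~\ref{lem:CheckNextSegCorrection}) decides whether $L_i$ and $S_i$ are already aligned or L-shaped and shortcuts to \textsf{Merge} in those cases; \textsf{DrawMap} (Lemma~\ref{lem:DrawMap}) produces a valid route of length at most $2|L_i|-1$; \textsf{Push} (Lemma~\ref{lem:PushCorrectness}) transports $L_i$ along this route while transparently preserving the rest of the configuration and, crucially, connectivity; \textsf{RecursiveCall} invokes the full procedure on a smaller Hamiltonian sub-instance $S_i$ of $2^i < 2^{i+1}$ agents, where the inductive hypothesis applies and $L_i$ stays put as a rigid backbone that keeps the global shape connected; finally \textsf{Merge} assembles $L_i$ and the new line $L_i^{\prime}$ into $L_{i+1}$ of length $2^{i+1}$.

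Next I would bound the total moves. Let $M(2^j)$ be the move count on a Hamiltonian input of $2^j$ agents. Each phase $i$ contributes $O(2^i)$ non-recursive moves from \textsf{Push} (Lemma~\ref{lem:PushRuntime}) and \textsf{Merge} (Lemma~\ref{lem:Merge}), together with a recursive call on $2^i$ agents, so
\begin{equation*}
M(2^j) \;\le\; \sum_{i=0}^{j-1}\bigl(M(2^i) + c\cdot 2^i\bigr).
\end{equation*}
Taking the difference $M(2^j)-M(2^{j-1}) = M(2^{j-1}) + c\cdot 2^{j-1}$ gives $M(2^j) = 2M(2^{j-1}) + c\cdot 2^{j-1}$, which unfolds to $M(2^j) = O(j\cdot 2^j) = O(n \log n)$. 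For the round complexity I would repeat the same reasoning, this time using the per-phase round bound $O(|L_i|^2) = O(4^i)$ dominated by \textsf{Push} in a fully occupied route. Summing this worst-case bound $O(|L_i|^2) = O(n^2)$ over all $\log n$ phases (as in the overestimate at the end of Section~\ref{sec:Merge}) yields $R(n) = O(n^2 \log n)$.

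The main obstacle I expect is not the arithmetic of the recurrences but the connectivity-preservation argument during \textsf{RecursiveCall}: while the recursion is operating on $S_i$, that sub-instance is undergoing its own sequence of line pushes whose routes may pass through cells neighbouring $L_i$, so I would have to argue carefully that $L_i$ remains incident to at least one active agent of the evolving sub-configuration at every round, and that none of the transparent swaps inside \textsf{Push} (Lemma~\ref{lem:PushCorrectness}) temporarily detach the sub-shape from $L_i$. Since every intermediate configuration produced inside the recursion already satisfies the phase invariant by the inductive hypothesis, and the route bound $|R| \le 2|L_i|-1$ keeps the sub-transformation spatially localised next to $L_i$, I expect this can be discharged by observing that the cell adjacent to the head of $S_i$ (which is a neighbour of the tail of $L_i$) remains occupied by an active agent throughout the recursion, thereby gluing $L_i$ to the sub-shape. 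Combining the invariant, the move bound $O(n \log n)$, and the round bound $O(n^2 \log n)$ then establishes the lemma.
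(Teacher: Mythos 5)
Your proposal is correct and follows essentially the same route as the paper: the paper likewise chains the sub-phase lemmas, charges each phase $i$ with $O(2^i)$ non-recursive moves and $O(|L_i|\cdot|R|)=O(4^i)$ non-recursive rounds dominated by \textsf{Push}, and accounts for \textsf{RecursiveCall} by a factor of the recursion depth, arriving at $O(n\log n)$ moves and $O(n^2\log n)$ rounds. Your explicit recurrence $M(2^j)\le\sum_{i=0}^{j-1}\bigl(M(2^i)+c\cdot 2^i\bigr)$ and the phase invariant for connectivity are just a cleaner formalisation of the paper's looser bound $T^m_i\le i\cdot 2^i$ (and $T^c_i\le i\cdot(2^i)^2$), so no substantive difference.
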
 

Thus, we can finally provide the following theorem:

\begin{theorem} \label{theo:Walk_Through_1}
	The above distributed transformation solves {\sc HamiltonianLine} and takes at most $ O(n \log_2 n) $ line moves and $ O(n^2 \log_2 n) $ rounds.
\end{theorem}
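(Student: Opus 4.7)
The plan is to derive Theorem \ref{theo:Walk_Through_1} essentially as a repackaging of Lemma \ref{lem:HamiltonToLine}, so the proof reduces to (i) confirming that the composed $\log n$-phase algorithm indeed solves {\sc HamiltonianLine} (correctness plus connectivity preservation, ending in $S_L$), and (ii) assembling the per-sub-phase runtime lemmas into the two claimed global bounds.

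For correctness I would induct on the phase index $i = 0,1,\ldots,\log n-1$. The inductive invariant is that at the start of phase $i$ the configuration consists of a terminal straight line $L_i$ of $2^i$ active agents occupying a row or column, with $l_h$ at one end and $l_t$ at the other, followed along the Hamiltonian path $P$ by the remaining $n-2^i$ inactive agents labelled $k$, and that the whole shape is connected. The base case ($i=0$, a trivial line at $l_h$) is immediate. For the inductive step, the sub-phase lemmas already chain: Lemma \ref{lem:DefineNextSegCorrectness} activates the next $2^i$ agents as $S_i$, Lemma \ref{lem:CheckNextSegCorrection} correctly dispatches to one of the four geometric cases, Lemma \ref{lem:DrawMap} installs a valid route map on $L_i$, Lemma \ref{lem:PushCorrectness} transports $L_i$ along that route while restoring every displaced agent and keeping the shape connected, and the \textsf{Merge} argument of Section~\ref{sec:Merge} combines $L_i$ with the recursively produced $L_i'$ into $L_{i+1}$ of length $2^{i+1}$. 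After $\log n$ phases the terminal line has length $n$, which is $S_L$.

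For complexity I would sum the per-phase contributions given by Lemmas \ref{lem:DefNxSg}, \ref{lem:CheckSeg}, \ref{lem:DrawMap}, \ref{lem:PushRuntime}, and \ref{lem:Merge}, and then resolve the recurrence induced by \textsf{RecursiveCall}. Writing $T_i^m$ and $T_i^c$ for the total moves and rounds spent by a phase-$i$ execution (including its recursive sub-call), the non-recursive work contributes $O(|L_i|) = O(2^i)$ moves and, with \textsf{Push} dominating in the worst case, $O(|L_i|\cdot|R|) = O(4^i)$ rounds. Since phase $i$ invokes one phase-$(i-1)$ execution on $S_i$, I get
\begin{equation*}
T_i^m = T_{i-1}^m + O(2^i), \qquad T_i^c = T_{i-1}^c + O(4^i),
\end{equation*}
unrolling (as in the calculation just preceding Lemma \ref{lem:HamiltonToLine}) to $T_{\log n}^m = O(n \log n)$ and $T_{\log n}^c = O(n^2 \log n)$, matching the theorem.

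The main obstacle I anticipate is not in the final arithmetic but in making the recursion charge airtight: one has to be careful that the factor $i$ accrued from the recursion depth is applied to the correct per-level cost, because each phase not only pays for its own six sub-phases but also re-enters the whole procedure on $S_i$, and because \textsf{Push} can, in the densely occupied case, produce the $|L_i|\cdot|R|$ blowup in rounds rather than the cleaner $O(|L_i|)$ move count. Once that accounting is pinned down, the connectivity-preservation clause of {\sc HamiltonianLine} is inherited directly from the per-sub-phase guarantees (in particular the transparency properties established in Lemma \ref{lem:PushCorrectness}), and the theorem follows.
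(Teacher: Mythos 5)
Your overall structure matches the paper's: the theorem is essentially a repackaging of Lemma~\ref{lem:HamiltonToLine}, which the paper obtains by chaining the sub-phase lemmas and the recursion analysis displayed just before it; the correctness induction you spell out is left implicit there, and making the phase invariant explicit is a genuine improvement in rigor. The connectivity clause is indeed inherited from Lemma~\ref{lem:PushCorrectness} as you say.

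There is, however, a genuine gap in your recurrence. You model \textsf{RecursiveCall} as ``phase $i$ invokes one phase-$(i-1)$ execution on $S_i$'' and write $T_i^m = T_{i-1}^m + O(2^i)$. This mischaracterises the algorithm: \textsf{RecursiveCall} re-runs the \emph{entire} transformation on the $2^i$ agents of $S_i$, i.e.\ phases $0$ through $i-1$, each of which contains its own nested recursive call. Taken literally, your linear recurrence telescopes to $T^m_{\log n} = \sum_i O(2^i) = O(n)$, not the claimed $O(n\log n)$ --- and $O(n)$ moves is provably false for this strategy (the diagonal already forces $\Theta(n\log n)$), so the recurrence is undercounting rather than merely loose. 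The correct formulation is in terms of the total cost $F(m)$ of lining up $m$ agents: phase $i$ costs $t_i + F(2^i)$ with $F(2^i)=\sum_{j<i}(t_j+F(2^j))$, equivalently $F(2m) = 2F(m) + O(m)$ for moves and $F(2m) = 2F(m) + O(m^2)$ for rounds, which unroll to $O(n\log n)$ and $O(n^2)\subseteq O(n^2\log_2 n)$ respectively. This is the doubling recursion the paper's factor-$i$ bounds $T_i^m \le i\cdot 2^i$ and $T_i^c \le i\cdot (2^i)^2$ are (somewhat informally) encoding; your anticipated ``obstacle'' about making the recursion charge airtight is exactly where your write-up currently fails, so the fix is to replace the per-phase recurrence by the divide-and-conquer one before unrolling.
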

\newpage

%
%
%
\bibliographystyle{splncs04}
\bibliography{ref}
\end{document}